\documentclass[journal,twoside,web, 9pt]{ieeecolor}
\usepackage{multirow}
\usepackage{array}
\usepackage{generic}
\usepackage{cite}
\usepackage{amsmath,amssymb,amsfonts}
\usepackage{algorithm}
\usepackage{algorithmic}
\usepackage{graphicx}
\usepackage{algorithm,algorithmic}
\usepackage{hyperref}

\usepackage{balance}

\hypersetup{hidelinks}
\usepackage{textcomp}
\def\BibTeX{{\rm B\kern-.05em{\sc i\kern-.025em b}\kern-.08em
    T\kern-.1667em\lower.7ex\hbox{E}\kern-.125emX}}
\begin{document}
\title{Distributionally robust two-stage model predictive control$:$ adaptive constraint tightening with stability guarantee}

\author{
Weijiang Zheng, Jiayi Huang, and Bing Zhu, \IEEEmembership{Senior Member, IEEE},
\thanks{The authors are with The Seventh Research Division, Beihang University, Beijing 100191, P.R.China. (E-mails: zhengweijiang@buaa.edu.cn; 2503021@buaa.edu.cn; zhubing@buaa.edu.cn) }
}

\maketitle

\begin{abstract}
This paper proposes a two-stage distributionally robust model predictive control (TSDR-MPC) scheme for stochastic disturbances with unknown time-varying means and covariances. By defining a Wasserstein ambiguity set on the disturbance‑to‑constraint space, constraint violation penalties are formulated as a second-stage problem, enabling adaptive tightening. A finitely convergent cutting-plane algorithm is developed for real-time implementation. The framework naturally degrades to deterministic MPC as uncertainty vanishes, without pre‑specified tightening parameters. Theoretical guarantees include feasibility, finite-time termination, and an asymptotic average cost bound. Numerical simulations validate its adaptability and robustness.
\end{abstract}

\begin{IEEEkeywords}
Distributionally robust optimization; two-stage stochastic programming; model predictive control; Wasserstein ambiguity set; adaptive constraint tightening; stability guarantee
\end{IEEEkeywords}

\section{Introduction}
\label{sec:introduction}

Model predictive control (MPC) is an optimization-based control strategy that has garnered significant attention from both academia and industry over the past few decades, primarily due to its ability to explicitly handle system constraints \cite{mark2020stochastic}. In practical applications, system states are often affected by disturbances, which may originate from measurement noise, process noise, or model-plant mismatch. To address these uncertainties, researchers have developed two main frameworks, robust MPC and stochastic MPC.

Robust MPC assumes that disturbances are bounded and guarantees constraint satisfaction under worst-case scenarios. However, this approach tends to be overly conservative, as the probability of worst-case occurrences is extremely low \cite{mayne2005robust}. Stochastic MPC, on the other hand, leverages probabilistic information about the disturbance distribution and allows for constraint violations with a certain probability through chance constraints, thereby seeking a balance between conservatism and performance \cite{mesbah2016stochastic,mayne2016robust}. Nevertheless, a key assumption of Stochastic MPC is that the exact probability distribution of the disturbance is known, which is often difficult to satisfy in practice. In real-world systems, we can only estimate the distribution from limited historical data, and estimation errors may lead to constraint violations or performance degradation.

To address this challenge, Distributionally Robust Optimization has been introduced into the MPC framework, giving rise to distributionally robust model predictive control (DRMPC). The core idea of DRMPC is that instead of assuming the disturbance follows an exact distribution, we define an ambiguity set that contains all possible distributions and optimize performance under the worst-case distribution within that set \cite{mohajerin2018data}. This approach not only utilizes statistical information from data but also provides robustness against distributional estimation errors. 
Furthermore, two-stage distributionally robust optimization methods developed in the stochastic programming literature are characterized by the distinction between deterministic here-and-now first-stage decisions and worst-case expectation wait-and-see second-stage decisions \cite{doi:10.1137/20M1370227,hanasusanto2018conic}. Although the theoretical contributions of two-stage DRO provide important foundations for distributionally robust optimization, their application within the MPC framework remains relatively limited, and further exploration in this direction merits attention.

The construction of ambiguity sets is a central theme in DRMPC literature, with two predominant approaches being moment-based sets and Wasserstein distance-based sets. Moment-based ambiguity sets, which contain all distributions sharing specified moments such as mean and covariance, have been widely employed to handle both bounded and unbounded disturbances \cite{li2021distributionally,zhang2025distributionally,li2023distributionally,mark2022recursively,li2024distributionally}. For systems with unbounded disturbances, moment-based method can reformulate chance constraints into tractable convex forms such as second-order cone and obtain recursive feasibility and stability guarantees. In contrast, Wasserstein distance-based ambiguity sets have gained significant popularity due to their rich geometric structure and strong out-of-sample performance guarantees \cite{mark2020stochastic,zhong2022distributionally,fochesato2022data,micheli2022data,coulson2021distributionally,pilipovsky2024distributionally,hakobyan2024wasserstein,hakobyan2021wasserstein,aolaritei2023wasserstein}. These sets are particularly valuable when only limited data are available, as they provide probabilistic confidence that the true distribution lies within the ambiguity set. Numerous works have used Wasserstein ambiguity sets to propagate uncertainty through system dynamics and enforce distributionally robust chance constraints or Conditional Value-at-Risk constraints.

A distinct architectural approach within DRMPC is tube-based model predictive control, which extends the classic robust tube MPC concept to the stochastic distributionally robust setting. In tube-based DRMPC, a sequence of tubes, often constructed as distributionally robust probabilistic reachable sets, contains the error dynamics for all distributions within the ambiguity set \cite{mark2020stochastic,fochesato2022data,pilipovsky2024distributionally,aolaritei2023wasserstein}. This approach has been developed using Wasserstein ambiguity sets, providing a systematic way to tighten constraints around a nominal trajectory while ensuring constraint satisfaction for all plausible disturbance distributions. Beyond these main categories, the literature also explores other types of ambiguity sets such as total variation distance\cite{dixit2022distributionally,zolanvari2025iterative} and divergence-based sets \cite{schuurmans2023general}, which offer alternative ways to quantify distributional uncertainty. Risk measures, particularly Conditional Value-at-Risk, are frequently integrated with various ambiguity set constructions to encode safety constraints in a distributionally robust manner \cite{micheli2022data,coulson2021distributionally,pilipovsky2024distributionally,aolaritei2023wasserstein,zolanvari2025iterative}. Across these diverse approaches, theoretical properties such as recursive feasibility and closed-loop stability have been rigorously established.

Despite these advances, most existing DRMPC methods rely on the assumption that the disturbance has zero mean or that its moments are known a priori. In many practical scenarios, however, disturbances may exhibit unknown non-zero means that varies over time, and their covariance may also be time-varying.  
In this paper, we propose a two-stage distributionally robust MPC framework that explicitly addresses uncertain time-varying means and covariances of disturbances. 
The main contributions are as follows:

1) Two‑stage adaptive constraint tightening: Constraint violations are penalized via a second‑stage linear program whose dual variables automatically adjust the tightening effect based on the state and sampled disturbances, requiring no pre‑specified parameters.

2) Handling unknown time‑varying moments: Only bounds on the first two moments are needed; neither exact values nor zero‑mean assumption is required, making the framework applicable to non‑stationary systems.

3) Tractable reformulation via Wasserstein ambiguity set: The minimax problem is reformulated as a set of finite‑dimensional convex programs, solved efficiently by a cutting‑plane algorithm.

4 Rigorous theoretical guarantees: Recursive feasibility, finite termination of the algorithm, and an asymptotic upper bound on the average cost are proved, explicitly revealing the trade‑off between robustness and performance.

Simulations under non‑zero mean, large covariance, and time‑varying disturbances show the controller adaptively adjusts its conservatism while maintaining stability and constraint satisfaction.

The remainder of the paper is organized as follows. Section \ref{sec2} presents the problem formulation and the construction of the Wasserstein ambiguity set. Section \ref{sec3} derives the tractable reformulation of the TSDR-MPC problem and introduces the terminal constraint and the cutting-plane algorithm. Section \ref{sec 4} provides the stability analysis and proves the asymptotic performance bound. Section \ref{sec 5} presents simulation results, and Section \ref{sec6} concludes the paper.

\section{Problem Statement}\label{sec2}

\subsection{System Dynamics}
Consider the discrete-time stochastic linear dynamics system
\begin{align}
x_{k+1} = Ax_k + Bu_k + Dw_k,
\end{align}
where $x_k \in \mathcal{X}\subset\mathbb{R}^{n_x}$ is the state, $u_k \in \mathcal{U}\subset\mathbb{R}^{n_u}$ is the control input, and $w_{k}\in \mathcal{W}\subset\mathbb{R}^{n_w}$ is a stochastic disturbance with an unknown distribution. The disturbance $w_{k}$ is independent across time steps $k$, but not necessarily identically distributed. $\mathcal  X$ and $\mathcal  U$ are compact sets with origin in their interior. We consider unbounded disturbances ${w}_k$ with $\mathcal{W}=\mathbb{R}^{n_w}$, assuming that its expectation $\mu_{k}$ and covariance matrix $\Sigma_{k}$ are uniformly bounded with uncertain constant bounds, i.e., $\left\| \mu_{k} \right\| \leq \bar{\mu},\ \Sigma_{k} \preceq \bar{\Sigma}$. For analysis and control design, we denote the system dynamics mapping function $f:\mathbb{R}^{n_x}\times\mathbb{R}^{n_u}\times\mathbb{R}^{n_w}\to\mathbb{R}^{n_x}$
\begin{align*}
    f(x,u,w)=Ax+Bu+Dw,
\end{align*}
and the system is rewritten in a compact form over a prediction horizon $N$,
\begin{align}
{\bar{x}}_{k} = \bar{A} x_{k} + \bar{B} {\bar{u}}_{k} + \bar{D} {\bar{w}}_{k},
\end{align}
where
\begin{align*}
\bar{A} &= \begin{bmatrix} A \\ \vdots \\ A^N \end{bmatrix}, \quad \bar{B} = \begin{bmatrix} B & 0 & \cdots & 0 \\ AB & B & \ddots & \vdots \\ \vdots & \vdots & \ddots & 0 \\ A^{N-1}B & A^{N-2}B & \cdots & B \end{bmatrix}, \\
\bar{D} &= \begin{bmatrix} D & 0 & \cdots & 0 \\ AD & D & \ddots & \vdots \\ \vdots & \vdots & \ddots & 0 \\ A^{N-1}D & A^{N-2}D & \cdots & D \end{bmatrix}.
\end{align*}
The N-step predicted state sequence $\bar{x}_k$, the input sequence $\bar{u}_k$ and the the unknown disturbance $\bar{w}_k$ from time $k$ are denoted by
\begin{equation}
\bar{x}_k = \begin{bmatrix}
    x_{1|k}\\ \vdots\\ x_{N|k}
\end{bmatrix},\bar{u}_k = \begin{bmatrix}u_{0|k}\\ \vdots\\ u_{N-1|k}\end{bmatrix},\bar{w}_k =\begin{bmatrix} w_{0|k}\\ \vdots\\ w_{N-1|k}\end{bmatrix}.
\end{equation}

The following assumption holds throughout the whole paper.
\newtheorem{assumption}{\bf Assumption} 

\begin{assumption}\label{assumption_dynamic_bound}

(a) $(A,B)$ is stabilizable and $(A,Q\textstyle^{1/2})$ is detectable, where $Q$ is positive semidefinite symmetric.
(b) There exist positive constants $L_A$, $L_B$, $L_D$, $u_u$, such that $\|A\|\leq L_A$, $\|B\|\leq L_B$, $\|D\|\leq L_D$ and $\|u-u'\|\leq u_u,\forall u,u'\in \mathcal{U}$.
\end{assumption}

Assumption \ref{assumption_dynamic_bound} provides two types of conditions: part (a) guarantees the existence of a stabilizing LQR controller [Proposition 4.4.1, \citenum{1995Dynamic}], while part (b) gives uniform bounds on the system matrices and control inputs.
In the presence of unbounded disturbances, \cite{11007013} performs stability analysis of MPC by comparing with an auxiliary controller, which inspires our approach. In the following, we will take the LQR controller as a baseline for our analysis.

The state constraints take the following inequality form,
\begin{align*}
F_{0}x_{i|k} + G_{0} \leq 0,
\end{align*}
where $F_{0}\in\mathbb{R}^{n_c\times n_x}$, $G_{0}\in\mathbb{R}^{n_c}$ and $n_c$ is the number of constraints. It is also written in a more compact form,
\begin{align}\label{eq_constraints}
    F{\bar{x}}_{k} + G \leq 0,
    \end{align}
where $F = \text{diag}\left( F_{0},\cdots,F_{0} \right)$, $G = \left\lbrack G_{0}^{T},\cdots,G_{0}^{T} \right\rbrack^{T}$.

\subsection{Ambiguity Set}
 In distributionally robust optimization, the worst-case is taken over an ambiguity set to find a decision that minimizes the worst-case expected cost. Moment-based ambiguity sets can handle only stationary processes and require precise moment estimates. To avoid these limitations, we employ an ambiguity set defined via the 2-Wasserstein distance (optimal transport distance),
\begin{align} \label{eq_wasserstein_distance}
W_{C}( \mathbb{P},\mathbb{Q} ) = \inf_{\mathbb{Z}}\{ \textstyle\int_{\mathcal W^N} c(w, w') \mathbb{Z}( dw,dw' ) d w \},
\end{align}
where $c(w,w') =\frac{1}{2} ( w - w')^{T} C_s \left(w -w' \right)$; $C_s=(F\bar D)^TCF\bar D$ and $C$ is a tunable weighting matrix which is symmetric positive definite; $\mathbb{Z}$ is a joint distribution of $w'$ and $w$ with marginal distributions $\mathbb{Q}$ and $\mathbb{P}$ on the support $\mathcal W^N$, respectively. The empirical distribution $\mathbb{Q}=\frac{1}{n}\sum_{\sigma=1}^n\delta_{w^\sigma}$ is chosen to generate an approximation of $\mathbb{P}$, where $\delta_{w^\sigma}$ denotes the Dirac delta distribution at ${w^\sigma}$. 
Let $\mathcal{M}(\mathcal{W}^N)$ be the set of all probability distributions supported on $\mathcal{W}^N$, we define the ambiguity set as
\begin{align}\label{eq_Wasserstein_ambiguity_set}
    \mathcal{P}_k=\{\mathbb{P}_k\in\mathcal{M}(\mathcal{W}^N)|W_{C}(\mathbb{P}_k,\mathbb{Q}_k)\leq \varepsilon\},
\end{align}
 where $\varepsilon>0$ is the radius of the ambiguity set.

\begin{assumption}\label{assumption:ambiguity set}
The zero distribution $\mathbb{O}$ and the true distribution $\mathbb{P}_{k}$ both belong to the ambiguity set defined by a Dirac delta empirical distribution $\mathbb{Q}_{k}$, i.e.,
\begin{align}
    W_{C}( \mathbb{P}_{k},\mathbb{Q}_{k} ) \leq \varepsilon, \quad W_{C}( \mathbb{O},\mathbb{Q}_{k}) \leq \varepsilon
\end{align}
\end{assumption}
\newtheorem{remark}{\bf Remark}
\begin{remark}
 Concentration inequalities [\citenum{2020Convergence}, Corollary 5.2] guarantee that the true distribution is contained with high probability within the ambiguity set defined around the empirical distribution. This guarantee serves as a fundamental postulate in distributionally robust optimization. For the trivial zero distribution, the condition can be met by using rough moment estimates and selecting an appropriate value for $\varepsilon$, which simplifies the ensuing stability analysis.
\end{remark}

{
}
To estimate the upper bounds of the worst-case distribution's moments, we adopt the Gelbrich bound in [\citenum{10509008}, Lemma 1]. The
following lower-bound holds for the 2-Wasserstein distance $W_{2}( \mathbb{P},\mathbb{Q} )$ ($W_{2}^2( \mathbb{P},\mathbb{Q} )=W_{C}( \mathbb{P},\mathbb{Q} )$ in case $C_s=2I$ in \eqref{eq_wasserstein_distance}),
\begin{align}\nonumber
G( \mathbb{P},\mathbb{Q} ) = \sqrt{( \mu - \hat{\mu} )^{T}( \mu - \hat{\mu} ) + B^{2}( \Sigma,\hat{\Sigma} )} \leq W_{2}( \mathbb{P},\mathbb{Q} ),
\end{align}
where $B( \Sigma,\hat{\Sigma} )$ is the Bures distance, and $B^{2}( \Sigma,\hat{\Sigma} ) = \text{tr}( \Sigma + \hat{\Sigma} - 2( \hat{\Sigma}^{\frac{1}{2}} \Sigma \hat{\Sigma}^{\frac{1}{2}} )^{\frac{1}{2}} )$.

Using the Gelbrich bound, upper bounds for the mean and covariance of the worst-case distribution at each time step can be derived.
\newtheorem{lemma}{\bf Lemma} 
\begin{lemma}\label{lemma:Gelbrich_bound}
For the worst-case distribution $\mathbb{P}_{k}^{*}$ at time $k$, its covariance $\mathbb{E}_{\mathbb{P}_{k}^{*}}[ ( {\bar{w}}_{k} - {\hat{\mu}}_{k} ) ( {\bar{w}}_{k} - {\hat{\mu}}_{k} )^{T} | \mathcal{F}_{k - 1}] = {\hat{\Sigma}}_{k}$ and mean $\mathbb{E}_{\mathbb{P}_{k}^{*}}[ {\bar{w}}_{k} | \mathcal{F}_{k - 1}] = {\hat{\mu}}_{k}$ satisfy
\begin{align}\label{eq_mu}
\| {\hat{\mu}}_{k} \| &\leq \textstyle\frac{1}{\sqrt{\lambda_{\min}( C_{s} )}}( \sqrt{\lambda_{\max}( C_{s} ) N} \bar{\mu} + 2\sqrt{2\varepsilon}),\\\label{eq_covariance}
\text{tr}( {\hat{\Sigma}}_{k}) &\leq \textstyle\frac{1}{\lambda_{\min}( C_{s} )}( \sqrt{\lambda_{\max}( C_{s} ) N \text{tr}( \bar{\Sigma} )} + 2\sqrt{2\varepsilon} )^{2},
\end{align}
where $C_s$ is a symmetric positive definite matrix.
\end{lemma}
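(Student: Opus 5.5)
The plan is to use the triangle inequality for the $C_s$-weighted 2-Wasserstein metric and then lower-bound the distance by the Gelbrich bound. Set $\|v\|_{C_s}=\sqrt{v^TC_sv}$ and $d(\mathbb{P},\mathbb{Q})=\sqrt{2W_C(\mathbb{P},\mathbb{Q})}$. Since $W_C$ uses the cost $\tfrac12\|w-w'\|_{C_s}^2$, $d$ is the ordinary 2-Wasserstein metric after the linear change of variables $v=C_s^{1/2}w$. It is therefore a genuine metric, and Assumption~\ref{assumption:ambiguity set} together with $\mathbb{P}_k^*\in\mathcal{P}_k$ give $d(\cdot,\mathbb{Q}_k)\le\sqrt{2\varepsilon}$ for each of $\mathbb{P}_k^*$, $\mathbb{P}_k$ and $\mathbb{O}$.

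The triangle inequality through $\mathbb{Q}_k$ then gives
\begin{align*}
d(\mathbb{P}_k^*,\mathbb{P}_k)\le d(\mathbb{P}_k^*,\mathbb{Q}_k)+d(\mathbb{Q}_k,\mathbb{P}_k)\le 2\sqrt{2\varepsilon}.
\end{align*}
This is the origin of the $2\sqrt{2\varepsilon}$ term. Applying the Gelbrich bound in the transformed coordinates, the means and covariances of $\mathbb{P}_k^*$ and $\mathbb{P}_k$ map to $C_s^{1/2}\hat\mu_k$, $C_s^{1/2}\mu$ and $C_s^{1/2}\hat\Sigma_kC_s^{1/2}$, $C_s^{1/2}\Sigma C_s^{1/2}$. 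Here $\mu$ and $\Sigma$ are the stacked mean and covariance of the true $\bar w_k$, so $\|\mu\|\le\sqrt N\bar\mu$, and $\Sigma$ is block-diagonal by independence, giving $\mathrm{tr}(\Sigma)\le N\,\mathrm{tr}(\bar\Sigma)$. In particular $\|C_s^{1/2}(\hat\mu_k-\mu)\|\le 2\sqrt{2\varepsilon}$.

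For the mean bound \eqref{eq_mu}, I will use
\begin{align*}
\sqrt{\lambda_{\min}(C_s)}\|\hat\mu_k\|\le\|C_s^{1/2}\hat\mu_k\|\le\|C_s^{1/2}\mu\|+2\sqrt{2\varepsilon}\le\sqrt{\lambda_{\max}(C_s)N}\,\bar\mu+2\sqrt{2\varepsilon},
\end{align*}
and divide through by $\sqrt{\lambda_{\min}(C_s)}$.

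For the covariance bound \eqref{eq_covariance}, the Bures distance is itself a metric on PSD matrices, with $B(\Sigma,0)=\sqrt{\mathrm{tr}\,\Sigma}$. The triangle inequality then gives
\begin{align*}
\sqrt{\mathrm{tr}(C_s^{1/2}\hat\Sigma_kC_s^{1/2})}\le\sqrt{\mathrm{tr}(C_s^{1/2}\Sigma C_s^{1/2})}+B\le\sqrt{\lambda_{\max}(C_s)N\,\mathrm{tr}\bar\Sigma}+2\sqrt{2\varepsilon}.
\end{align*}
Using $\mathrm{tr}(C_s^{1/2}\hat\Sigma_kC_s^{1/2})\ge\lambda_{\min}(C_s)\,\mathrm{tr}\hat\Sigma_k$, squaring and dividing yields \eqref{eq_covariance}.

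The main obstacle I expect is justifying the Bures triangle inequality. My first attempt is to view $B(\Sigma,\hat\Sigma)$ as the 2-Wasserstein distance between the centered Gaussians $\mathcal N(0,\Sigma)$ and $\mathcal N(0,\hat\Sigma)$, which is a metric. Two further points need care: $\hat\Sigma_k$ is a centered second moment about $\hat\mu_k$, which is consistent with Gelbrich, and $C_s$ must be positive definite as the statement assumes, even though $(F\bar D)^TCF\bar D$ could be singular in general. I also expect to need the conditioning on $\mathcal F_{k-1}$ only to make the moments well defined, not in the inequalities themselves.
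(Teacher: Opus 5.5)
Your proposal is correct and follows essentially the same route as the paper: change variables by $C_s^{1/2}$ so that $W_C=\tfrac12 W_2^2$, use the triangle inequality through $\mathbb{Q}_k$ to get $W_2\le 2\sqrt{2\varepsilon}$ between the true and worst-case distributions, and then apply the Gelbrich bound to the mean and the covariance separately before transforming back with $\lambda_{\min}(C_s)$. The only difference is cosmetic. The paper simply asserts $B^2(\Sigma',\hat\Sigma')\ge(\sqrt{\mathrm{tr}\,\Sigma'}-\sqrt{\mathrm{tr}\,\hat\Sigma'})^2$, whereas you derive the same inequality from the metric property of the Bures distance together with $B(\Sigma,0)=\sqrt{\mathrm{tr}\,\Sigma}$, which is a valid justification.
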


The proof of Lemma \ref{lemma:Gelbrich_bound} is shown in APPENDIX \ref{app:Gelbrich_bound}. 
{
}
It should be emphasized that the ambiguity set is defined via the Wasserstein distance directly on the disturbance-to-constraint space through $C_s=(F\bar D)^TCF\bar D$, so that the worst‑case distribution automatically focuses on disturbance directions most likely to cause constraint violations. This design achieves adaptive constraint tightening without pre‑specified parameters, greatly enhancing the controller’s flexibility and robustness. It will be naturally embedded into the two‑stage optimization problem formulated next.

\subsection{Two-Stage Distributionally Robust MPC}
To achieve adaptive constraint tightening, a two-stage distributionally robust optimization problem is formulated,
\begin{align}\label{eq_minmax_problem}
\min_{{\bar{u}}_{k} \in \mathcal{U}'}\{ \max_{\mathbb{P}_k \in \mathcal{P}_k}\{ \mathbb{E}_{\mathbb{P}_k}[ V_{q}( {\bar{u}}_{k},{\bar{w}}_{k} ) + V_{c}( {\bar{u}}_{k},{\bar{w}}_{k} ) |\mathcal{F}_{k - 1}] \} \},
\end{align}
where $\mathbb{E}_{{{\mathbb{P}}}_{k}}[ \cdot | \mathcal{F}_{k - 1} ]$ denotes the conditional expectation given $w_0,\cdots,w_{k-1}$, with the random variable distributed as ${\bar{w}}_{k} \sim {{\mathbb{P}}}_{k}$; $\mathcal{U}'=\mathcal{U}^N\cap\mathcal{U}_{k}$
and $\mathcal{U}_{k}$ are defined by \eqref{eq_ternimal_constraints} in Section \ref{sec3}. At every step $k$, the optimal solution of \eqref{eq_minmax_problem} is denoted by ${\bar{u}}_{k}^{*}$ and the real control $u_k={{u}}_{0|k}^{*}$ is exerted.
$V_{q}\left( {\bar{u}}_{k},{\bar{w}}_{k} \right)$ is a quadratic cost function,
\begin{align}\nonumber
   V_{q}( {\bar{u}}_{k},{\bar{w}}_{k} ) &= \textstyle\sum_{i = 0}^{N - 1}{l( x_{i|k},u_{i|k} )} + V_{f}( x_{N|k} ) \\&= \| x_{k} \|_{Q}^{2} + \| {\bar{x}}_{k} \|_{\bar{Q}}^{2} + \| {\bar{u}}_{k} \|_{\bar{R}}^{2} 
\end{align}
with stage cost $l( x_{i|k},u_{i|k} )$ and terminal cost $V_{f}( x_{N|k} )$ given by,
\begin{align*}
    l( x_{i|k},u_{i|k} ) &= x_{i|k}^{T}Q x_{i|k} + u_{i|k}^{T}R u_{i|k}\\
    V_{f}( x_{N|k} ) &= x_{N|k}^{T}P x_{N|k},
\end{align*}
where $Q$, $R$ are weight matrices and $P$ is derived from
\begin{align}\label{eq_Riccati_equation}
A^{T}PA - P + Q - A^{T}PB(R+B^TPB)B^TPA = 0.
\end{align}
$\bar{Q}$ and $\bar{R}$ are block-diagonal weight matrices given by
\begin{align*}
\bar{Q} = \text{diag}(Q,\cdots,Q,P), \quad \bar{R} = \text{diag}(R,\cdots,R).
\end{align*}
$V_{c}( {\bar{u}}_{k},{\bar{w}}_{k} )$ is a penalty term for constraint violation, determined by the second-stage linear program,
\begin{align}\label{eq_second_stage_cost}
V_{c}( {\bar{u}}_{k},{\bar{w}}_{k} ) &= \min_{{y} \geq 0}{[ h^{T}\ 0^{T} ] y}\\\label{eq_second_stage_constraints}
\text{s.t. } D_{1}{y} &= B_{1}{\bar{u}}_{k} + \xi_{k}
\end{align}
where
$$D_{1} = \begin{bmatrix} I & -I \end{bmatrix},\quad B_{1} = F\bar{B},\quad \xi_{k} = F\bar{D}{\bar{w}}_{k} + F\bar{A}x_{k} + G,$$
$h = ( h_{1},\cdots,h_{n_{c}} )^{T}$ are the second-stage cost coefficients of appropriate dimensions, $y$ are the second-stage decision variables, and $\xi_k$ are . The construction is inspired by $L_1$ Exact Penalty Function Method \cite{kerrigan2000soft}, which indicates that sufficiently large penalty weight $h$ can guarantee constraint satisfactory of \eqref{eq_constraints} in the sense of worst-case expectation. Let
$q = F{\bar{x}}_{k} + G = [q_{1},\cdots,q_{n_{c}} ] ^{T}$,
 and introduce auxiliary variables 
$q^{+} =[ q_{1}^{+},\cdots,q_{n_{c}}^{+} ]^{T}, q^{-} = [ q_{1}^{-},\cdots,q_{n_{c}}^{-} ]^{T},$
where $q_{i}^{+} = \max\{ q_{i},0 \}, q_{i}^{-} = \max\{ - q_{i},0 \}.$
It holds that
$q^{+} - q^{-} = q.$
We penalize constraint violations via $q^{+}$, constructing a second-stage optimization problem,
\begin{align*}
&\min_{q^{+},q^{-} \geq 0}{h^{T}q^{+}},\\
\text{s.t. } &q^{+} - q^{-} = F\bar{B}{\bar{u}}_{k} + \xi_{k},
\end{align*}
and \eqref{eq_second_stage_cost}-\eqref{eq_second_stage_constraints} are derived with $y = \lbrack ( q^{+} )^{T},( q^{-} )^{T} \rbrack^{T}$.

This paper aims to handle the case where the disturbance is a non‑stationary process, i.e., its mean and covariance are unknown and time‑varying.
Achieving this goal faces three challenges: obtaining a tractable formulation of the minimax problem \eqref{eq_minmax_problem}, ensuring constraint satisfaction under time‑varying moments, and handling the cross terms induced by a non‑zero worst‑case mean.

\section{Tractable Reformulation of TSDR-MPC}\label{sec3}

This chapter focuses on handling constraints in model predictive control. In MPC, constraints generally fall into two categories: direct constraints, which reflect system performance requirements and physical limits, e.g., \eqref{eq_constraints}, and additional constraints introduced to ensure stability and feasibility, namely terminal ingredients. The former are constraints on future states and must be accurately evaluated under uncertain distributions using suitable methods, whereas the latter are artificially introduced and should be kept as loose as possible to avoid compromising system performance. In the proposed framework \eqref{eq_minmax_problem}, the former are incorporated into the cost function, and adaptive constraint tightening is achieved through the second-stage optimization. As a result, the optimization problem remains feasible at all times, eliminating the need for explicit feasibility constraints, such as terminal invariant sets. For stability-related constraints, only the nominal system is considered, without accounting for unknown distributions. This effectively decouples constraints on future states from terminal constraints applied to the nominal system.

\subsection{Terminal constraints for nominal systems}

In the stability analysis of model predictive control, the construction of a rigorous mathematical framework relies on two core steps: 1) providing an explicit upper bound for the cost function with respect to the optimal solution ${\bar{u}}_{k}^{*}$ at the current time $k$, and 2) establishing a recursive inequality between ${\bar{u}}_{k}^{*}$ and a feasible candidate solution ${\bar{u}}_{k + 1}^{c}$ at time $k+1$, where 
\begin{align}\label{eq_candidate_solution}
{\bar{u}}_{k}^{*}=
\begin{bmatrix}
u_{0|k}^{*}\\\vdots\\u_{N - 1|k}^{*}
\end{bmatrix},{\bar{u}}_{k + 1}^{c} = 
\begin{bmatrix}
u_{0|k + 1}^{c}\\\vdots\\u_{N - 1|k + 1}^{c}
\end{bmatrix}   = 
\begin{bmatrix}
u_{1|k}^{*}\\\vdots\\u_{N - 1|k}^{*}\\u_{f}
\end{bmatrix},
\end{align}
with the terminal control law $u_{f}$.
{
Following the above line of reasoning, this section will first derive an upper-bound expression for the quadratic cost function $V_{q}\left( {\bar{u}}_{k},{\bar{w}}_{k} \right)$ within the TSDR-MPC framework, and then establish a recursive inequality from time $k$ to $k+1$. Building on this, we will design terminal constraints for the nominal system. These constraints do not involve disturbance terms and only tighten the control input via a tuning parameter, thereby providing the necessary foundation for closed-loop stability analysis without affecting the feasibility of online optimization. We first give the terminal control law for the nominal system.

\begin{lemma}\label{lemma:LQR}
Consider the unconstrained nominal system
\begin{align}\label{eq_nominal_dynamics}
z_{i + 1|k} &= Az_{i|k} + Bu_{i|k},\\\label{eq_nominal_initial_state}
z_{0|k} &= x_{k},
\end{align}
under the fixed-gain control law $u_{f} = Kz_{i|k}$ with
\begin{align}\label{eq_Riccati _gain}
K = - ( R + B^{T}PB )^{- 1}B^{T}PA
\end{align}
provided by the Riccati equation \eqref{eq_Riccati_equation}, then the predicted cost satisfies
\begin{align*}
V_{q}( {\bar{u}}_{f},0 ) = \textstyle\sum_{i = 0}^{N - 1}{l( z_{i|k},u_{f} )} + V_{f}( z_{N|k} ) \leq \| x_{k} \|_{P}^{2},
\end{align*}
where ${\bar{u}}_{f} = \lbrack u_{f}^T,\cdots,u_{f}^T\rbrack^T$.
\end{lemma}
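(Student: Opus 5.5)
The plan is the standard telescoping argument based on the Riccati equation \eqref{eq_Riccati_equation}, which will in fact give equality. Under the closed-loop law the nominal dynamics \eqref{eq_nominal_dynamics} become $z_{i+1|k}=(A+BK)z_{i|k}$ with $z_{0|k}=x_k$ from \eqref{eq_nominal_initial_state}. The key step is the one-step identity
\begin{align*}
z^{T}Qz+z^{T}K^{T}RKz+z^{T}(A+BK)^{T}P(A+BK)z=z^{T}Pz,\quad \forall z\in\mathbb{R}^{n_x},
\end{align*}
which is equivalent to
\begin{align*}
(A+BK)^{T}P(A+BK)-P+Q+K^{T}RK=0.
\end{align*}

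To verify this matrix identity, I expand
\begin{align*}
(A+BK)^{T}P(A+BK)+K^{T}RK=A^{T}PA+A^{T}PBK+K^{T}B^{T}PA+K^{T}(R+B^{T}PB)K .
\end{align*}
Substituting $K=-(R+B^{T}PB)^{-1}B^{T}PA$ from \eqref{eq_Riccati _gain}, the last three terms collapse to $-A^{T}PB(R+B^{T}PB)^{-1}B^{T}PA$. The displayed identity therefore reduces exactly to the Riccati equation \eqref{eq_Riccati_equation}, reading the inverse that the paper's typesetting drops. I expect this to be the only real computation, and the main care needed is to use the inverse consistently with the definition of $K$. Existence of a symmetric $P\succeq 0$ solving \eqref{eq_Riccati_equation}, and hence the well-posedness of $R+B^{T}PB\succ 0$, comes from Assumption~\ref{assumption_dynamic_bound}(a). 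I will simply assume it rather than reprove it.

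With the identity available, I write $l(z_{i|k},Kz_{i|k})=\|z_{i|k}\|_{P}^{2}-\|z_{i+1|k}\|_{P}^{2}$ for $i=0,\dots,N-1$. Summing over $i$ telescopes to $\|z_{0|k}\|_P^2-\|z_{N|k}\|_P^2$. Adding the terminal cost $V_f(z_{N|k})=\|z_{N|k}\|_P^2$ then gives
\begin{align*}
V_{q}(\bar u_f,0)=\|x_k\|_P^2,
\end{align*}
so the claimed inequality holds with equality.

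The remaining point is interpretive. With $\bar w_k=0$ the actual trajectory in $V_q$ coincides with the nominal trajectory $z_{i|k}$, because $\bar x_k=\bar A x_k+\bar B\bar u_f$. Also, $\bar u_f$ is understood as the stacked sequence $u_{i|k}=Kz_{i|k}$, not a constant vector. No constraints enter the argument, since the system in the statement is the unconstrained nominal system.
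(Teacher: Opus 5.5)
Your proof is correct and follows the standard argument the paper relies on. The paper gives no separate proof beyond calling this ``simply the LQR controller'' and later invoking the descent property \eqref{eq_uf_descent_property}; your one-step identity $\|z\|_P^2-\|(A+BK)z\|_P^2=l(z,Kz)$ establishes that property with equality, using \eqref{eq_Riccati_equation} with the dropped inverse restored, and telescoping it over the horizon yields the bound with equality.

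One small precision: invertibility of $R+B^TPB$ comes from $R\succ 0$ and $P\succeq 0$, not from Assumption~\ref{assumption_dynamic_bound}(a) itself. Your identity in any case only needs $P$ symmetric, solving the Riccati equation, with that inverse existing.
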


This is simply the LQR controller for the unconstrained case. It shows that the cost function $V_{q}( {\bar{u}}_{f},0 )$ obtained using the LQR control law is bounded above by $\| x_{k} \|_{P}^{2}$. 
Moreover, the LQR controller $u_f$ also possesses the descent property
\begin{align}\label{eq_uf_descent_property}
V_{f}( z_{i + 1|k} )= V_{f}( f( z_{i|k},u_{f},0 ) )\leq  V_{f}( z_{i|k} ) - l( z_{i|k},u_{f} ),
\end{align}
which is necessary in the SMPC stability analysis for unbounded noise \cite{11007013}. 
For the terminal cost and stage cost, we have,
\begin{align*}
\| x_{k} \|_{P}^{2} &\leq \lambda_{\max}(P)\| x_{k}\|^{2},\\
l( x_{k},u_{k} ) = \| x_{k} \|_{Q}^{2} &+ \| u_{k} \|_{R}^{2} \geq \| x_{k} \|_{Q}^{2} \geq \lambda_{\min}(Q)\| x_{k} \|^{2}.
\end{align*}
Thus,
\begin{align*}
\| x_{k} \|_{P}^{2} \leq \lambda_{\max}(P)\| x_{k} \|^{2} \leq \textstyle\frac{\lambda_{\max}(P)}{\lambda_{\min}(Q)}l( x_{k},u_{k} ).
\end{align*}
Therefore, for the unconstrained nominal system using the LQR fixed gain, we have
\begin{align*}
V_{q}( {\bar{u}}_{f},0 ) \leq \textstyle\frac{\lambda_{\max}(P)}{\lambda_{\min}(Q)}l( x_{k},u_{k} ).
\end{align*}

On the other hand, when control constraints exist, the nominal MPC typically requires the following assumptions,
\begin{assumption}\label{assumption:admissible_control_sequence}
The sequence $\bar u_f$ is an admissible control sequence for the nominal system \eqref{eq_nominal_dynamics}-\eqref{eq_nominal_initial_state} with input constraints $\bar u\in\mathcal{U}^N$, i.e, $\forall i=1,\cdots,N$, it holds that
\begin{align}\label{eq_positive_invariance}
f(z_{i|k},u_f,0)&\in \mathbb{X}_f, \forall z_{i|k}\in \mathbb{X}_f, \\ \label{eq_control_constraints}
u_f&\in \mathcal{U},
\end{align}
where $\mathbb{X}_f$ is a compact set with the origin in its interior.
\end{assumption}

Conditions \eqref{eq_positive_invariance}-\eqref{eq_control_constraints} are common assumptions in the construction of MPC, where $\mathbb{X}_f$ is supposed to be the maximum positive invariant set to ensure feasibility and constraint satisfactory of states. Following Assumption \ref{assumption:admissible_control_sequence}, we note that in the proposed TSDR-MPC framework, there is no need to explicitly design a terminal invariant set $\mathbb{X}_f$. Unlike conventional robust or stochastic MPC approaches that rely on robust tubes or invariant sets to ensure recursive feasibility, our method embeds constraint satisfaction into the worst-case expectation problem via dual variables and the Wasserstein ambiguity set. Thus, feasibility is maintained as long as $\mathcal{U}'\neq\emptyset$, without introducing additional conservatism from invariant set design. However, for systems subject to disturbances with non‑zero mean, additional care is needed to handle the cross‑terms induced by the non‑zero mean.

We now extend the analysis to the actual TSDR-MPC setting where the optimal control law ${\bar{u}}_{k}^{*}$ is applied in the presence of stochastic disturbances $\bar{w}_k$.
To obtain an upper bound for the cost function in \eqref{eq_minmax_problem}, we need to analyze $V_{q}( {\bar{u}}_{k}^{*},{\bar{w}}_{k} )$. This reduces to considering the quadratic cost $V_{q}( {\bar{u}}_{k}^{*},0 )$ and $V_{q}( {\bar{u}}_{f},0 )$ under distribution $\mathbb{O}$, with corresponding nominal states denoted as $z_{i|k}^{*}$ and $z_{i|k}^{f}$, respectively. An explicit upper bound for the actual quadratic cost $V_{q}( {\bar{u}}_{k}^{*},{\bar{w}}_{k} )$ is provided by the following proposition.

\newtheorem{proposition}{\bf Proposition}
\begin{proposition}\label{proposition:upper_bound_for_quadratic_cost}
Suppose Assumption \ref{assumption_dynamic_bound} hold. For any disturbance sequence ${\bar{w}}_{k}$, the quadratic cost associated with the optimal law ${\bar{u}}_{k}^{*}$ satisfies the following upper bound,
\begin{align*}
V_{q}( {\bar{u}}_{k}^{*},{\bar{w}}_{k} ) \leq 2\textstyle\frac{\lambda_{\max}(P)}{\lambda_{\min}(Q)}l( x_{k},u_{k} ) + g_{1}( {\bar{w}}_{k} ) + c_{1},
\end{align*}
where
\begin{align}\label{eq_g1}
&g_{1}( {\bar{w}}_{k} ) = \| \bar{D}{\bar{w}}_{k} \|_{\bar{Q}}^{2} + 2( \bar{A}x_{k} + \bar{B}{\bar{u}}_{k}^{*} )^{T}\bar{Q}\bar{D}{\bar{w}}_{k},\\\nonumber
c_{1} &= \textstyle2\sum_{i = 0}^{N - 1}\lbrack \lambda_{\max}(Q)( \sum_{j = 1}^{i}{L_{A}^{i - j}L_{B}u_{u}} )^{2} + \lambda_{\max}(R)u_{u}^{2} \rbrack \\&+ \textstyle2\lambda_{\max}(P)( \sum_{j = 1}^{N}{L_{A}^{N - j}L_{B}u_{u}} )^{2}.\nonumber
\end{align}
\end{proposition}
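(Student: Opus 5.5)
The plan has three steps: separate the disturbance from the nominal trajectory, compare the nominal cost of $\bar u_k^*$ with that of the LQR sequence $\bar u_f$, and then bound the resulting difference with Assumption \ref{assumption_dynamic_bound}(b).

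\textbf{Step 1: split off the disturbance.} Since the predicted trajectory is affine in the disturbance, $\bar x_k = \bar z_k^* + \bar D\bar w_k$ with $\bar z_k^* = \bar A x_k + \bar B\bar u_k^*$. Expanding the quadratic gives the exact identity
\begin{align*}
V_q(\bar u_k^*,\bar w_k) = V_q(\bar u_k^*,0) + \|\bar D\bar w_k\|_{\bar Q}^2 + 2(\bar A x_k+\bar B\bar u_k^*)^T\bar Q\bar D\bar w_k = V_q(\bar u_k^*,0) + g_1(\bar w_k).
\end{align*}
The $\|x_k\|_Q^2$ term is unaffected, because $x_{0|k}=x_k$. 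The proof therefore reduces to the deterministic bound $V_q(\bar u_k^*,0)\le 2\frac{\lambda_{\max}(P)}{\lambda_{\min}(Q)}l(x_k,u_k)+c_1$.

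\textbf{Step 2: compare with the LQR sequence.} Let $z^*_{i|k}$ and $z^f_{i|k}$ be the nominal states from the common initial state $x_k$ under $\bar u_k^*$ and $\bar u_f$, respectively. Their difference is
\begin{align*}
z^*_{i|k}-z^f_{i|k}=\textstyle\sum_{j=1}^{i}A^{i-j}B(u^*_{j-1|k}-u^f_{j-1|k}).
\end{align*}
Both sequences lie in $\mathcal U^N$: for $\bar u_k^*$ because $\mathcal U'\subseteq\mathcal U^N$, and for $\bar u_f$ by Assumption \ref{assumption:admissible_control_sequence}. Hence each input difference has norm at most $u_u$, and $\|z^*_{i|k}-z^f_{i|k}\|\le \sum_{j=1}^i L_A^{i-j}L_Bu_u$. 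For each stage I use $\|a\|_M^2\le 2\|b\|_M^2+2\|a-b\|_M^2$, with $a=z^*_{i|k}$ and $b=z^f_{i|k}$ for the state terms, and the analogous split for the input terms. This gives
\begin{align*}
l(z^*_{i|k},u^*_{i|k})&\le 2\,l(z^f_{i|k},u_f)+2\lambda_{\max}(Q)\Big(\textstyle\sum_{j=1}^{i}L_A^{i-j}L_Bu_u\Big)^2+2\lambda_{\max}(R)u_u^2,\\
V_f(z^*_{N|k})&\le 2V_f(z^f_{N|k})+2\lambda_{\max}(P)\Big(\textstyle\sum_{j=1}^{N}L_A^{N-j}L_Bu_u\Big)^2.
\end{align*}
Summing over the horizon yields $V_q(\bar u_k^*,0)\le 2V_q(\bar u_f,0)+c_1$, and the constant matches $c_1$ exactly.

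\textbf{Step 3: invoke the LQR bound.} Lemma \ref{lemma:LQR} and the chain of inequalities after it give $V_q(\bar u_f,0)\le\|x_k\|_P^2\le\frac{\lambda_{\max}(P)}{\lambda_{\min}(Q)}l(x_k,u_k)$. Combining this with Steps 1 and 2 proves the proposition.

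The main obstacle is not analytic but interpretive. The factor $2$ and the form of $c_1$ show that the intended comparison is this crude triangle-type split, not an optimality argument. Indeed, optimality of $\bar u_k^*$ for the min–max problem \eqref{eq_minmax_problem} does not directly bound its nominal cost, because $V_c$ and the worst-case expectation intervene. So I would avoid optimality altogether and rely only on admissibility of both sequences in $\mathcal U^N$.

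One point needs care: $\lambda_{\min}(Q)>0$ is required for the final step. This is implicit in the bound stated after Lemma \ref{lemma:LQR}, and I would adopt it as given.
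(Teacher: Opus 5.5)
Your proposal is correct and follows essentially the same route as the paper's proof. Both use the exact identity $V_q(\bar u_k^*,\bar w_k)=V_q(\bar u_k^*,0)+g_1(\bar w_k)$, the comparison of the nominal trajectories under $\bar u_k^*$ and the LQR sequence via $\|a\|_M^2\le 2\|b\|_M^2+2\|a-b\|_M^2$ giving $V_q(\bar u_k^*,0)\le 2V_q(\bar u_f,0)+c_1$, and then Lemma~\ref{lemma:LQR}; you only do the disturbance split first instead of last. Your remarks that $u_f\in\mathcal U$ (Assumption~\ref{assumption:admissible_control_sequence}) and $\lambda_{\min}(Q)>0$ are needed are correct, and the paper uses both implicitly even though the statement cites only Assumption~1.
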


The proof of Proposition  \ref{proposition:upper_bound_for_quadratic_cost} can be seen in APPENDIX \ref{app:upper_bound_for_quadratic_cost}. 
 The upper bound provided by Proposition \ref{proposition:upper_bound_for_quadratic_cost} consists of the current stage cost $l\left( x_{k},u_{k} \right)$, a disturbance‑dependent term $g_{1}\left( {\bar{w}}_{k} \right)$ and a constant term $c_1$ stemming from control deviations. We next establish a recursive inequality between the optimal costs at adjacent time steps.

\begin{proposition}\label{proposition:recurrence_inequality__for_quadratic_cost}
Suppose Assumption \ref{assumption_dynamic_bound} hold. Let ${\bar{u}}_{k + 1}^{c}$ be the candidate control sequence generated by \eqref{eq_candidate_solution} and $x_{i|k + 1}^{c}$ the corresponding state trajectory for $i=0,\cdots,N-1$. The following recursive inequality holds for $\forall\epsilon > 0$,
\begin{align*}
&\textstyle\frac{1}{1 + \epsilon}V_{q}( {\bar{u}}_{k + 1}^{c},{\bar{w}}_{k + 1} ) \leq V_{q}( {\bar{u}}_{k}^{*},0 ) - l( x_{k},u_{k} ) \\&+\textstyle \frac{1}{\epsilon}\sum_{i = 0}^{N - 2}\| \delta_{i} \|_{Q}^{2} + \frac{1}{1 + \epsilon}\lbrack g_{2}( \delta_{N - 1} ) + \delta_{f} \rbrack,
\end{align*}
where
\begin{align}\label{eq_delta_i}
\delta_{i} = &x_{i|k + 1}^{c} - z_{i + 1|k}^{*} ,\\\label{eq_g2}
g_{2}( \delta_{N - 1} ) = V_{f}&( \delta_{N - 1} ) + 2( z_{N|k}^{*} )^{T}P\delta_{N - 1},\\ \label{eq_delta_f}
\delta_{f} = \| Dw_{N - 1|k + 1} \|_{P}^{2} &+ 2( x_{N - 1|k + 1}^{c} )^{T}PDw_{N - 1|k + 1}.
\end{align}
\end{proposition}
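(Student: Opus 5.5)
The approach is to write $V_{q}( {\bar{u}}_{k + 1}^{c},{\bar{w}}_{k + 1} )$ stage by stage along the candidate trajectory $x_{i|k+1}^{c}$. Each stage is compared with the nominal optimal trajectory $z_{i+1|k}^{*}$ that was generated at time $k$ from the same input sequence. Write $x_{i|k+1}^{c} = z_{i+1|k}^{*} + \delta_{i}$ for $i=0,\dots,N-1$, where $\delta_i$ collects the effect of $Dw_k$ (through $x_{k+1}=z_{1|k}^*+Dw_k$) and of the disturbances $w_{0|k+1},\dots,w_{i-1|k+1}$. Note that $x_{0|k+1}^c=x_{k+1}$ and that the inputs coincide: $u_{i|k+1}^c=u_{i+1|k}^*$ for $i\le N-2$. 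Therefore the stage costs of $V_q({\bar{u}}_{k + 1}^{c},\cdot)$ are $\|z_{i+1|k}^*+\delta_i\|_Q^2+\|u_{i+1|k}^*\|_R^2$ for $i=0,\dots,N-2$. The remaining terms are the final stage $l(x_{N-1|k+1}^c,u_f)$ and the terminal cost $V_f(x_{N|k+1}^c)$, with $x_{N|k+1}^c=f(x_{N-1|k+1}^c,u_f,0)+Dw_{N-1|k+1}$.

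\emph{Step 1 (middle stages).} For $i\le N-2$, apply the elementary Young inequality $\|a+b\|_Q^2\le(1+\epsilon)\|a\|_Q^2+(1+1/\epsilon)\|b\|_Q^2$ and divide by $1+\epsilon$. This gives
\[
\tfrac{1}{1+\epsilon}\|z_{i+1|k}^*+\delta_i\|_Q^2\le \|z_{i+1|k}^*\|_Q^2+\tfrac{1}{\epsilon}\|\delta_i\|_Q^2,
\]
and the $R$-terms are simply bounded by themselves (note $\tfrac{1}{1+\epsilon}\le1$). Summed over $i$, these reproduce the stages $1,\dots,N-1$ of $V_q({\bar{u}}_{k}^{*},0)$ plus the error term $\frac{1}{\epsilon}\sum_{i=0}^{N-2}\|\delta_i\|_Q^2$.

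\emph{Step 2 (terminal part).} The tail is handled exactly, without Young's inequality. First expand $V_f(x_{N|k+1}^c)=V_f(f(x_{N-1|k+1}^c,u_f,0))+\delta_f$, where $\delta_f$ is exactly \eqref{eq_delta_f}. Then use the descent property \eqref{eq_uf_descent_property} to get
\[
l(x_{N-1|k+1}^c,u_f)+V_f(f(x_{N-1|k+1}^c,u_f,0))\le V_f(x_{N-1|k+1}^c).
\]
Finally expand $V_f(x_{N-1|k+1}^c)=V_f(z_{N|k}^*+\delta_{N-1})=V_f(z_{N|k}^*)+g_2(\delta_{N-1})$ with $g_2$ as in \eqref{eq_g2}. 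Multiplying this tail by $\frac{1}{1+\epsilon}$, we bound the $V_f(z_{N|k}^*)$ part by $V_f(z_{N|k}^*)$ itself, since it is nonnegative. The remaining terms $g_2$ and $\delta_f$ are kept with the factor $\frac{1}{1+\epsilon}$. This requires care because $g_2$ and $\delta_f$ may be negative; we keep them exact rather than bounding them.

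\emph{Step 3 (assembly).} Adding the contributions gives $\sum_{i=1}^{N-1}l(z_{i|k}^*,u_{i|k}^*)+V_f(z_{N|k}^*)$, which equals $V_q({\bar{u}}_{k}^{*},0)-l(z_{0|k}^*,u_{0|k}^*)$. Since $z_{0|k}^*=x_k$ and $u_k=u_{0|k}^*$, this is $V_q({\bar{u}}_{k}^{*},0)-l(x_k,u_k)$, which yields the claimed inequality.

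The main obstacle is bookkeeping rather than analysis. One must be careful that the $\frac{1}{1+\epsilon}$ factor is applied to $g_2$ and $\delta_f$ but only dropped on nonnegative quantities. One must also check that the index shift between $\delta_i$ and the time-$k$ prediction matches the definitions \eqref{eq_delta_i}.
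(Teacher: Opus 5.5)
Your decomposition is the same as the paper's: Young's inequality on stages $0,\dots,N-2$, the LQR descent step on the tail, and $V_f(z_{N|k}^*+\delta_{N-1})=V_f(z_{N|k}^*)+g_2(\delta_{N-1})$. Steps 1 and 3 are fine, but the identity you assert in Step 2 is false. Write $y=f(x_{N-1|k+1}^c,u_f,0)=A_Kx_{N-1|k+1}^c$. Then
\[
V_f(x_{N|k+1}^c)=V_f(y)+\|Dw_{N-1|k+1}\|_P^2+2\,y^{T}PDw_{N-1|k+1},
\]
so the cross term contains $A_Kx_{N-1|k+1}^c$, not $x_{N-1|k+1}^c$ as in \eqref{eq_delta_f}. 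This cannot be absorbed by bookkeeping. Every other step of your chain loses only a slack that does not depend on $w_{N-1|k+1}$: the Young slack, plus $\frac{\epsilon}{1+\epsilon}$ times the $R$-terms and $V_f(z_{N|k}^*)$. The Riccati descent step even holds with equality. By contrast, replacing the correct cross term with the one in \eqref{eq_delta_f} changes the right-hand side by $\frac{2}{1+\epsilon}\big((I-A_K)x_{N-1|k+1}^c\big)^{T}PDw_{N-1|k+1}$. This is linear in the unbounded $w_{N-1|k+1}$. Hence the displayed inequality, with $\delta_f$ read literally, fails for suitable $w_{N-1|k+1}$ whenever $D^{T}P(I-A_K)x_{N-1|k+1}^c\neq0$. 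For example, it fails for every $x_{N-1|k+1}^c\neq0$ when $D$ has full row rank (as in the simulation example), because $I-A_K$ is invertible for Schur $A_K$.

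The paper's appendix has the same weak point: it calls $\delta_f$ only ``an additional term arising from the disturbance'' and never checks its formula, so you have reproduced its argument including this error. The repair is to define $\delta_f=\|Dw_{N-1|k+1}\|_P^2+2(A_Kx_{N-1|k+1}^c)^{T}PDw_{N-1|k+1}$. With that, your Steps 1--3 go through verbatim, and in Proposition~\ref{propositopn:conditional_expectation_term_bound} the affected cross-term constants merely pick up a factor $\|A_K\|$. You should also state explicitly that applying \eqref{eq_uf_descent_property} at $x_{N-1|k+1}^c$ requires the last candidate input to be $u_f=Kx_{N-1|k+1}^c$, which depends on the disturbance realization. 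The paper assumes this tacitly. If $u_f$ were instead $K$ times a nominal prediction, the descent step would leave additional cross terms.
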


The proof of Proposition  \ref{proposition:recurrence_inequality__for_quadratic_cost} can be seen in APPENDIX \ref{app:recurrence_inequality__for_quadratic_cost}. 
Proposition \ref{proposition:recurrence_inequality__for_quadratic_cost} explicitly expresses the error terms induced by disturbances, enabling subsequent moment‑based bounding using \eqref{eq_mu}-\eqref{eq_covariance}. 
However, this recursive inequality involves cross‑terms of the form
$\| z_{N|k}^{*} \|\| {\hat{\mu}}_{k + 1} \|$ in $g_{2}( \delta_{N - 1} )$ and $\delta_{f}$, which arise because the worst‑case mean $\hat \mu_{k+1}$ is generally non‑zero even when the true mean is zero. In general, such cross‑terms can be eliminated by either assuming zero mean or imposing a terminal equality constraint, i.e., $z_{N|k}=0$. Nevertheless, a terminal equality constraint is too restrictive, which severely shrinks the feasible set and degrades control performance. To avoid this drawback, we instead introduce the following terminal constraint  
\begin{align}\label{eq_terminal_constraints}
\| z_{N|k} \|^{2} \leq l_{c}\| x_{k} \|^{2},
\end{align}
where $l_c>0$ is a tuning parameter. 
For the nominal dynamics, the terminal state is $z_{N|k} = A^{N}x_{k} + C_{AB}{\bar{u}}_{k},$ where $C_{AB} = \lbrack A^{N - 1}B\ \cdots\ B \rbrack.$
Hence, the terminal constraints \eqref{eq_terminal_constraints} can be rewritten by $\| A^{N}x_{k} + C_{AB}{\bar{u}}_{k} \|^{2} \leq l_{c}\| x_{k} \|^{2}.$
It can be seen that these quadratic constraints only affect the first-stage feasible set ${\mathcal{U}}^N$ in \eqref{eq_minmax_problem} and can be expressed as the set 
\begin{align}\label{eq_ternimal_constraints}
\mathcal{U}_{k}=\{\bar u_k|\|A^Nx_k+C_{AB}\bar u_k\|^2\leq l_c \|x_k\|^2\}.
\end{align}
Therefore, with $\mathcal{U}^N$ be replaced by $\mathcal{U}'=\mathcal{U}^N\cap\mathcal{U}_{k}$ in \eqref{eq_minmax_problem}, the constraints on future states \eqref{eq_constraints} can be handled via the second-stage optimization that penalizes violations adaptively, without relying on explicit robust tubes.
What remains to be solved is whether $\bar u_f$ is still an admissible control sequence for $\mathcal U'$ and the following proposition provides the answer. 
\begin{proposition}\label{proposition_feasibility}
Suppose Assumption \ref{assumption:admissible_control_sequence} holds, the sequence $\bar u_f$ is an admissible control sequence for TSDR-MPC \eqref{eq_minmax_problem} if $l_c$ is lower bounded by the following inequality,
\begin{align}\label{eq_l_c}
(A_K^N)^TA_K^N\preceq l_cI, ~A_K=A+BK.
\end{align}
\end{proposition}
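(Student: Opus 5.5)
The plan is to check directly that $\bar u_f$ lies in both pieces of $\mathcal{U}'=\mathcal{U}^N\cap\mathcal{U}_k$. Membership in $\mathcal{U}^N$ should come straight from Assumption \ref{assumption:admissible_control_sequence}; membership in $\mathcal{U}_k$ should follow from \eqref{eq_l_c} once the nominal closed-loop terminal state under the LQR law is written out explicitly. Since the second-stage problem \eqref{eq_second_stage_cost}-\eqref{eq_second_stage_constraints} is always feasible (take $q^{+}=\max\{q,0\}$, $q^{-}=\max\{-q,0\}$ componentwise), admissibility of $\bar u_f$ reduces to these two memberships.

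\textbf{Input constraints.} Along the nominal trajectory \eqref{eq_nominal_dynamics}-\eqref{eq_nominal_initial_state} generated by $u_{i|k}=Kz_{i|k}$, Assumption \ref{assumption:admissible_control_sequence} gives \eqref{eq_positive_invariance} and \eqref{eq_control_constraints}. Positive invariance of $\mathbb{X}_f$ keeps every $z_{i|k}$ in $\mathbb{X}_f$, provided $z_{0|k}=x_k\in\mathbb{X}_f$. On that set the control $Kz_{i|k}$ is in $\mathcal{U}$, so $\bar u_f\in\mathcal{U}^N$. I will state this as being directly what the assumption asserts, namely that $\bar u_f$ is admissible for the nominal system with $\bar u\in\mathcal{U}^N$.

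\textbf{Terminal constraint.} Under $u_f=Kz_{i|k}$ the nominal dynamics become $z_{i+1|k}=A_Kz_{i|k}$ with $A_K=A+BK$. Induction then gives $z_{N|k}=A_K^Nx_k$. Equivalently, $A^Nx_k+C_{AB}\bar u_f=A_K^Nx_k$, which I will note to connect with the form used in \eqref{eq_ternimal_constraints}. Hence
\begin{align*}
\|z_{N|k}\|^2=x_k^T(A_K^N)^TA_K^Nx_k\le l_c\,x_k^Tx_k=l_c\|x_k\|^2
\end{align*}
by \eqref{eq_l_c}, so $\bar u_f\in\mathcal{U}_k$. Combining the two parts gives $\bar u_f\in\mathcal{U}'$, so $\mathcal{U}'\neq\emptyset$ and \eqref{eq_minmax_problem} is feasible.

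\textbf{Main obstacle.} The terminal-constraint step is routine. The delicate point is the input-constraint part: the initial state $x_k$ must be one for which Assumption \ref{assumption:admissible_control_sequence} applies, i.e.\ $x_k\in\mathbb{X}_f$, or the assumption must be read as holding along the trajectory from $x_k$. I will rely on the assumption as stated, with $z_{i|k}$ ranging over the nominal trajectory started at $x_k$. I will also remark that $l_c$ exists because $A_K$ is Schur, by Assumption \ref{assumption_dynamic_bound}(a) and the Riccati construction; in particular one may take $l_c=\lambda_{\max}((A_K^N)^TA_K^N)$, which can be less than $1$ for large $N$.
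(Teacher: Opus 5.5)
Your proposal is correct and matches the paper's proof: $\bar u_f\in\mathcal{U}^N$ is taken directly from Assumption~\ref{assumption:admissible_control_sequence}, and the paper handles the $x_k\in\mathbb{X}_f$ subtlety you flag the same way, by reading the assumption as stated; $\bar u_f\in\mathcal{U}_k$ then follows from $z_{N|k}=A_K^Nx_k$ together with \eqref{eq_l_c}. One minor aside: some $l_c$ satisfying \eqref{eq_l_c} exists for any $A_K$ (e.g.\ $l_c=\lambda_{\max}((A_K^N)^TA_K^N)$), so Schur stability of $A_K$ is needed only for your remark that $l_c<1$ is possible for large $N$, not for existence.
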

\begin{proof}
By Assumption \ref{assumption:admissible_control_sequence}, we have $u_f\in\mathcal U$. And if \eqref{eq_l_c} holds, it follows that
$\|z_{N|k}\|^2=x_k^T(A_K^N)^TA_K^Nx_k\leq l_cx_k^Tx_k=l_c\|x_k\|^2$, which indicates that $\bar u_f\in\mathcal{U}'=\mathcal{U}^N\cap\mathcal{U}_{k}$.
\end{proof}

Proposition \ref{proposition_feasibility} guarantees the feasibility of TSDR-MPC \eqref{eq_minmax_problem}. Moreover, by choosing $l_c$ sufficiently large, the terminal constraint can be relaxed arbitrarily, thereby effectively mitigating the performance loss that would otherwise be incurred to restore feasibility.

\subsection{Adaptive constraint tightening}

In this section, we consider constraints on future states, assumed to be of the form \eqref{eq_constraints}. Building upon the recursive inequality established in Proposition 2 and the terminal constraint designed for the nominal system, we now derive a tractable reformulation of the two-stage distributionally robust MPC problem. The following theorem shows that the min-max problem \eqref{eq_minmax_problem} can be transformed into a finite-dimensional optimization problem that is amenable to practical computation. This reformulation exploits strong duality and the structure of the Wasserstein ambiguity set, ultimately yielding a nonconvex program whose solution provides the optimal control sequence and the worst-case expectation.

\newtheorem{theorem}{\bf Theorem}
\begin{theorem}\label{theorem:quivalent_optimization} Suppose that Assumption \ref{assumption:ambiguity set} holds,  and $C_s=(F\bar{D} )^{T}CF\bar{D}$ in \eqref{eq_wasserstein_distance} is symmetric positive definite with some symmetric positive definite matrix $C$. Let $\underline{\gamma} = \rho( ( \frac{1}{2}C_s )^{- 1}{\bar{D}}^{T}\bar{Q}\bar{D} )$, where $\rho(\cdot)$ is the spectral radius. Then the two-stage distributionally robust MPC problem \eqref{eq_minmax_problem} is equivalent to
\begin{align}\nonumber
J(k) =& \min_{\bar u_k \in \mathcal{U}',\gamma > \underline{\gamma}}\big\{ f_1(\bar u_k,\gamma)+ \textstyle\frac{1}{n}\sum_{\sigma=1}^n{\mathcal{V}( \bar u_{k},\gamma,\bar w_{k}^{\sigma} )} \big\}\\& + \| \bar{A}x_{k} \|_{\bar{Q}}^{2},\label{eq_single_min}
\end{align}
where
\begin{align}\label{eq_first_stage_cost}
f_1(\bar u_k,\gamma)=&\| {\bar{u}}_{k} \|_{{\bar{B}}^{T}\bar{Q}\bar{B} + \bar{R}}^{2} + 2( \bar{A}x_{k})^{T}\bar{Q}\bar{B}{\bar{u}}_{k} + \varepsilon\gamma,
\\\nonumber
\mathcal{V}(\bar u_{k},\gamma,\bar w_{k}^{\sigma} ) =& \max_{\pi \in \Pi}\big\{ \textstyle\frac{1}{2}\pi^{T}F\bar{D}C_{1}^{- 1}( F\bar{D} )^{T}\pi \\&+\textstyle \pi^{T}( B_{1}{\bar{u}}_{k} + F\bar{A}x_{k} + G + F\bar{D}C_{1}^{- 1}C_{0} ) \big\}\nonumber\\
& + \textstyle\frac{1}{2}C_{0}^{T}C_{1}^{- 1}C_{0} - \frac{1}{2}\gamma\| {\bar{w}}_{k}^{\sigma} \|_{( F\bar{D} )^{T}CF\bar{D}}^{2},\label{eq_V_max}
\end{align}
with
\begin{align*}
C_{0} &= 2{\bar{D}}^{T}\bar{Q}( \bar{A}x_{k} + \bar{B}{\bar{u}}_{k} ) + \gamma( F\bar{D} )^{T}CF\bar{D}{\bar{w}}_{k}^{\sigma},\\
C_1&=\gamma( F\bar{D} )^{T}CF\bar{D} - 2{\bar{D}}^{T}\bar{Q}\bar{D},
\end{align*}
and $\Pi=\{\pi|D_1^T\pi\leq[h^T 0^T]^T\}$.
\end{theorem}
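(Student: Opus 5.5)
We first expand the integrand of the inner problem in \eqref{eq_minmax_problem}. Substituting $\bar x_k=\bar A x_k+\bar B\bar u_k+\bar D\bar w_k$ into $V_q$ gives
\begin{align*}
V_q(\bar u_k,\bar w_k)=\|x_k\|_Q^2+\|\bar A x_k\|_{\bar Q}^2+\|\bar u_k\|^2_{\bar B^T\bar Q\bar B+\bar R}+2(\bar Ax_k)^T\bar Q\bar B\bar u_k+\|\bar D\bar w_k\|_{\bar Q}^2+2(\bar Ax_k+\bar B\bar u_k)^T\bar Q\bar D\bar w_k .
\end{align*}
The $\bar w$-independent terms come out of the expectation. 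For fixed $\bar u_k$ we are then left with the worst-case expectation of
\begin{align*}
\ell(\bar w)=\|\bar D\bar w\|^2_{\bar Q}+2(\bar Ax_k+\bar B\bar u_k)^T\bar Q\bar D\bar w+V_c(\bar u_k,\bar w).
\end{align*}
Since $h\ge 0$ and $D_1=[I\ -I]$, the second-stage LP \eqref{eq_second_stage_cost}--\eqref{eq_second_stage_constraints} is always feasible and bounded. LP duality therefore gives
\begin{align*}
V_c=\max_{\pi\in\Pi}\pi^T(B_1\bar u_k+\xi_k),
\end{align*}
with $\Pi$ as in the statement. $\Pi$ is the box $0\le\pi\le h$, so it is compact.

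Next we dualize the Wasserstein ball $\mathcal P_k$ around the empirical $\mathbb Q_k$. We invoke the standard strong duality result for Wasserstein DRO with transport cost $c$ and empirical center; Assumption \ref{assumption:ambiguity set} gives nonemptiness and $\varepsilon>0$ gives a Slater point. This yields
\begin{align*}
\max_{\mathbb P\in\mathcal P_k}\mathbb E_{\mathbb P}[\ell]=\inf_{\gamma\ge 0}\Big\{\varepsilon\gamma+\frac1n\sum_{\sigma}\sup_{\bar w}\big[\ell(\bar w)-\gamma c(\bar w,\bar w^\sigma_k)\big]\Big\}.
\end{align*}
Here $c(\bar w,\bar w^\sigma)=\tfrac12\|\bar w-\bar w^\sigma\|^2_{C_s}$. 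Because $\ell$ grows quadratically with Hessian $2\bar D^T\bar Q\bar D$, the inner sup is $+\infty$ unless $\gamma C_s-2\bar D^T\bar Q\bar D\succeq0$. The generalized-eigenvalue condition $C_1\succ0$ is exactly $\gamma>\underline\gamma$ with $\underline\gamma=\rho((\tfrac12C_s)^{-1}\bar D^T\bar Q\bar D)$. The boundary value $\gamma=\underline\gamma$ can be dropped because the objective is lower semicontinuous in $\gamma$, so the infimum is unchanged by restricting to the open set. This produces the $\varepsilon\gamma$ term in $f_1$ and the constraint $\gamma>\underline\gamma$.

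For each sample we then evaluate $\sup_{\bar w}\max_{\pi\in\Pi}[\cdots]$ by swapping the sup and the max, which is legitimate since both are maxima. For fixed $\pi$, the expression in $\bar w$ is a concave quadratic:
\begin{align*}
-\tfrac12\bar w^TC_1\bar w+\bar w^T\big(C_0+(F\bar D)^T\pi\big)-\tfrac12\gamma\|\bar w^\sigma_k\|^2_{C_s}+\pi^T(B_1\bar u_k+F\bar Ax_k+G),
\end{align*}
with $C_0=2\bar D^T\bar Q(\bar Ax_k+\bar B\bar u_k)+\gamma C_s\bar w_k^\sigma$. Maximizing in closed form at $\bar w=C_1^{-1}(C_0+(F\bar D)^T\pi)$ gives
\begin{align*}
\tfrac12\big(C_0+(F\bar D)^T\pi\big)^TC_1^{-1}\big(C_0+(F\bar D)^T\pi\big).
\end{align*}
Expanding this yields exactly $\tfrac12\pi^TF\bar DC_1^{-1}(F\bar D)^T\pi+\pi^TF\bar DC_1^{-1}C_0+\tfrac12C_0^TC_1^{-1}C_0$. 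Adding the remaining terms recovers $\mathcal V$ in \eqref{eq_V_max}. Collecting all the $\bar u$-dependent deterministic pieces into $f_1$ and combining the outer $\min_{\bar u_k\in\mathcal U'}$ with $\inf_\gamma$ into a joint minimization gives \eqref{eq_single_min}. The constant $\|x_k\|_Q^2$ does not affect the minimizer, so it is either absorbed or dropped as in the statement.

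We expect the main obstacle to be justifying the strong duality step. The loss is unbounded and quadratically growing, the support is unbounded, and the loss is only piecewise quadratic because of the max over $\pi$. We would handle this by writing $\ell$ as a finite maximum of quadratics over the vertices of the polytope $\Pi$ and applying the duality theorem for quadratic-growth losses with a squared-norm cost, for example Gao--Kleywegt or Blanchet--Murthy. A second-moment growth condition holds because the true and empirical distributions have finite second moments. The remaining work, namely the interchange of sup and max and the exclusion of $\gamma\le\underline\gamma$, is routine.
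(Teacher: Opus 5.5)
Your proposal is correct and follows the paper's proof essentially step for step. Both arguments expand $V_q$ and pull out the $\bar w$-independent terms, invoke Wasserstein strong duality, replace $V_c$ by its LP dual over $\Pi=\{0\le\pi\le h\}$, interchange the two maximizations, and solve the concave quadratic in closed form at $\bar w=C_1^{-1}(C_0+(F\bar D)^T\pi)$. Your extra care about LP feasibility and boundedness, the exclusion of $\gamma<\underline{\gamma}$, and the dropped constant $\|x_k\|_Q^2$ goes beyond what the paper writes.

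The one justification to fix is how you remove the boundary point $\gamma=\underline{\gamma}$. The dual objective $\phi(\gamma)$ is a supremum of functions affine in $\gamma$, hence convex, and convexity gives $\phi(\underline{\gamma})\ge\limsup_{\gamma\downarrow\underline{\gamma}}\phi(\gamma)\ge\inf_{\gamma>\underline{\gamma}}\phi(\gamma)$. Lower semicontinuity only gives the opposite inequality $\phi(\underline{\gamma})\le\liminf_{\gamma\downarrow\underline{\gamma}}\phi(\gamma)$, so on its own it does not show that the infimum is unchanged when you restrict to $\gamma>\underline{\gamma}$.
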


\begin{proof}
We start from the definition of the cost function in \eqref{eq_minmax_problem},
\begin{align}\label{eq_J_k}
J(k) &= \min_{u \in \mathcal{U}'}\Big\{ \max_{\mathbb{P}_k\in\mathcal{P}_k}\big\{ \left\| {\bar{u}}_{k} \right\|_{\bar{R}}^{2} + \mathbb{E}\big[ \| \bar{A}x_{k} + \bar{B}{\bar{u}}_{k} + \bar{D}{\bar{w}}_{k} \|_{\bar{Q}}^{2} \nonumber\\
&+ V_{c}\left( {\bar{u}}_{k},{\bar{w}}_{k} \right) |\mathcal{F}_{k - 1}\big] \big\} \Big\}\nonumber\\
&= \min_{u \in \mathcal{U}'}\Big\{ \| {\bar{u}}_{k} \|_{{\bar{B}}^{T}{\bar{Q}}_{k}\bar{B} + \bar{R}}^{2} + 2( \bar{A}x_{k} )^{T}{\bar{Q}}_{k}\bar{B}{\bar{u}}_{k}\nonumber\\
& + \max_{\mathbb{P}_k\in\mathcal{P}_k}\big\{ \mathbb{E}\big[ \| \bar w_{k} \|_{{\bar{D}}^{T}\bar{Q}\bar{D}}^{2} + 2( \bar{A}x_{k} + \bar{B}{\bar{u}}_{k} )^{T}\bar{Q}\bar{D}{\bar{w}}_{k} \nonumber\\&+ V_{c}( {\bar{u}}_{k},{\bar{w}}_{k} ) |\mathcal{F}_{k - 1}\big] \big\} \Big\} +\| \bar{A}x_{k} \|_{\bar{Q}}^{2}.
\end{align}
The inner maximization involves only terms that depend on $\bar w_k$. Applying the definition of the Wasserstein ambiguity set \eqref{eq_Wasserstein_ambiguity_set} and using strong duality for the worst-case expectation problem \cite{doi:10.1137/20M1370227}, we obtain
\begin{align*}
&\max_{\mathbb{P}_k\in\mathcal{P}_k}\big\{ \mathbb{E}\big[ \| w_{k} \|_{{\bar{D}}^{T}\bar{Q}\bar{D}}^{2} + 2( \bar{A}x_{k} + \bar{B}{\bar{u}}_{k} )^{T}\bar{Q}\bar{D}{\bar{w}}_{k} \\&+ V_{c}\left( {\bar{u}}_{k},{\bar{w}}_{k} \right) |\mathcal{F}_{k - 1}\big] \big\}= \min_{\gamma\geq0}\big\{\varepsilon\gamma+\textstyle\frac{1}{n}\sum_{\sigma=1}^n\mathcal{V}(\bar u_k,\gamma,\bar w_k^\sigma)\big\},
\end{align*}
where
\begin{align}\label{eq_V}\nonumber
&\mathcal{V}(\bar u_k,\gamma,\bar w_k^\sigma)=\max_{\bar w_k\in\mathcal{W}^{N}}\big\{ - \gamma c( {\bar w}_{k},{\bar w}_{k}^{\sigma} )+\| {\bar{w}}_{k} \|_{{\bar{D}}^{T}\bar{Q}\bar{D}}^{2} \\&+ 2( \bar{A}x_{k} + \bar{B}{\bar{u}}_{k} )^{T}\bar{Q}\bar{D}{\bar{w}}_{k} + V_{c}( {\bar{u}}_{k},{\bar{w}}_{k} )  \big\}
\end{align}

We now focus on solving the inner maximization. Using the dual representation of $ V_{c}\left( {\bar{u}}_{k},{\bar{w}}_{k} \right)$ from \eqref{eq_second_stage_cost}
\begin{align}\label{eq_Vc_dual}
 V_{c}( {\bar{u}}_{k},{\bar{w}}_{k} )=\max_{\pi\in\Pi}{\pi^T(B_1\bar u_k+F\bar D\bar w_k+F\bar Ax_k+G)}.
\end{align}
Substituting \eqref{eq_Vc_dual} into \eqref{eq_V} and interchanging the maximization over $\pi$ and $\bar w_k$, we have

\begin{align}\label{eq_V_max1}
&\mathcal{V}( {\bar{u}}_{k},\gamma,{\bar{w}}_{k}^{\sigma} ) 
= \max_{\pi \in \Pi}\Big\{ \pi^{T}( B_{1}{\bar{u}}_{k} + F\bar{A}x_{k} + G ) + \max_{\bar w_k\in\mathbb{R}^{Nn_w}}\nonumber\\&\big\{ \pi^{T}F\bar{D}{\bar{w}}_{k} +\| {\bar{w}}_{k} \|_{{\bar{D}}^{T}\bar{Q}\bar{D}}^{2} + 2( \bar{A}x_{k} + \bar{B}{\bar{u}}_{k} )^{T}\bar{Q}\bar{D}{\bar{w}}_{k}\nonumber\\
&\textstyle - \frac{\gamma}{2}{{\bar{w}}_{k}}^{T}( F\bar{D} )^{T}CF\bar{D}{\bar{w}}_{k} + \gamma( {\bar{w}}_{k}^{\sigma} )^{T}( F\bar{D} )^{T}CF\bar{D}{\bar{w}}_{k} \big\} \Big\}\nonumber\\
&\textstyle - \frac{\gamma}{2}( {\bar{w}}_{k}^{\sigma} )^{T}( F\bar{D} )^{T}CF\bar{D}{\bar{w}}_{k}^{\sigma}.
\end{align}
The inner maximization over $\bar w_k$ is an unconstrained quadratic program and we rewrite it as
\begin{align*}
&\max_{{\bar{w}}_{k}\in\mathcal{W}^N}\big\{ \| {\bar{w}}_{k} \|_{{\bar{D}}^{T}\bar{Q}\bar{D} - \frac{1}{2}\gamma( F\bar{D})^{T}CF\bar{D}}^{2} + \big[\pi^{T}F\bar{D}\\&+  2( \bar{A}x_{k} + \bar{B}{\bar{u}}_{k} )^{T}\bar{Q}\bar{D}  + \gamma( {\bar{w}}_{k}^{\sigma} )^{T}( F\bar{D} )^{T}CF\bar{D} \big]{\bar{w}}_{k}  \big\}.
\end{align*}

Define
$C_2=C_0+(F\bar D)^T\pi$, the innermost cost function is $- \frac{1}{2}\| {\bar{w}}_{k} \|_{C_{1}}^{2} + C_{2}^{T}{\bar{w}}_{k}$. If $C_1\succ0$ holds for $\gamma>\underline{\gamma}$, then 
\begin{align}\label{eq_worstcase_disturbance}
    \bar w_{k}^{*}= C_{1}^{- 1}C_{2},
\end{align}
and the optimal value of the inner problem is $\frac{1}{2}C_{2}^{T}C_{1}^{- 1}C_{2}.$
Substituting back into \eqref{eq_V_max1} and separating terms involving $\pi$, we obtain \eqref{eq_V_max}.
Finally, substituting back into \eqref{eq_J_k} and combining the minimization over $\gamma$ with the outer minimization over $\bar u_k$, we arrive at the reformulation \eqref{eq_single_min}. The condition $\gamma>\underline{\gamma}$ ensures $C_1\succ0$, which is necessary for the inner maximization to be finite. 
\end{proof}

Theorem \ref{theorem:quivalent_optimization} achieves adaptive constraint tightening by reformulating state constraints through the dual of a second-stage linear program in \eqref{eq_Vc_dual}. 
The key to adaptive constraint tightening lies in the online adjustment of the dual variables $\gamma$ and $\pi$. $\gamma$ is the dual multiplier associated with the Wasserstein radius $\varepsilon$ and controls how far the worst‑case distribution can deviate from the empirical distribution. A larger $\gamma$ reduces the effective ambiguity set and tightens the constraints. $\pi$ is the dual variable of the second‑stage linear program and measures the cost of violating each constraint. Its nonzero components identify the most threatened constraint directions. Because both $\gamma$ and $\pi$ are solved together with the control from the optimization problem, their values automatically increase when the current state approaches a constraint boundary or the uncertainty level rises. This tightens the constraints in real time without any pre‑specified tuning parameters. This state and data dependent adaptation is the essence of adaptive constraint tightening.

In \eqref{eq_second_stage_constraints}, the random vector $\xi_k$ is defined as $\xi_k=F\bar D\bar w_k+F\bar Ax_k+G$, which directly quantifies the potential violation of the predicted state trajectory with respect to the state constraints \eqref{eq_constraints}. The term $F\bar D\bar w_k$ captures the contribution of disturbances, propagated through the system dynamics, to the constraint output. Theorem \ref{theorem:quivalent_optimization} sets the optimal transport cost matrix as $C_s=(F\bar{D} )^{T}CF\bar{D}$, thereby embedding into the Wasserstein distance the same linear mapping $F\bar{D}$ that appears in $\xi_k$. In other words, the distance no longer measures discrepancies between raw disturbance realizations $\bar w_k$, but rather between their images under $F\bar{D}$, that is, their impact on constraint satisfaction.

This design fundamentally ties the geometry of distributional uncertainty to the constraint sensitivity of the system. Since the disturbance dependent part of $\xi_k$ is exactly $F\bar D\bar w_k$,  the transportation penalty $\gamma c_\xi(\xi_k,\xi_k^\sigma)$, where 
\begin{align}\label{eq_c_xi}
c_\xi(\xi_k,\xi_k^\sigma)=\textstyle\frac{1}{2} ( \xi_k-\xi_k^\sigma)^{T} C (\xi_k-\xi_k^\sigma ),
\end{align}
can be interpreted as a cost on the uncertainty of $\xi_k$ itself. 
With $\gamma$ determined adaptively as described above, the penalty on constraint‑relevant uncertainty is strengthened exactly when it is most needed. Consequently, the worst‑case distribution concentrates probability mass on disturbance directions that lead to large $\xi_k$, i.e., severe constraint violations, and the overall constraint tightening emerges naturally from the optimization, driven by the current state and available data, without any pre‑designed robust tube.

In Thereom \ref{theorem:quivalent_optimization}, we assume $C_s$ is positive definite, if not, the entire TSDR‑MPC formulation collapses. The worst‑case disturbance in \eqref{eq_worstcase_disturbance} becomes infinite, rendering the minimax problem \eqref{eq_minmax_problem} ill‑posed and its optimal value undefined. 
Physically, the absence of positive definiteness of $C_s$ means that the Wasserstein transport cost no longer defines a strict metric on the constraint‑output space, making it impossible to penalize harmful uncertainties appropriately. The following lemma gives the necessary and sufficient condition that $C_s$ is positive definite.

\begin{lemma}\label{lemma:turbance_to_constraint_observability}
The matrix $C_s=(F\bar{D})^TCF\bar{D}$ is symmetric and positive definite, if and only if the disturbance‑to‑constraint observability matrix
\begin{align*}
\mathcal{O}_D=
\begin{bmatrix}
    F_0D\\F_0AD\\\vdots\\F_0A^{N-1}D
\end{bmatrix}\in \mathbb{R}^{Nn_c\times n_w}
\end{align*}
has full column rank, i.e., $\text{rank}(\mathcal{O}_D)=n_w.$
\end{lemma}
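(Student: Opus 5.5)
The plan is to reduce positive definiteness of $C_s$ to injectivity of the linear map $F\bar D$, and then read that injectivity off the block-triangular structure of $F\bar D$.

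\textbf{Step 1 (reduction to injectivity).} Symmetry of $C_s=(F\bar D)^TCF\bar D$ is immediate from $C=C^T$. For any $v\in\mathbb{R}^{Nn_w}$ we have $v^TC_sv=\|F\bar Dv\|_C^2\ge \lambda_{\min}(C)\|F\bar Dv\|^2$. Since $C\succ0$, this shows $C_s\succ0$ if and only if $F\bar Dv=0$ forces $v=0$. Equivalently, $C_s\succ0$ if and only if $\text{rank}(F\bar D)=Nn_w$, i.e. $F\bar D$ has full column rank.

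\textbf{Step 2 (structure of $F\bar D$).} Because $F=\text{diag}(F_0,\dots,F_0)$, the matrix $F\bar D$ is block lower triangular. Its $(i,j)$ block is $F_0A^{i-j}D$ for $i\ge j$ and $0$ otherwise. The $j$-th block column is therefore $[0;\dots;0;\mathcal{O}_D^{(N-j+1)}]$, where $\mathcal{O}_D^{(m)}=[F_0D;\dots;F_0A^{m-1}D]$ is the truncated disturbance‑to‑constraint observability matrix. In particular, the first block column is exactly $\mathcal{O}_D$.

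\textbf{Step 3 (necessity).} If $\mathcal{O}_D v_1=0$ for some $v_1\ne0$, take $v=[v_1;0;\dots;0]$. Then $F\bar Dv=\mathcal{O}_Dv_1=0$, so $F\bar D$ is not injective and, by Step 1, $C_s$ is not positive definite. Hence positive definiteness of $C_s$ forces $\text{rank}(\mathcal{O}_D)=n_w$.

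\textbf{Step 4 (sufficiency, the main obstacle).} Suppose $F\bar Dv=0$ with $v=[v_1;\dots;v_N]$. I would use backward block substitution. The last block row reads $\sum_j F_0A^{N-j}Dv_j=0$. The natural induction starts from the last block column, which is just $F_0D$: if $F_0D$ is injective, then $v_N=0$. Next, the last two block rows restricted to the column block $j=N-1$ involve $\mathcal{O}_D^{(2)}$, which gives $v_{N-1}=0$, and so on up to $v_1=0$.

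This argument requires every truncation $\mathcal{O}_D^{(m)}$ to have full column rank $n_w$, and the binding requirement is the shortest one, $\text{rank}(F_0D)=n_w$. Full column rank of the whole matrix $\mathcal{O}_D$ alone does not obviously give injectivity of the last block column. So the key difficulty is reconciling the stated condition with what the triangular structure actually demands.

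I would first check whether the stated rank condition is meant in the stronger sense that $F_0D$ itself has full column rank. If so, $\mathcal{O}_D^{(m)}$ has full column rank for every $m$, since each contains $F_0D$ as its top block. The induction above then closes, and necessity is still Step 3.

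If instead only $\text{rank}(\mathcal{O}_D)=n_w$ is assumed, I would test small examples, for instance $F_0D=0$ with $F_0AD\neq0$. In such a case $v=[0;\dots;0;v_N]$ with $v_N\ne0$ lies in the kernel of $F\bar D$. That would show the sufficiency direction needs the condition strengthened to full column rank of $F_0D$, equivalently of every truncated $\mathcal{O}_D^{(m)}$.
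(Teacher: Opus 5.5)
Your analysis is correct, and the error is in the paper's proof, not yours. The paper argues that, by the block lower-triangular structure, $F\bar D$ is injective if and only if its \emph{first} block column $\mathcal{O}_D$ is. Only the ``only if'' half of that is true, and it is exactly your Step 3.

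The block column that actually binds is the last one, $[0;\dots;0;F_0D]$. The disturbance $w_{N-1|k}$ enters the predicted trajectory only through $x_{N|k}$, so it reaches the constraint outputs only through $F_0D$, whatever $N$ is. Hence the correct characterization is $C_s\succ0 \iff \text{rank}(F_0D)=n_w$. This implies $\text{rank}(\mathcal{O}_D)=n_w$, but is not implied by it.

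To make your counterexample airtight, take $n_w=1$, so that $F_0AD\neq0$ really does give $\mathcal{O}_D$ full column rank. For example, $N=2$, $A=\begin{bmatrix}1&1\\0&1\end{bmatrix}$, $D=\begin{bmatrix}0\\1\end{bmatrix}$, $F_0=\begin{bmatrix}1&0\end{bmatrix}$ gives $\mathcal{O}_D=\begin{bmatrix}0\\1\end{bmatrix}$ of rank $1=n_w$. However, $F\bar D=\begin{bmatrix}0&0\\1&0\end{bmatrix}$ has rank $1<Nn_w$, so $C_s$ is singular for every $C\succ0$.

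Here $(F_0,A)$ is observable and $D$ has full column rank, and the last block column of $F\bar D$ vanishes for every $N$. So the same example refutes the paper's follow-up remark that a long enough horizon yields $C_s\succ0$. It also undermines the step in the proof of Theorem~\ref{theorem:feasibility} that deduces injectivity of $F\bar D$ from $\text{rank}(\mathcal{O}_D)=n_w$.

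One correction to your Step 4: the substitution must run forward, not backward. The last block row $\sum_jF_0A^{N-j}Dv_j=0$ involves every $v_j$, so you cannot extract $v_N=0$ from it first. Instead:
\begin{itemize}
\item the first block row gives $F_0Dv_1=0$, hence $v_1=0$ when $\text{rank}(F_0D)=n_w$;
\item the second block row then reduces to $F_0Dv_2=0$, hence $v_2=0$;
\item continuing in the same way gives $v=0$ after $N$ steps.
\end{itemize}
Conversely, if $F_0Dv_N=0$ with $v_N\neq0$, then $[0;\dots;0;v_N]\in\ker(F\bar D)$. Combined with your Step 1, this proves the corrected equivalence. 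The statement as written can only be established in the ``only if'' direction.
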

\begin{proof}
For the fact that $C$ is a symmetric positive definite matrix, $C_s$ is a symmetric positive definite matrix if and only if $F\bar{D}$ has full column rank. Given the block‑diagonal structure $F=\text{diag}(F_0,\cdots,F_0)$ and the lower‑triangular block‑Toeplitz structure of $\bar{D}$, the full column rank of $F\bar{D}$ is equivalent to the full column rank of its first block column, because all subsequent columns are linear combinations of the first block column (shifted in block rows) and the triangular structure ensures injectivity of the whole map if and only if the first block column is injective. The first block column is exactly $\text{rank}(\mathcal{O}_D)$, therefore $\text{rank}(F\bar{D})=Nn_w$ if and only if $\text{rank}(\mathcal{O}_D)=n_w$.
\end{proof}

This necessary and sufficient condition admits that the disturbance must be uniquely identifiable from the sequence of constraint outputs over the prediction horizon. In other words, the mapping from $w_k$ to $(F_0Dw_k,F_0ADw_k,\cdots,F_0A^{N-1}Dw_k)$ must be injective. This is similar to the observability of the pair $(F_0,A)$ over $N$ steps. The full column rank of $\mathcal{O}_D$ guarantees that every direction in the disturbance space generates a unique image in the constraint output space, ensuring that the adaptive constraint tightening mechanism correctly allocates probability mass to the most critical uncertainty directions. Moreover, if $D$ has full column rank and $(F_0,A)$ is observable, $C_s\succ 0$ can be achieved by choosing a sufficiently long prediction horizon $N$.

Although the optimization problem resulting from Theorem \ref{theorem:quivalent_optimization} is nonconvex in general, it can be decomposed into a sequence of tractable linear and quadratic programming subproblems. This decomposition enables efficient online solution via the cutting‑plane algorithm described in the next section, which progressively refines the approximation of the original problem and converges in a finite number of iterations. Hence, despite the nonconvexity, the proposed method remains computationally feasible and well‑suited for real‑time implementation.

\subsection{Cutting-plane algorithm to solve TSDR-MPC}

The cutting plane method is an iterative algorithm widely used in mixed-integer programming, semi-infinite programming, distributionally robust optimization, and other fields\cite{mehrotra2014cutting,kuhn2025distributionally,doi:10.1137/20M1370227}. It works by progressively adding linear constraints, known as cutting planes, to approximate the feasible region of the original problem, thereby transforming a complex problem into a series of easier subproblems. In each iteration, the algorithm solves the current relaxed problem; if the obtained solution violates some constraints of the original problem, a linear inequality is generated that cuts off this infeasible solution while still preserving all feasible solutions. This inequality is then added to the relaxed problem, and the process repeats until convergence is achieved. 

In this section, we adapt the cutting‑plane framework of \cite{doi:10.1137/20M1370227} to the TSDR‑MPC problem. Unlike \cite{doi:10.1137/20M1370227}, where the first‑stage cost is linear, our first‑stage cost is quadratic, which makes the master problem a convex quadratic program. Furthermore, [6] does not provide a finite‑termination proof; we establish such a guarantee in Theorem \ref{theorem:feasibility}. The dual and separation problems, however, retain a structure similar to that in \cite{doi:10.1137/20M1370227}.
Following this adapted framework, we iteratively solve a master problem to obtain the current control sequence and ambiguity set parameter, and then use dual and separation problems to generate new cutting planes (i.e., new dual extreme points and support points) to improve the approximation of the original problem.
At each iteration, the master problem is a lower‑bound approximation of \eqref{eq_single_min} constructed from a collection of support points and dual extreme points. 
Let $\Xi_k'$ be the set of support points. For each $\xi^\omega\in\Xi_k'$, we maintain a set of $\mathcal J^\omega$ previously generated dual extreme points $\pi^j\in \mathcal J^\omega$.
The master problem is then given by
\begin{align}\label{eq_master_problem1}
&\min_{{\bar{u}}_{k} \in \mathcal{U}',\gamma > \underline{\gamma},\nu^{\sigma},\theta^{\omega}}\big\{ f_1(\bar u_k,\gamma) + \textstyle\frac{1}{n}\sum_{\sigma = 1}^{n}\nu^{\sigma} \big\}\\\label{eq_master_problem2}
&\text{s.t. } \nu^{\sigma} \geq \theta^{\omega} - \gamma c_\xi(\xi_k^{\sigma},\xi^{\omega}),\ \forall\sigma = 1,\cdots,n,\ \forall\omega:\xi^{\omega} \in \Xi_k',\\
&\qquad \theta^{\omega} \geq \max_{\pi^j \in \mathcal J^\omega}{\textstyle\big\{ [ \pi^{j} ]^{T}\left( B_{1}{\bar{u}}_{k} +\xi^{\omega} \right) \big\}},\ \forall\omega:\xi^{\omega} \in \Xi_k'.\label{eq_master_problem3}
\end{align}
where the first stage cost $f_1$ is defined by \eqref{eq_first_stage_cost}, $c_\xi$ is defined by \eqref{eq_c_xi}, and $\xi_k^\sigma$ are given sample points from the empirical distribution. 
\eqref{eq_master_problem2} comes from the dual representation of the Wasserstein ambiguity set; \eqref{eq_master_problem3} is an outer linearization of the second‑stage value function in \eqref{eq_Vc_dual}.
Solving master problem yields a candidate $( {\hat{\bar{u}}}_{k},\hat{\gamma},\hat{\theta} )$.

The dual problem refers to the dual problem of the second-stage optimization \eqref{eq_second_stage_cost}-\eqref{eq_second_stage_constraints}, for every $\xi^{\omega}\in\Xi_k'$, we solve
\begin{align} \label{eq_dual_problem1}
V( {\hat{\bar{u}}}_{k},\xi^{\omega} ) &= \max_{\pi \in \Pi}{\pi^{T}( B_{1}{\hat{\bar{u}}}_{k} + \xi^{\omega} )},\\
\Pi &= \{ \pi | 0 \leq \pi \leq h \},\label{eq_dual_problem2}
\end{align}
and obtain the corresponding optimal dual vertex ${\hat{\pi}}^{\omega}$, which is an extreme point of $\Pi$. This problem is a linear program and can be solved efficiently.

The separation problem is given by \eqref{eq_V_max}. For a fixed candidate $(\hat {\bar u}_k,\hat \gamma)$, the separation problem identifies a new scenario $\xi^\omega$ that most violates the current approximation via \eqref{eq_worstcase_disturbance} and the definition of $\xi_k$ in \eqref{eq_second_stage_constraints}, i.e,
\begin{align}\label{eq_xi_omega}
\xi^\omega=F\bar D C_{1}^{- 1}C_{2}+F\bar{A}x_{k} + G.
\end{align}
If $\xi^\omega\notin\Xi_k'$, we add it to the set of support points together with the associated optimal dual vertex ${\hat{\pi}}^{\omega}$. This generates a new cutting plane for the master problem. Algorithm \ref{Algorithm: TSDR_MPC} summarizes the procedure. 

\begin{algorithm}
\caption{TSDR-EMPC}
\label{Algorithm: TSDR_MPC}
\begin{algorithmic}
\WHILE{$k\leq N_k$}
\STATE 1. Obtain $\xi^\sigma_k,\sigma=1,\cdots,n$ and set $\Xi^n_k=\{\xi^1_k,\cdots,\xi^n_k\}$.
    \STATE 2. Initialize $\Xi'_k \leftarrow \Xi^n_k$, iteration $\leftarrow$ True, and for all $\xi^\omega \in \Xi'_k$ let $\mathcal{J}^\omega \leftarrow \emptyset$ and $\theta^\omega \geq 0$.
    \WHILE{iteration = True}
        \STATE 3. Solve master problem \eqref{eq_master_problem1}-\eqref{eq_master_problem3} and obtain $(\hat{\bar{u}}_k, \hat{\gamma})$.
        \FOR{$\xi^\omega \in \Xi'_k$}
            \STATE Solve dual problem \eqref{eq_dual_problem1}-\eqref{eq_dual_problem2} and obtain $\hat{\pi}^\omega$.
            \STATE Append $\hat{\pi}^\omega$ to $\mathcal{J}^\omega$ if $\hat{\theta}^\omega < [\hat{\pi}^\omega]^T(B_1\hat{\bar{u}}_k + \xi^\omega)$.
        \ENDFOR
        \STATE \textbf{if} for any $\xi^\omega \in \Xi'_k$ a new $\hat{\pi}^\omega$ was added to $\mathcal{J}^\omega$ \textbf{then} go to step 3
        \FOR{$\xi^\sigma _k\in \Xi^n_k$}
            \STATE 4. Solve the seperate problem \eqref{eq_V_max} for fixed $(\hat{\bar{u}}_k, \hat{\gamma})$ and obtain $(\hat{\pi}^\omega, \xi^\omega)$ with $\xi^\omega$ given by \eqref{eq_xi_omega}.
            \IF{$\xi^\omega \notin \Xi'_k$} 
                \STATE $\Xi'_k \leftarrow \Xi'_k \cup \{\xi^\omega\}$
                \STATE $\mathcal{J}^\omega \leftarrow \{\hat{\pi}^\omega\}$
            \STATE Add $\theta^\omega \geq \max_{\pi_j \in \mathcal{J}^\omega}\{[\pi^j]^T(B_1\hat{\bar{u}}_k + \xi^\omega)\}$ to \eqref{eq_master_problem2}. 
            \STATE Add $\nu^\sigma \geq \theta^\omega - \gamma \|\xi^\omega - \xi^\sigma_k\|$, $\forall \sigma= 1,\cdots,n$ to \eqref{eq_master_problem3}.
            \ENDIF
        \ENDFOR
        \STATE \textbf{if} no new support points were added \textbf{then} iteration =  False
    \ENDWHILE
    
$\bar{u}_k^* \leftarrow \hat{\bar{u}}_k$

$k\leftarrow k+1$
\ENDWHILE
\end{algorithmic}
\end{algorithm}

\begin{theorem}\label{theorem:feasibility}
Let Assumption \ref{assumption:ambiguity set} hold. Assume that $\text{rank}(\mathcal{O}_D)=n_w$, and $\mathcal U'$ is nonempty at every time step $k$, Algorithm \ref{Algorithm: TSDR_MPC} terminates after a finite number of iterations and returns an optimal solution $\bar u_k^*$ of the original TSDR‑MPC problem \eqref{eq_minmax_problem}. Moreover, all intermediate problems are feasible.
\end{theorem}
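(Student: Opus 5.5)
The proof splits into three parts: feasibility of every subproblem, finiteness of both loops, and optimality at termination. Throughout we use the reformulation of Theorem \ref{theorem:quivalent_optimization}, which applies because $\text{rank}(\mathcal{O}_D)=n_w$ makes $C_s\succ0$ by Lemma \ref{lemma:turbance_to_constraint_observability}.

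\textbf{Feasibility.} Take any $\bar u_k\in\mathcal U'$, which exists by assumption, and any $\gamma>\underline\gamma$. The dual problem \eqref{eq_dual_problem1}--\eqref{eq_dual_problem2} is feasible and bounded, since $\Pi=\{0\le\pi\le h\}$ is a nonempty polytope; so $\pi=0$ gives $\theta^\omega\ge0$. Every $\theta^\omega$ and $\nu^\sigma$ appears only in lower-bound constraints, so the master problem \eqref{eq_master_problem1}--\eqref{eq_master_problem3} is always feasible. It is also bounded below: $f_1$ is a convex quadratic with $\bar B^T\bar Q\bar B+\bar R\succeq0$, $\mathcal U'$ is compact (it is contained in $\mathcal U^N$, and $\mathcal U_k$ is closed), $\varepsilon\gamma\ge0$, and $\nu^\sigma\ge\theta^{\sigma}\ge0$ through the initial support points $\xi^\sigma_k\in\Xi'_k$ with $c_\xi=0$. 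One technical point is that $\gamma>\underline\gamma$ is an open constraint. I would handle it by replacing it with $\gamma\ge\underline\gamma+\eta$ for small $\eta>0$, or by noting that the objective grows in $\gamma$ near the boundary. For fixed $\hat\gamma>\underline\gamma$ we have $C_1\succ0$, so the separation problem \eqref{eq_V_max} is a maximization of a continuous function over the polytope $\Pi$ and attains its optimum.

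\textbf{Finiteness.} Two counting facts do the work.
\begin{itemize}
\item The inner loop adds to each $\mathcal J^\omega$ only vertices of $\Pi$, and there are at most $2^{n_c N}$ of them. A vertex already in $\mathcal J^\omega$ cannot be added again, because the master constraint \eqref{eq_master_problem3} already enforces $\hat\theta^\omega\ge[\hat\pi^\omega]^T(B_1\hat{\bar u}_k+\xi^\omega)$. So each inner loop ends after finitely many passes.
\item For the outer loop, the objective in \eqref{eq_V_max} is convex in $\pi$ when $C_1\succ0$, since it is $\tfrac12\pi^T F\bar D C_1^{-1}(F\bar D)^T\pi$ plus a linear term. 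A maximizer can therefore be chosen at a vertex $\hat\pi$ of $\Pi$. The generated point $\xi^\omega$ in \eqref{eq_xi_omega} is an affine function of $(\hat\pi,\hat{\bar u}_k,\hat\gamma,\xi^\sigma_k)$.
\end{itemize}

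\textbf{Main obstacle.} The dependence on $(\hat{\bar u}_k,\hat\gamma)$ means the candidate support points are not obviously finite. Here I would follow the finite-termination argument of \cite{doi:10.1137/20M1370227}, adapted to a quadratic first stage. The key step is a valid-cut argument. Suppose no new support point improves the approximation. Then for each $\sigma$ the master value $\nu^\sigma$ equals $\mathcal V(\hat{\bar u}_k,\hat\gamma,\bar w^\sigma_k)$, because the maximizing pair $(\hat\pi,\xi^\omega)$ is already represented. Hence the master lower bound equals the true objective in \eqref{eq_single_min}. If instead a new point is added, the previous master solution is cut off. From there I would argue finiteness by contradiction: with finitely many vertex labels and an outer-approximation sequence whose lower bounds increase monotonically, an infinite run would force repeated vertex/sample pairs. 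Using continuity and compactness of $\mathcal U'\times[\underline\gamma+\eta,\bar\gamma]$, where $\bar\gamma$ bounds $\gamma$ because $\varepsilon\gamma$ is dominated by a feasible objective value, a repeated pair would produce a point already in $\Xi'_k$, which contradicts the rule that only new points are added. This is the step I expect to be delicate. If exact finiteness fails, it may have to be stated with an $\epsilon$-tolerance on the test $\xi^\omega\notin\Xi'_k$.

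\textbf{Optimality.} The master problem is a relaxation of \eqref{eq_single_min}, since each cut underestimates $\mathcal V$. At termination its value is attained by a point at which the relaxation is tight. That point is therefore optimal for \eqref{eq_single_min}, and by Theorem \ref{theorem:quivalent_optimization} it is optimal for \eqref{eq_minmax_problem}.
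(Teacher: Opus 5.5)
The real gap is the finiteness of the outer loop, and it cannot be closed by the argument you sketch or by the paper's. The rest is fine: your feasibility argument holds, the inner-loop count holds (each $\mathcal J^\omega$ only receives vertices of the box $\Pi$, and a vertex already in $\mathcal J^\omega$ cannot pass the test $\hat\theta^\omega<[\hat\pi^\omega]^T(B_1\hat{\bar u}_k+\xi^\omega)$), and you rightly flag the open constraint $\gamma>\underline\gamma$, which the paper passes over.

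\textbf{Why your contradiction fails.} The separation step produces $\xi^\omega=\Phi_{\hat\pi,\sigma}(\hat{\bar u}_k,\hat\gamma)$ via \eqref{eq_xi_omega}. This is a continuous but non-constant function of the master iterate: affine in $\hat{\bar u}_k$, but rational rather than affine in $\hat\gamma$ because of $C_1^{-1}$. The master iterate ranges over a continuum. So the same (vertex, sample) pair met at two different iterates gives two \emph{different} points, and the rule that only new points are added is never contradicted. Compactness and continuity only give a subsequence of support points converging to some $\xi^*$, i.e.\ points that become arbitrarily close, not points that coincide. Likewise, a monotone sequence of master lower bounds bounded by a feasible value converges without having to stop.

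\textbf{The paper does not close it either.} Its proof runs the same steps you sketch: a fixed compact set $\Xi$ containing all support points, a cluster point $\xi^*$, a subsequence with constant vertex $\pi^*$, convergent $(\bar u_k^{(t_i)},\gamma^{(t_i)})$, and continuity of $\Phi_{\pi^*}$. It then closes by asserting that each effective cut strictly increases the master value $J_t(k)\le J_0$, so infinitely many cuts would contradict boundedness. That inference has the same defect, since a strictly increasing bounded sequence can be infinite. There is also no argument to borrow from the two-stage DRO reference, because the paper itself says that reference gives no finite-termination proof.

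\textbf{Why exact termination should not be expected.} Take one sample, a single active vertex $\pi$, $\bar u_k$ held fixed, and drop the $\bar Q$-dependent terms in $\bar w_k$. Then $\mathcal V=c_0+\frac{b}{2\gamma}$ with $b=\pi^TF\bar D C_s^{-1}(F\bar D)^T\pi$. The cut generated at $\hat\gamma$ is exactly the tangent line of this function at $\hat\gamma$, so the outer loop is Kelley's method for $\varepsilon\gamma+c_0+\frac{b}{2\gamma}$. Each new iterate is a kink of the piecewise-linear model: either the harmonic mean of two consecutive tangent points, or twice the largest one. It is therefore strictly new, and for generic data it never equals the minimizer $\sqrt{b/(2\varepsilon)}$. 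A new support point is generated at every pass. Your compactness argument can at most show that accumulation points of the master iterates are optimal.

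\textbf{What a correct statement needs.} The algorithm must change in one of two ways:
\begin{itemize}
\item Use your tolerance: accept $\xi^\omega$ only if its distance to $\Xi'_k$ exceeds some $\eta>0$. Finiteness then follows from a packing argument in the compact set of support points, and the returned $\hat{\bar u}_k$ is optimal up to an error of order $\eta$.
\item Index the cuts by the finitely many (vertex, sample) pairs and impose the exact constraint $\nu^\sigma\ge\phi_{\pi,\sigma}(\bar u_k,\gamma)$. Here $\phi_{\pi,\sigma}$ is obtained by maximizing out $\bar w_k$ in closed form; it is a supremum of functions affine in $(\bar u_k,\gamma)$, hence convex. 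At most $n\,2^{n_cN}$ cuts can ever be added, so termination is exact.
\end{itemize}

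\textbf{A smaller point.} Your claim that every cut underestimates $\mathcal V$, and hence your tightness argument at termination, needs the cuts to carry the whole integrand of \eqref{eq_V}, including the $\bar Q$-dependent terms in $\bar w_k$. The cuts \eqref{eq_master_problem2}--\eqref{eq_master_problem3} as written contain only the $V_c$ part, so you should state the master problem with those terms restored.
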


\begin{proof}
Since $\mathcal U'\neq  \emptyset$, the original problem \eqref{eq_minmax_problem} admits a feasible solution, and we denote its corresponding cost as $J_0$. Therefore, the master problem \eqref{eq_master_problem1}-\eqref{eq_master_problem3}, which is a relaxation of \eqref{eq_single_min} and by Theorem \ref{theorem:quivalent_optimization} the relaxation of \eqref{eq_minmax_problem}, is also feasible. At every iteration $t$, the cost function of the master problem, which contains the term $\epsilon\gamma$, satisfies $J_t(k)\leq J_0$, ensuring the uniform boundedness of $\gamma^{(t)}$. Moreover, Lemma \ref{lemma:turbance_to_constraint_observability} yields positive definite and bounded $C_1$, and $C_2$ is bounded as long as the current state $x_k$ and the sampling disturbance $\bar w_k^\sigma$ are finite. Thus all worsr-case disturbance $\bar w_k^*$ given by \eqref{eq_worstcase_disturbance} lie in a fixed compact set. Under $\text{rank}(\mathcal{O}_D)=n_w$, the map $F\bar D$ is injective and sends compact sets to compact sets. Hence, every support point $\xi^\omega$ belongs to a fixed compact set $\Xi$.

For the dual problem \eqref{eq_dual_problem1}-\eqref{eq_dual_problem2}, the dual feasible set $\Pi$ is a compact polyhedron with finitely many extreme points. This indicates that the dual problem is feasible and that the set of extreme points $\mathcal J^\omega$ is finite. Also, the compactness of $\Pi$ and the boundedness of $\gamma$ guarantee the feasibility of the separate problem \eqref{eq_V_max}. We then show the finiteness of support points in $\Xi'_k$ by contradiction. Suppose that infinitely many distinct support points $\{\xi^{(t)}\}\in\Xi$ are generated. By compactness, there exists a cluster point $\xi^*$ and a subsequence $\xi^{(t_i)}\to\xi^*$. Each $\xi^{(t_i)}$ is obtained from the separation problem \eqref{eq_V_max} using the current master candidate $(\bar u_k^{(t_i)},\gamma^{(t_i)})$ and yields an optimal dual vertex $\pi^{(t_i)}$. Since there are only finitely many dual vertices, we can extract a further subsequence with constant $\pi^{(t_i)}=\pi^*$. The bounded sequence $(\bar u_k^{(t_i)},\gamma^{(t_i)})$ has a convergent subsequence with limit $(\bar u_k^{*},\gamma^{*})$. Denote the explicit formula of \eqref{eq_xi_omega} as $\xi^{(t_i)}=\Phi_{\pi^*}(\bar u_k^{(t_i)},\gamma^{(t_i)})$, where $\Phi_{\pi^*}$ is a continuous function. Hence $\xi^*=\Phi_{\pi^*}(\bar u_k^*,\gamma^*)$. whenever a new support point $\xi^{(t_i)}$ is added to $\Xi'_k$, the separation problem guarantees that the current master solution violates the associated cutting planes \eqref{eq_master_problem2}-\eqref{eq_master_problem3}. 
Adding an effective cut strictly restricts the feasible set of the convex master problem, so the subsequence $\{J_t(k)\}$ increases strictly. Infinitely many such strict increases contradict boundedness. Therefore, only finitely many distinct support points are ever generated, and Algorithm \ref{Algorithm: TSDR_MPC} will terminate after finite iterations and provide the optimal solution $\bar u_k^*$ for the origin problem \eqref{eq_minmax_problem}.

\end{proof}

The cutting-plane algorithm reformulates the intractable TSDR‑MPC minimax problem into a sequence of tractable master, dual, and separation subproblems. By iteratively adding violated dual vertices and new support points, it constructs progressively tighter lower bounds and detects optimality when no further violating scenarios exist. Theorem \ref{theorem:feasibility} guarantees that the algorithm terminates in finitely many iterations, returns a globally optimal solution, and maintains feasibility throughout. This decomposition enables real-time implementation.

\section{Stability Analysis}\label{sec 4}
This chapter aims to prove the stability and asymptotic performance of the closed-loop system under the proposed TSDR-MPC scheme. To achieve this, we need to analyze the evolution of the closed-loop expected cost over time.
The stability argument hinges on establishing a recursive relationship among the optimal expected costs $J_k\left(\bar{u}_k^\ast, \mathbb{P}_k^\ast\right)$ across consecutive time steps. To analyze the closed-loop stability under the proposed TSDR-MPC scheme, we begin by the following lemma, which characterizes a minimax equilibrium property of the optimal solution pair $\left(\bar{u}_k^\ast, \mathbb{P}_k^\ast\right)$.

\begin{lemma}\label{lemma:minmax}
For the minimax problem \eqref{eq_minmax_problem} defined over control sequences $\bar u$ and distributions $\mathbb{P}$, let $J_k(\bar u_, \mathbb{P})$ denote the expected cost under distribution $\mathbb{P}$ with control $\bar u$, i.e.,
\begin{align}\label{eq_conditional_expected_cost}
J_{k}\left( \bar u,\mathbb{P} \right) = \mathbb{E}_{\mathbb{P}}\left[ V_{q}\left( \bar u,{\bar{w}}_{k} \right) + V_{c}\left( \bar u,{\bar{w}}_{k} \right) | \mathcal{F}_{k - 1} \right],
\end{align}
If $\bar u_k^*$ and $\mathbb{P}_k^*$ are the optimal control sequence and the worst-case distribution at time $k$, respectively, then for any other admissible control $\bar u\in\mathcal{U}'$ and any other distribution $\mathbb{P}\in\mathcal{P}_k$, the following inequalities hold
\begin{align}
J_k\left(\bar{u}_k^\ast, \mathbb{P}\right) \le J_k\left(\bar{u}_k^\ast, \mathbb{P}_k^\ast\right) \le J_k\left(\bar{u}, \mathbb{P}_k^\ast\right).
\end{align}
\end{lemma}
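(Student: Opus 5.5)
The plan is to establish that $(\bar u_k^\ast,\mathbb{P}_k^\ast)$ is a saddle point of $J_k$ on $\mathcal{U}'\times\mathcal{P}_k$. I will obtain this from a minimax theorem together with the standard chain of inequalities.

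The left inequality $J_k(\bar u_k^\ast,\mathbb{P})\le J_k(\bar u_k^\ast,\mathbb{P}_k^\ast)$ follows directly from the definitions. I take $\mathbb{P}_k^\ast$ to be a worst-case distribution associated with $\bar u_k^\ast$, i.e. a maximizer of $\mathbb{P}\mapsto J_k(\bar u_k^\ast,\mathbb{P})$ over $\mathcal{P}_k$. Its existence I take from the proof of Theorem \ref{theorem:quivalent_optimization}: for $\gamma>\underline{\gamma}$ the inner maximization is attained at $\bar w_k^\ast=C_1^{-1}C_2$ in \eqref{eq_worstcase_disturbance}. This gives a discrete distribution that moves each sample $\bar w_k^\sigma$ to its maximizer and attains the dual value.

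The right inequality is the substantive part, and I will get it from Sion's minimax theorem. I need to check its hypotheses in order.
\begin{enumerate}
\item $J_k(\cdot,\mathbb{P})$ is convex and continuous in $\bar u$. Indeed, $V_q(\bar u,\bar w)$ is a convex quadratic in $\bar u$. $V_c(\bar u,\bar w)=\max_{\pi\in\Pi}\pi^T(B_1\bar u+\xi_k)$ is a pointwise maximum of affine functions, hence convex and Lipschitz in $\bar u$. Expectation preserves both properties, and it is finite because elements of $\mathcal{P}_k$ have bounded second moments (Lemma \ref{lemma:Gelbrich_bound}).
\item $J_k(\bar u,\cdot)$ is affine in $\mathbb{P}$, hence concave.
\item $\mathcal{P}_k$ is convex, since $W_C(\cdot,\mathbb{Q}_k)$ is convex in its first argument.
\item $\mathcal{U}'=\mathcal{U}^N\cap\mathcal{U}_k$ is compact and convex. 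This holds because $\mathcal{U}$ is compact (and assumed convex), and $\mathcal{U}_k$ is a convex quadratic sublevel set.
\end{enumerate}
Sion's theorem then gives
\[
\min_{\bar u\in\mathcal{U}'}\sup_{\mathbb{P}\in\mathcal{P}_k}J_k(\bar u,\mathbb{P})=\sup_{\mathbb{P}\in\mathcal{P}_k}\min_{\bar u\in\mathcal{U}'}J_k(\bar u,\mathbb{P}).
\]
Denote this common value by $v$.

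If the outer supremum on the right is attained at some $\mathbb{P}_k^\ast$, the chain
\[
v=\min_{\bar u}J_k(\bar u,\mathbb{P}_k^\ast)\le J_k(\bar u_k^\ast,\mathbb{P}_k^\ast)\le\max_{\mathbb{P}}J_k(\bar u_k^\ast,\mathbb{P})=v
\]
forces equalities throughout. In particular $J_k(\bar u_k^\ast,\mathbb{P}_k^\ast)=\min_{\bar u}J_k(\bar u,\mathbb{P}_k^\ast)\le J_k(\bar u,\mathbb{P}_k^\ast)$ for every $\bar u\in\mathcal{U}'$, which is the right inequality. The same chain also shows that $\mathbb{P}_k^\ast$ maximizes $J_k(\bar u_k^\ast,\cdot)$, so it is consistent with the left inequality.

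The main obstacle is the attainment of $\sup_{\mathbb{P}}\min_{\bar u}J_k$. This is the step where noncompactness of $\mathcal{W}^N=\mathbb{R}^{Nn_w}$ matters. My first approach uses the quadratic transport cost with $C_s\succ0$ (Lemma \ref{lemma:turbance_to_constraint_observability}). This makes the Wasserstein ball tight and closed under weak convergence, hence weakly compact. The integrand, however, grows quadratically, so upper semicontinuity of $\mathbb{P}\mapsto\min_{\bar u}J_k(\bar u,\mathbb{P})$ needs uniform integrability. I would obtain it from the condition $\gamma>\underline{\gamma}$, equivalently $C_1\succ0$: the quadratic growth of the cost, $\bar D^T\bar Q\bar D$, is then strictly dominated by the transport penalty. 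As a fallback, I would argue in the finite-dimensional reformulation \eqref{eq_single_min}, where every candidate worst case is a discrete distribution supported on the compact set of points $C_1^{-1}C_2$ identified in the proof of Theorem \ref{theorem:feasibility}. There, continuity and compactness make attainment immediate.
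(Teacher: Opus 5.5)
The paper gives no proof of this lemma. It is stated and then only interpreted, with the saddle-point property and the existence of $\mathbb{P}_k^\ast$ taken for granted. Your minimax-theorem route is the natural argument, and you correctly locate all the content in the right-hand inequality. You also correctly add convexity of $\mathcal{U}$: the paper never assumes it, but the lemma needs it. Because compactness sits on the $\bar u$ side and $J_k$ is affine in $\mathbb{P}$, a Kneser--Fan minimax theorem applies with no topology on $\mathcal{P}_k$. Two steps need repair.

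\textbf{Which distribution is $\mathbb{P}_k^\ast$.} The right inequality need not hold for an arbitrary maximizer of $J_k(\bar u_k^\ast,\cdot)$. These maximizers form a convex set. The map $\bar u\mapsto\max_{\mathbb{P}\in\mathcal{P}_k}J_k(\bar u,\mathbb{P})$ can have a kink at $\bar u_k^\ast$, since $\mathcal{V}$ is a maximum over finitely many vertices $\pi$. At such a kink, the optimality condition holds only for a convex combination of the subgradients belonging to different worst cases, not for each worst case separately. So the distribution you build from $C_1^{-1}C_2$ in your second paragraph is not the right object. Define $\mathbb{P}_k^\ast$ instead as a maximizer of $\mathbb{P}\mapsto\min_{\bar u\in\mathcal{U}'}J_k(\bar u,\mathbb{P})$. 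Your chain of equalities then yields both inequalities, and that paragraph becomes unnecessary.

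\textbf{Attainment of that maximum (the real gap).} The condition $\gamma>\underline\gamma$ only describes the dual domain, so it gives you no uniform integrability. A maximizing sequence can send vanishing mass to infinity along a top generalized eigenvector $v$ of $(\bar D^T\bar Q\bar D,\frac12C_s)$, where quadratic gain and transport cost balance exactly. Along such a sequence, $\mathbb{P}\mapsto\mathbb{E}_{\mathbb{P}}\|\bar w\|^2_{\bar D^T\bar Q\bar D}$ is not weakly upper semicontinuous. Your fallback cites the proof of Theorem \ref{theorem:feasibility}, but that proof bounds $\gamma$ only from above. A bounded $C_1^{-1}$ requires $\gamma$ bounded away from $\underline\gamma$.

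The missing ingredient is as follows.
\begin{itemize}
\item At $\gamma=\underline\gamma$, along $\bar w=\bar w^\sigma+tv$, the quadratic terms in the objective of $\mathcal{V}$ cancel.
\item $V_c$ then contributes slopes $h^T(F\bar Dv)^+$ as $t\to+\infty$ and $h^T(F\bar Dv)^-$ as $t\to-\infty$.
\item Whatever the linear terms are, the two one-sided slopes sum to $h^T|F\bar Dv|>0$, because $h$ has positive entries and $F\bar D$ is injective. Hence $\mathcal{V}(\bar u,\underline\gamma,\bar w^\sigma)=+\infty$ for every $\bar u$.
\item Joint lower semicontinuity in $(\bar u,\gamma)$ and compactness of $\mathcal{U}'$ confine the optimal multiplier to a fixed interval $[\underline\gamma+\delta,\bar\gamma]$, uniformly in $\bar u$.
\item Consequently every $\bar u\in\mathcal{U}'$ has a worst case supported in one compact set $K$.
\end{itemize}
Now apply the minimax theorem on $\mathcal{U}'\times\mathcal{P}_k^K$, where $\mathcal{P}_k^K=\{\mathbb{P}\in\mathcal{P}_k:\operatorname{supp}\mathbb{P}\subseteq K\}$ is weakly compact and $J_k$ is continuous, to obtain a saddle point there. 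Since $\max_{\mathcal{P}_k^K}J_k(\bar u,\cdot)=\max_{\mathcal{P}_k}J_k(\bar u,\cdot)$ for every $\bar u$, the $\mathbb{P}$-component of that saddle point attains $\sup_{\mathcal{P}_k}\min_{\bar u}J_k$. Your chain then finishes the proof.
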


This lemma reveals the “equilibrium” property of the optimal solution $(\bar u_k^*, \mathbb{P}_k^*)$: with the worst-case distribution $\mathbb{P}_k^*$ fixed, the optimal control $\bar u_k^*$ achieves the minimum cost among all admissible controls; and with the optimal control $\bar u_k^*$ fixed, the worst-case distribution $\mathbb{P}_k^*$ leads to the maximum cost among all possible distributions. This property serves as the key theoretical basis for comparing costs under different control sequences and different distributions in the subsequent stability analysis. In particular, if Assumption \ref{assumption:ambiguity set} holds, it directly implies the following two inequalities, which are frequently used in the derivation
\begin{align}\label{eq_minmax1}
J_{k+1}\left(\bar{u}_{k+1}^\ast, \mathbb{P}_{k+1}^\ast\right) &\le J_{k+1}\left(\bar{u}_{k+1}^c, \mathbb{P}_{k+1}^\ast\right) ,\\\label{eq_minmax2}
J_k\left(\bar{u}_k^\ast, \mathbb{O}\right) &\le J_k\left(\bar{u}_k^\ast, \mathbb{P}_k^\ast\right).
\end{align}

A natural derivation path is constructed as follows: First, compare the cost at time $k+1$ under the optimal control and the worst-case distribution $J_{k+1}\left(\bar u_{k+1}^\ast, \mathbb{P}_{k+1}^\ast\right)$ with the cost of a carefully constructed candidate control sequence under the same distribution $J_{k+1}\left(\bar u_{k+1}^c, \mathbb{P}_{k+1}^\ast\right) $. Then, relate the cost of this candidate sequence further to the cost of the optimal control at time $k$ under the zero-disturbance distribution $J_k\left(\bar u_k^\ast, \mathbb{O}\right)$. Finally, connect this nominal cost to the actual optimal cost at time $k$ under the worst-case distribution $J_k\left(\bar u_k^\ast, \mathbb{P}_k^\ast\right)$.
Lemma \ref{lemma:minmax} directly yields two key inequalities \eqref{eq_minmax1}-\eqref{eq_minmax2}. Consequently, the crucial step for the stability proof reduces to establishing a quantifiable relationship between the cost of the candidate solution under the worst-case distribution $J_{k+1}\left(\bar u_{k+1}^c, \mathbb{P}_{k+1}^\ast\right) $ and the nominal optimal cost at the previous time step $J_k\left(\bar u_k^\ast, \mathbb{O}\right)$.

Following this logic and building upon the recursive inequality for the quadratic cost $V_q(\bar u_k^*,\bar w_k)$ derived in Section \ref{sec3}, our subsequent focus shifts to analyzing the behavior of the constraint violation penalty term $V_c(\bar u_k^*,\bar w_k)$. The term $V_c(\bar u_k^*,\bar w_k)$ measures the cost incurred due to constraint violations caused by disturbances, and its expectation directly impacts the total cost $J_k\left(\bar u_k^\ast, \mathbb{P}_k^\ast\right)$. To bound it effectively, we examine two distinct cases based on whether the current state $x_k$ lies within the nominal feasible set. This detailed discussion culminates in Proposition \ref{proposition_constraint_violation_cost}, which provides a unified upper bound for $V_c(\bar u_k^*,\bar w_k)$.

\begin{proposition}\label{proposition_constraint_violation_cost}
Suppose Assumption \ref{assumption_dynamic_bound}-\ref{assumption:ambiguity set} hold.
For any given positive constant $\epsilon_{c1} > 0$, the constraint violation cost $V_c(\bar u_k^*,\bar w_k)$ defined in \eqref{eq_second_stage_cost} satisfies the following inequality,
\begin{align*}
V_{c}( {\bar{u}}_{k}^{*},{\bar{w}}_{k} ) 
&\leq 3\epsilon_{c1}\| h \|^{2} 
+ \textstyle\frac{1}{4\epsilon_{c1}}L_{B_1}^{2}Nu_{u}^{2} 
+ \frac{1}{4\epsilon_{c1}}\| F\bar{D}{\bar{w}}_{k} \|^{2} 
\\&+ \textstyle\frac{1}{4\epsilon_{c1}}\| F\bar{A}Dw_{k - 1} \|^{2}.
\end{align*}
where $\|F\bar B\|\leq L_{B_1}$.
\end{proposition}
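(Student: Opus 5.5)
We plan to bound $V_c$ through its dual representation \eqref{eq_Vc_dual}. Since $\Pi=\{\pi\mid 0\le\pi\le h\}$ is bounded, there is an optimal dual vertex $\pi^*$ with $\|\pi^*\|\le\|h\|$, and
\begin{align*}
V_c(\bar u_k^*,\bar w_k)=(\pi^*)^T\big(B_1\bar u_k^*+F\bar A x_k+G+F\bar D\bar w_k\big).
\end{align*}
The idea is to split the argument vector into pieces, each controlled by a Young inequality $\pi^Ta\le\epsilon_{c1}\|\pi\|^2+\frac{1}{4\epsilon_{c1}}\|a\|^2$. We will use three such applications, which accounts for the factor $3\epsilon_{c1}\|h\|^2$, plus one piece that is shown to be nonpositive.

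The decomposition compares with a reference nominal trajectory that satisfies the constraints. Write $x_k=Ax_{k-1}+Bu_{k-1}+Dw_{k-1}$ and set $z_k:=Ax_{k-1}+Bu_{k-1}$, the disturbance-free one-step prediction. Then $F\bar Ax_k=F\bar Az_k+F\bar ADw_{k-1}$. We also insert a reference input sequence $\bar u^r$ whose nominal prediction from $z_k$ respects \eqref{eq_constraints}, i.e.\ $F\bar Az_k+F\bar B\bar u^r+G\le0$. The natural choice is the shifted previous optimal sequence, or $\bar u_f$ if one uses Assumption \ref{assumption:admissible_control_sequence} with $\mathbb X_f$ inside the constraint set. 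This gives
\begin{align*}
V_c\le (\pi^*)^T\big(F\bar Az_k+F\bar B\bar u^r+G\big)+(\pi^*)^TB_1(\bar u_k^*-\bar u^r)+(\pi^*)^TF\bar D\bar w_k+(\pi^*)^TF\bar ADw_{k-1}.
\end{align*}
Because $\pi^*\ge0$, the first term is $\le0$. Each of the remaining three terms receives a Young bound.
\begin{itemize}
\item The input term uses $\|B_1(\bar u_k^*-\bar u^r)\|^2\le L_{B_1}^2Nu_u^2$, by Assumption \ref{assumption_dynamic_bound}(b) applied blockwise.
\item The other two terms give exactly $\frac{1}{4\epsilon_{c1}}\|F\bar D\bar w_k\|^2$ and $\frac{1}{4\epsilon_{c1}}\|F\bar ADw_{k-1}\|^2$.
\end{itemize}

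The paper mentions two cases, depending on whether $x_k$ lies in the nominal feasible set. We handle them as follows.
\begin{itemize}
\item \textbf{Case 1: $x_k$ is nominally feasible.} We can take $z_k$ replaced by $x_k$ and drop the $w_{k-1}$ term altogether, which is trivially dominated by the stated bound.
\item \textbf{Case 2: $x_k$ is not nominally feasible.} We use the decomposition through $z_k$ described above.
\end{itemize}
The unified statement is then the maximum of the two bounds.

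The main obstacle is justifying the existence of $\bar u^r\in\mathcal U^N$ making the first term nonpositive; the paper avoids explicit invariant sets. We would first try to argue that the previous step's prediction $z_k$ is nominally constraint-admissible, i.e.\ it lies in a set from which $\bar u_f$ keeps constraints, via Assumption \ref{assumption:admissible_control_sequence}. If that fails, we would instead split $q=F\bar Az_k+F\bar B\bar u^r+G$ into $q^+$ and $q^-$ and absorb the residual $q^+$ using the remaining $\epsilon_{c1}\|h\|^2$ budget. This route may require adjusting constants, but the structure of the bound would be unchanged.
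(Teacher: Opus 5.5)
Your argument matches the paper's proof, in slightly cleaner form. It has the same two cases, the same reference point $x_k-Dw_{k-1}=Ax_{k-1}+Bu_{k-1}$ in Case 2, a nominally admissible reference sequence that makes one term nonpositive, and three Young inequalities giving $3\epsilon_{c1}\|h\|^2$. Fixing one optimal vertex $\pi^*$ with $\|\pi^*\|\le\|h\|$, instead of using subadditivity of $\max_{\pi\in\Pi}$, is only cosmetic. Given $\bar u^r\in\mathcal U^N$ with $F\bar A z_k+F\bar B\bar u^r+G\le 0$, your derivation is correct.

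The step you flag is a real gap, and the paper has it too rather than resolving it. Its Case 2 simply takes $\bar u_k^{zc}$ ``by shifting from a feasible solution at the previous time step'' and asserts $F[\bar A(x_k-Dw_{k-1})+\bar B\bar u_k^{zc}]+G\le 0$. That tacitly assumes such a sequence existed at time $k-1$ and that the applied input $u_{k-1}=u_{0|k-1}^*$ was its first element.

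Neither of your fixes can supply this. Assumption~\ref{assumption:admissible_control_sequence} only makes $\mathbb X_f$ invariant under $u_f$. It does not place $z_k$ in $\mathbb X_f$, nor $\mathbb X_f$ inside $\{x: F_0x+G_0\le 0\}$, and with unbounded disturbances nothing confines $z_k$. The $q^+$ fallback fails because, even for the best choice of $\bar u^r\in\mathcal U^N$, $h^Tq^+$ is unbounded as $z_k$ moves away from the nominally feasible states. Your budget $\epsilon_{c1}\|h\|^2$ is a fixed constant. Since $\bar w_k$ and $w_{k-1}$ may be (nearly) zero in that situation, no other term on the right-hand side can absorb it, so the inequality can fail without an extra hypothesis.

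You should therefore state, as an explicit assumption, that a nominally admissible reference sequence exists from $x_k$ or from $Ax_{k-1}+Bu_{k-1}$. This is effectively what the paper assumes, and under it your proof goes through as written.
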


\begin{proof}
The proof proceeds by considering two cases, depending on whether the current state $x_k$ lies inside the nominal feasible set.

1) If $x_k$ is inside the nominal feasible set, there exists an admissible control sequence $u_k^z$ that satisfies all constraints and steers the nominal state into the forward invariant terminal set. We relate $V_c(u_k^*, w_k)$ to $V_c(u_k^z, 0)$. By definition,
\begin{align*}
V_{c}( {\bar{u}}_{k}^{*},{\bar{w}}_{k} ) &= \max_{\pi \in \Pi}\{ \pi^{T}( F\bar{B}{\bar{u}}_{k}^{*} + F\bar{D}{\bar{w}}_{k} + F\bar{A}x_{k} + G )\},
\\
V_{c}( {\bar{u}}_{k}^{z},0 ) &= \max_{\pi \in \Pi}\{ \pi^{T}( F\bar{B}{\bar{u}}_{k}^{z} + F\bar{A}x_{k} + G ) \}=0.
\end{align*}
Using properties of the maximum function, we have,
\begin{align*}
&V_{c}( {\bar{u}}_{k}^{*},{\bar{w}}_{k} ) -V_{c}( {\bar{u}}_{k}^{z},0 )
\leq  
 \max_{\pi \in \Pi}\{ \pi^{T}( F\bar{B}( {\bar{u}}_{k}^{*} - {\bar{u}}_{k}^{z} ) + F\bar{D}{\bar{w}}_{k} ) \} \\
&\leq h^{T}\max\{ 0,F\bar{B}\left( {\bar{u}}_{k}^{*} - {\bar{u}}_{k}^{z} \right) \} 
+ h^{T}\max\{ 0,F\bar{D}{\bar{w}}_{k} \} \\
&\leq \| h \|(\| F\bar{B}( {\bar{u}}_{k}^{*} - {\bar{u}}_{k}^{z} ) \| +\| F\bar{D}{\bar{w}}_{k} \|) .
\end{align*}

Applying Young's inequality to the two terms above, and noting that $V_c(u_k^z, 0) = 0$ for the feasible solution, we obtain,
\begin{align*}
V_{c}( {\bar{u}}_{k}^{*},{\bar{w}}_{k} ) \leq 2\epsilon_{c1}\| h \|^{2} 
+ \textstyle\frac{1}{4\epsilon_{c1}}L_{B_1}^{2}Nu_{u}^{2} 
+ \frac{1}{4\epsilon_{c1}}\| F\bar{D}{\bar{w}}_{k} \|^{2}.
\end{align*}

2) If $x_k$ is outside the nominal feasible set, consider a candidate control sequence $\bar u_k^{zc}$ constructed by shifting from a feasible solution at the previous time step $k-1$ as \eqref{eq_candidate_solution}. Its feasibility implies
\begin{align*}
F[ \bar{A}( x_{k} - Dw_{k - 1} ) + \bar{B}{\bar{u}}_{k}^{zc} ] + G \leq 0,
\end{align*}
and hence
$V_c(u_k^{zc}, 0) \leq \|h\| \|F \bar A D w_{k-1}\|$. 
It follows that
\begin{align*}
&V_{c}( {\bar{u}}_{k}^{*},{\bar{w}}_{k} )
= \max_{\pi \in \Pi}\big\{ \pi^{T}( F \bar{A} x_{k} + F\bar{B}{\bar{u}}_{k}^{zc} + G)  \\&+ \pi^{T}(  F\bar{B}({\bar u_k^*-\bar{u}}_{k}^{zc})+F\bar D\bar w_k) \big\} \\
&\leq V_{c}( {\bar{u}}_{k}^{zc},0 ) + \max_{\pi \in \Pi}\{ \pi^{T}(  F\bar{B}({\bar u_k^*-\bar{u}}_{k}^{zc})+F\bar D\bar w_k) \}.
\end{align*}
The first term is exactly $V_c(\bar u_k^{zc}, 0)$, and the second term is handled identically to Case 1. Thus
\begin{align*}
V_{c}( {\bar{u}}_{k}^{*},{\bar{w}}_{k} )\leq V_{c}( {\bar{u}}_{k}^{zc},0  \|(\| F\bar{B}( {\bar{u}}_{k}^{*} - {\bar{u}}_{k}^{zc} ) \| +\| F\bar{D}{\bar{w}}_{k} \|)
\end{align*}
Substituting the bound for $V_c(\bar u_k^{zc}, 0)$ and applying Young's inequality separately to each of the three terms on the right yield that
\begin{align*}
V_{c}( {\bar{u}}_{k}^{*},{\bar{w}}_{k} ) 
\leq& 3\epsilon_{c1}\| h \|^{2} 
+ \textstyle\frac{1}{4\epsilon_{c1}}L_{B_1}^{2}Nu_{u}^{2} 
+ \frac{1}{4\epsilon_{c1}}\| F\bar{D}{\bar{w}}_{k} \|^{2} \\&
+ \textstyle\frac{1}{4\epsilon_{c1}}\| F\bar{A}Dw_{k - 1} \|^{2}. 
\end{align*}
Since the bound in Case 2 is strictly larger than that in Case 1, the more general bound presented in the proposition holds for all cases.
\end{proof}

Building upon the upper bound for the constraint violation cost $V_c(\bar u_k^*,\bar w_k)$ established in Proposition \ref{proposition_constraint_violation_cost}, we now integrate this result with the quadratic cost bound from Proposition \ref{proposition:upper_bound_for_quadratic_cost} to derive an overall upper bound for the optimal expected cost $J_{k}\left( {\bar{u}}_{k}^{*},\mathbb{P}_{k}^{*} \right)$.
Recall the definition of the expected cost from \eqref{eq_conditional_expected_cost}. Applying the bounds from Propositions \ref{proposition:upper_bound_for_quadratic_cost} and \ref{proposition_constraint_violation_cost}, we derive the following upper bound for the optimal expected cost at time $k$,

\begin{align}\nonumber
J_{k}( {\bar{u}}_{k}^{*},\mathbb{P}_{k}^{*} ) 
=& \mathbb{E}_{\mathbb{P}_{k}^{*}}[ V_{q}( {\bar{u}}_{k}^{*},{\bar{w}}_{k} ) + V_{c}\left( {\bar{u}}_{k}^{*},{\bar{w}}_{k} \right) | \mathcal{F}_{k - 1} ] \\\nonumber
\leq& 2\textstyle\frac{\lambda_{\max}(P)}{\lambda_{\min}(Q)}l( x_{k},u_{k} ) + c_{1} + 3\epsilon_{c1}\| h \|^{2} \\&
+ \mathbb{E}_{\mathbb{P}_{k}^{*}}[ g_{1}( {\bar{w}}_{k} ) + \textstyle\frac{1}{4\epsilon_{c1}}\| F\bar{D}{\bar{w}}_{k} \|^{2} | \mathcal{F}_{k - 1} ] \nonumber\\
& 
+ \textstyle\frac{1}{4\epsilon_{c1}}L_{B_1}^{2}Nu_{u}^{2} 
+ \frac{1}{4\epsilon_{c1}}\| F\bar{A}Dw_{k - 1} \|^{2}.\label{eq_J_k_upper_bound}
\end{align}
This bound decomposes the cost into a term proportional to the current stage cost, disturbance-dependent terms and constant offsets. 
To analyze the evolution of the expected cost, we consider the difference between $J_{k+1} (\bar{u}_{k+1}^\ast, \mathbb{P}_{k+1}^\ast) - J_k(\bar{u}_k^\ast, \mathbb{P}_k^\ast)$. We begin by establishing an intermediate inequality using the properties of the minimax problem and the candidate control sequence.
From Lemma \ref{lemma:minmax}, we have
$
J_{k+1} (\bar{u}_{k+1}^\ast, \mathbb{P}_{k+1}^\ast) \leq J_{k+1} (\bar{u}_{k+1}^c, \mathbb{P}_{k+1}^\ast),
$
where $\bar{u}_{k+1}^c$ is the candidate control sequence defined in \eqref{eq_candidate_solution}. This inequality holds because $\bar{u}_{k+1}^*$ is the minimizer for the problem at time $k+1$, whereas $\bar{u}_{k+1}^c$ is a feasible candidate.
Now, we express $J_{k+1} (\bar{u}_{k+1}^c, \mathbb{P}_{k+1}^\ast)$, using its definition and then apply the recursive inequality for the quadratic cost from Proposition \ref{proposition:recurrence_inequality__for_quadratic_cost}. We have
\begin{align*}
&\textstyle\frac{1}{1+\epsilon} J_{k+1}(\bar{u}_{k+1}^\ast, \mathbb{P}_{k+1}^\ast)  \le \frac{1}{1+\epsilon} V_c(\bar{u}_{k+1}^c,\bar{w}_{k+1})\\&\textstyle+V_q\left(\bar{u}_k^\ast, 0\right) - l(x_k, u_k)  + \frac{1}{\epsilon} \sum_{i=0}^{N-2} \mathbb{E}_{\mathbb{P}_{k+1}^\ast} [ \|\delta_i\|_Q^2 | \mathcal{F}_k ] \\
&+ \textstyle\frac{1}{1+\epsilon} \mathbb{E}_{\mathbb{P}_{k+1}^\ast} [ g_2(\delta_{N-1}) + \delta_f ] .
\end{align*}

To further simplify, we bound $V_c(\bar{u}_{k+1}^c\bar{w}_{k+1})$ using a similar approach as in Proposition 3, which yields
\begin{align*}
V_c(\bar{u}_{k+1}^c, \bar{w}_{k+1}) &\le V_c(\bar{u}_k^\ast, 0) + h^T \max \{0, F\bar{D}\bar{w}_{k+1}\} \\
&\textstyle\le V_c(\bar{u}_k^\ast, 0) + \epsilon_{c2} \|h\|^2  + \frac{1}{4\epsilon_{c2}} \|F\bar{D}\bar{w}_{k+1}\|^2,
\end{align*}
for all $\epsilon_{c2} > 0$.
Note that for the zero-distribution $\mathbb{O}$, it follows by Lemma \ref{lemma:minmax} that
$
    J_k(\bar{u}_k^\ast, \mathbb{O}) \le J_k(\bar{u}_k^\ast, \mathbb{P}_k^\ast).
$
Then, substituting this into the previous inequality yields,
\begin{align*}
&\textstyle\frac{1}{1+\epsilon} J_{k+1}(\bar{u}_{k+1}^\ast, \mathbb{P}_{k+1}^\ast) \le J_k(\bar{u}_k^\ast, \mathbb{P}_k^\ast) - l(x_k, u_k) \\
&+ \textstyle\frac{1}{\epsilon} \sum_{i=0}^{N-2} \mathbb{E}_{\mathbb{P}_{k+1}^\ast} [\|\delta_i\|_Q^2 | \mathcal{F}_k] + \frac{1}{1+\epsilon} \mathbb{E}_{\mathbb{P}_{k+1}^\ast} \big[ g_2 (\delta_{N-1}) + \delta_f  \\
& \textstyle + \frac{1}{4\epsilon_{c2}} \|F\bar{D}\bar{w}_{k+1}\|^2 | \mathcal{F}_k \big] + \frac{1}{1+\epsilon} \epsilon_{c2} \|h\|^2.
\end{align*}
Substituting this bound and multiplying both sides by $1+\epsilon$ yield that
\begin{align*}
&J_{k+1} (\bar{u}_{k+1}^\ast, \mathbb{P}_{k+1}^\ast) - J_k(\bar{u}_k^\ast, \mathbb{P}_k^\ast) \le\\
& \textstyle\epsilon J_k(\bar{u}_k^\ast, \mathbb{P}_k^\ast) - (1+\epsilon) l(x_k, u_k)  + \frac{1+\epsilon}{\epsilon} \sum_{i=0}^{N-2} \mathbb{E}_{\mathbb{P}_{k+1}^\ast} [\|\delta_i\|_Q^2 | \mathcal{F}_k] \\&\textstyle
+ \mathbb{E}_{\mathbb{P}_{k+1}^\ast} [ g_2 (\delta_{N-1}) + \delta_f + \frac{1}{4\epsilon_{c2}} \|F\bar{D}\bar{w}_{k+1}\|^2 | \mathcal{F}_k ] + \epsilon_{c2} \|h\|^2
\end{align*}
Substituting the upper bound for $J_k(\bar{u}_k^\ast, \mathbb{P}_k^\ast)$ in \eqref{eq_J_k_upper_bound} yields
\begin{align}\nonumber
J_{k+1} &(\bar{u}_{k+1}^\ast, \mathbb{P}_{k+1}^\ast) - J_k(\bar{u}_k^\ast, \mathbb{P}_k^\ast) \\\nonumber
&\textstyle\le [ 2\epsilon \frac{\lambda_{\max}(P)}{\lambda_{\min}(Q)} - (1+\epsilon) ] l(x_k, u_k) + \epsilon [c_1 + 3\epsilon _{c1} \|h\|^2 \\\nonumber
&\textstyle\quad + \frac{1}{4\epsilon_{c1}} L_{B_1}^2Nu_u^2  + \frac{1}{4\epsilon_{c1}} \|F \bar A D{w}_{k-1}\|^2]+ \epsilon_{c2} \|h\|^2 \\\nonumber
&\textstyle\quad + \frac{1+\epsilon}{\epsilon} \sum_{i=0}^{N-2} \mathbb{E}_{\mathbb{P}_{k+1}^\ast} [\|\delta_i\|_Q^2 | \mathcal{F}_k] \\\nonumber
&\textstyle\quad + \mathbb{E}_{\mathbb{P}_{k+1}^\ast} [ g_2 (\delta_{N-1}) + \delta_f + \frac{1}{4\epsilon_{c2}} \|F\bar{D}\bar{w}_{k+1}\|^2 | \mathcal{F}_k ] \\
&\textstyle\quad + \mathbb{E}_{\mathbb{P}_k^\ast} [ g_1 (\bar w_k) + \frac{1}{4\epsilon_{c1}} \|F\bar{D}\bar{w}_k\|^2 | \mathcal{F}_{k-1} ]. \label{eq_cost_difference_inequality}
\end{align}

The above inequality contains several conditional expectation terms that depend on the disturbance sequences and their interaction with the system state. To facilitate the stability analysis, we must bound these terms collectively by expressions involving the current state norm $\|x_k\|^2$ and the statistical measures of the disturbances. The culmination of this term-by-term analysis is the following Proposition, which provides an upper bound for the entire collection of conditional expectation terms.

\begin{proposition}\label{propositopn:conditional_expectation_term_bound}
 Suppose all Assumption \ref{assumption_dynamic_bound}-\ref{assumption:admissible_control_sequence} hold. There exist positive  constants $k_0$, $k_1$, $k_2$, $k_{31}$, $k_{32}$, $k_4$ and $k_5$ such that the following inequality holds,
\begin{align}\nonumber
&\mathbb{E}_{\mathbb{P}_k^\ast} [g_1 (\bar w_k) + \textstyle\frac{1}{4\epsilon_{c1}} \|F\bar{D}\bar{w}_k\|^2 | \mathcal{F}_{k-1}] +\sum_{i=0}^{N-2} \mathbb{E}_{\mathbb{P}_{k+1}^\ast} [\|\delta_i\|_Q^2 | \mathcal{F}_k] \\&+ \mathbb{E}_{\mathbb{P}_{k+1}^\ast} [g_2 (\delta_{N-1}) + \delta_f + \textstyle\frac{1}{4\epsilon_{c2}} \|F\bar{D}\bar{w}_{k+1}\|^2 | \mathcal{F}_k] \nonumber\\
&\le k_0 \|x_k\|^2 + k_1 \|w_k\|^2 + k_2 \|\hat \mu_{k+1}\|^2 + k_{31} \|\hat \mu_k\| + k_{32} \|\hat \mu_k\|^2 \nonumber\\
&+ k_4 \text{tr}(\hat \Sigma_{k+1}) + k_5 \text{tr}(\hat \Sigma_k),\label{eq_disturbed_terms}
\end{align}
for all positive constants $\epsilon_{c1}, \epsilon_{c2}>0$.
\end{proposition}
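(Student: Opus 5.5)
The proof goes term by term. Each conditional expectation is reduced to first and second moments of the worst-case distributions $\mathbb{P}_k^*$ and $\mathbb{P}_{k+1}^*$, and these moments are then controlled by Lemma \ref{lemma:Gelbrich_bound}. The basic identity is $\mathbb{E}[\|M\bar w\|^2]=\text{tr}(M\hat\Sigma M^T)+\|M\hat\mu\|^2\le \|M\|^2(\text{tr}(\hat\Sigma)+\|\hat\mu\|^2)$. We apply it with $M=\bar Q^{1/2}\bar D$, $F\bar D$, $P^{1/2}D$, and so on. Every quadratic-in-disturbance term then contributes to $k_4\,\text{tr}(\hat\Sigma_{k+1})$, $k_5\,\text{tr}(\hat\Sigma_k)$, $k_2\|\hat\mu_{k+1}\|^2$ or $k_{32}\|\hat\mu_k\|^2$. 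The constants are built from $L_A,L_B,L_D,\lambda_{\max}(P),\lambda_{\max}(Q)$, $\|F\|$, $N$ and $\epsilon_{c1},\epsilon_{c2}$.

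\textbf{Step 1 (time $k$ terms).} In $g_1$ from \eqref{eq_g1}, the quadratic part gives $\|\bar Q^{1/2}\bar D\|^2(\text{tr}\hat\Sigma_k+\|\hat\mu_k\|^2)$. The cross term has expectation $2(\bar Ax_k+\bar B\bar u_k^*)^T\bar Q\bar D\hat\mu_k$. Because $\bar u_k^*\in\mathcal U^N$ with $\mathcal U$ compact, $\|\bar B\bar u_k^*\|$ is bounded by a constant, so the $\bar B\bar u_k^*$ part is linear in $\|\hat\mu_k\|$ and goes into $k_{31}$. The $\bar Ax_k$ part is handled by Young's inequality, $2a^Tb\le\|a\|^2+\|b\|^2$, which splits it into $\|x_k\|^2$ (into $k_0$) and $\|\hat\mu_k\|^2$ (into $k_{32}$). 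The term $\frac{1}{4\epsilon_{c1}}\mathbb{E}\|F\bar D\bar w_k\|^2$ is handled the same way.

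\textbf{Step 2 (time $k+1$ terms).} We first write $\delta_i$ explicitly. The candidate trajectory $x^c_{i|k+1}$ starts from $x_{k+1}=Ax_k+Bu_k+Dw_k=z^*_{1|k}+Dw_k$, applies the shifted inputs, and is driven by $w_{0|k+1},\dots,w_{i-1|k+1}$. Subtracting $z^*_{i+1|k}$ gives
\[
\delta_i=A^iDw_k+\sum_{j<i}A^{i-1-j}Dw_{j|k+1}.
\]
Hence $\mathbb{E}\|\delta_i\|_Q^2\le c(\|w_k\|^2+\text{tr}\hat\Sigma_{k+1}+\|\hat\mu_{k+1}\|^2)$, and the $\|w_k\|^2$ part goes into $k_1$. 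The same bound covers $V_f(\delta_{N-1})$, the term $\|Dw_{N-1|k+1}\|_P^2$ in $\delta_f$, and $\frac{1}{4\epsilon_{c2}}\|F\bar D\bar w_{k+1}\|^2$.

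\textbf{Step 3 (cross terms, the main obstacle).} The difficulty is $2(z^*_{N|k})^TP\,\mathbb{E}[\delta_{N-1}]$ together with $2\,\mathbb{E}[(x^c_{N-1|k+1})^TPDw_{N-1|k+1}]$. They couple the state to the non-zero worst-case mean $\hat\mu_{k+1}$, which is exactly the problem flagged after Proposition \ref{proposition:recurrence_inequality__for_quadratic_cost}.

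\emph{First cross term.} Here we use the terminal constraint \eqref{eq_terminal_constraints}: $\|z^*_{N|k}\|^2\le l_c\|x_k\|^2$, since $\bar u_k^*\in\mathcal U_k$. The expectation $\mathbb{E}[\delta_{N-1}]$ is linear in $w_k$ and $\hat\mu_{k+1}$. Young's inequality then bounds the term by $l_c\lambda_{\max}(P)\|x_k\|^2+c(\|w_k\|^2+\|\hat\mu_{k+1}\|^2)$.

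\emph{Second cross term.} By independence across time, $w_{N-1|k+1}$ is independent of the earlier disturbances. So the expectation factors into $\mathbb{E}[x^c_{N-1|k+1}]^TPD\hat\mu_{k+1}$. We decompose $x^c_{N-1|k+1}=z^*_{N|k}-A^{N-1}Bu^*_{N-1|k}\ldots$; more simply, $x^c_{N-1|k+1}=z^*_{N|k}+\delta_{N-1}$ up to the known shift. Its nominal part is bounded by $\sqrt{l_c}\|x_k\|$, and its disturbance part by $\|w_k\|$ and $\|\hat\mu_{k+1}\|$. Young's inequality then yields contributions to $k_0\|x_k\|^2$, $k_1\|w_k\|^2$ and $k_2\|\hat\mu_{k+1}\|^2$.

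This step is where I expect the delicate bookkeeping: every $x_k$-dependence must appear only through $\|x_k\|^2$ (via $l_c$, $L_A$ and the compactness of $\mathcal U$), and no unbounded product of state and mean may remain.

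\textbf{Step 4.} Summing all the contributions defines $k_0,\dots,k_5$ explicitly and gives \eqref{eq_disturbed_terms}. If a finite numerical bound is wanted, the moments can afterwards be replaced using \eqref{eq_mu}--\eqref{eq_covariance}.
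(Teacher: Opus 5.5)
Your plan follows the paper's proof essentially step for step. The shared ingredients are second-moment bounds under $\mathbb{P}_k^*$ and $\mathbb{P}_{k+1}^*$, the explicit $\delta_i$, compactness of $\mathcal U$ for the $k_{31}\|\hat\mu_k\|$ term, and the terminal constraint $\|z^*_{N|k}\|^2\le l_c\|x_k\|^2$, which keeps constants and terms linear in $\|\hat\mu_{k+1}\|$ out. Two steps need repair.

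\emph{The factorization in Step 3 is not justified.} Independence across time is a property of the true disturbance process. By contrast, $\mathbb{P}_{k+1}^*$ is a distribution on $\mathcal W^N$ obtained by transporting whole horizon samples, so its blocks $w_{0|k+1},\dots,w_{N-1|k+1}$ are in general correlated. The decomposition $x^c_{N-1|k+1}=z^*_{N|k}+\delta_{N-1}$ is exact by definition of $\delta_{N-1}$, with no extra shift. In it, only the $\mathcal F_k$-measurable pieces $z^*_{N|k}$ and $A^{N-1}Dw_k$ can be pulled out of the expectation. For the remaining term $2\mathbb{E}_{\mathbb{P}_{k+1}^*}[(\sum_{j=0}^{N-2}A^{N-2-j}Dw_{j|k+1})^TPDw_{N-1|k+1}\,|\,\mathcal F_k]$, use $2|a^Tb|\le\|a\|^2+\|b\|^2$ together with your second-moment identity. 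This only adds a multiple of $\|\hat\mu_{k+1}\|^2+\text{tr}(\hat\Sigma_{k+1})$ to $k_2$ and $k_4$, which the statement allows; the paper's bound for this term also carries the trace.

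\emph{More importantly, the size of $k_0$ matters.} Your unweighted Young splitting puts $l_c\lambda_{\max}(P)$ into $k_0$ from the first cross term alone, plus an $\|\bar A\|$-dependent piece from $g_1$. This proves the bare existence claim, but it does not give what the proposition is used for. The paper deliberately weights each split, with $\frac{1}{12l_c}\lambda_{\min}(Q)$ on the $\|z^*_{N|k}\|^2$ side and $\frac{1}{12}\lambda_{\min}(Q)$ on the $\|x_k\|^2$ side, to obtain $k_0=\frac14\lambda_{\min}(Q)$. Theorem~\ref{theorem_asymptotic_performance_bound} relies on exactly this: it absorbs $k_0\|x_k\|^2\le\frac14 l(x_k,u_k)$ into $-(1+\epsilon)l(x_k,u_k)$ and keeps the decrease $-(\frac34-c_l\epsilon)l(x_k,u_k)$. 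Your $k_0$ grows with $l_c$, which the paper recommends taking large. In the simulation example it is already about $4.8$ against $\lambda_{\min}(Q)=1$, so that step would fail.

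The fix is cheap. As your own bookkeeping shows, $\|x_k\|$ only enters multiplied by $\|w_k\|$, $\|\hat\mu_k\|$ or $\|\hat\mu_{k+1}\|$, never as a standalone square. Using $2ab\le\eta a^2+\eta^{-1}b^2$ with small $\eta$ therefore makes $k_0$ as small as desired. The price is inflating $k_1,k_2$ by factors of order $l_c\lambda_{\max}^2(P)/\lambda_{\min}(Q)$ and $k_{32}$ by order $\lambda_{\max}^2(P)/\lambda_{\min}(Q)$. These are exactly the $12l_c\lambda_{\max}^2(P)/\lambda_{\min}(Q)$-type terms in the paper's constants.
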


It can be seen in the proof (APPENDIX \ref{app:conditional_expectation_term_bound}) that the role of the terminal constraint \eqref{eq_terminal_constraints} is to prevent the emergence of constant terms along with $\|x_k\|^2$ which cannot be eliminated.
Building upon this, we will substitute and simplify the expectation terms in the previously derived cost difference inequality. This step will directly link the performance evolution to the ambiguity set radius $\varepsilon$ and the prior bounds on disturbance moments $\bar\mu$ and $\bar \Sigma$, thereby culminating in the proof that the average closed-loop cost possesses the upper bound described in the following Theorem \ref{theorem_asymptotic_performance_bound}.

\begin{theorem}\label{theorem_asymptotic_performance_bound}
Suppose all Assumption \ref{assumption_dynamic_bound}-\ref{assumption:admissible_control_sequence} hold. There exist functions $\bar{\sigma}_1, \bar{\sigma}_2, \bar{\sigma}_3 \in \mathcal{K}_\infty$ such that the closed-loop system  under the proposed TSDR-MPC scheme satisfies the following asymptotic performance bound,
\begin{equation}\label{eq_asymptotic_performance_bound}
\limsup_{\bar{N} \to \infty} \mathbb{E} \Big[ \frac{1}{\bar{N}} \sum_{k=0}^{\bar{N}} l(x_k, u_k) \Big] \le \bar{\sigma}_1(\varepsilon) + \bar{\sigma}_2(\bar{\mu}) + \bar{\sigma}_3(\text{tr}(\bar{\Sigma})).
\end{equation}
\end{theorem}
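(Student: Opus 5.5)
We start from the cost-difference inequality \eqref{eq_cost_difference_inequality}, substitute Proposition~\ref{propositopn:conditional_expectation_term_bound}, take total expectations, telescope over $k$, and average.

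\textbf{Step 1: one-step decrease.} Substituting \eqref{eq_disturbed_terms} into \eqref{eq_cost_difference_inequality} (with the $\frac{1+\epsilon}{\epsilon}$ weight absorbed into the constants) gives
\begin{align*}
J_{k+1}-J_k \le{}& -\big[(1+\epsilon)-2\epsilon\tfrac{\lambda_{\max}(P)}{\lambda_{\min}(Q)}\big]l(x_k,u_k) + k_0'\|x_k\|^2 + c_\epsilon \\
&+ k_1'\|w_k\|^2 + \tfrac{\epsilon}{4\epsilon_{c1}}\|F\bar A D w_{k-1}\|^2 + k_2\|\hat\mu_{k+1}\|^2 + k_{31}\|\hat\mu_k\| \\
&+ k_{32}\|\hat\mu_k\|^2 + k_4\mathrm{tr}(\hat\Sigma_{k+1}) + k_5\mathrm{tr}(\hat\Sigma_k),
\end{align*}
where $J_k:=J_k(\bar u_k^*,\mathbb{P}_k^*)$ and $c_\epsilon$ collects $\epsilon c_1$, the $\|h\|^2$ terms, and the $u_u^2$ term.

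\textbf{Step 2: absorb the state term.} Use $\|x_k\|^2\le l(x_k,u_k)/\lambda_{\min}(Q)$. The coefficient of $l$ becomes $-\alpha$ with
\[
\alpha=(1+\epsilon)-2\epsilon\tfrac{\lambda_{\max}(P)}{\lambda_{\min}(Q)}-\tfrac{k_0'}{\lambda_{\min}(Q)},
\]
and we need $\alpha>0$. Taking $\epsilon$ small makes the second term negligible. For the third, $k_0$ comes from cross-terms $x_k^T(\cdot)\hat\mu$ and from the terminal constraint \eqref{eq_terminal_constraints}, which brings in $l_c$. We split these cross-terms with Young's inequality using a small weight, so that $k_0'$ is small and the constants in front of $\|\hat\mu\|^2$ grow instead.

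\textbf{This is the main obstacle.} The $\frac{1+\epsilon}{\epsilon}$ factor on $\sum\|\delta_i\|_Q^2$ multiplies any state-dependent part of $\delta_i$. We therefore need to verify that $\delta_i=x^c_{i|k+1}-z^*_{i+1|k}$ is driven only by $Dw_k$ and $\bar w_{k+1}$, not by $x_k$. We expect this to hold because the candidate shares inputs with the previous optimal sequence, so $x_k$ cancels in $\delta_i$. In that case $k_0$ arises only from the $g_1,g_2,\delta_f$ cross-terms and is tunable.

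\textbf{Step 3: moment bounds.} Take total expectation; the tower property removes the conditioning. Bound $\mathbb{E}\|w_k\|^2\le \bar\mu^2+\mathrm{tr}(\bar\Sigma)$. Note that $w_k$ is drawn from the true distribution, not the worst case. The same bound applies to $w_{k-1}$. By Lemma~\ref{lemma:Gelbrich_bound}, $\|\hat\mu_k\|\le a_1\bar\mu+a_2\sqrt{\varepsilon}$ and $\mathrm{tr}(\hat\Sigma_k)\le(b_1\sqrt{\mathrm{tr}\bar\Sigma}+b_2\sqrt{\varepsilon})^2\le 2b_1^2\mathrm{tr}\bar\Sigma+2b_2^2\varepsilon$, uniformly in $k$. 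Each resulting term is a class-$\mathcal K_\infty$ function of $\varepsilon$, $\bar\mu$ or $\mathrm{tr}\bar\Sigma$; cross-products such as $\bar\mu\sqrt\varepsilon$ are split by Young's inequality.

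\textbf{Step 4: constants and telescoping.} The residual constant $c_\epsilon$, which involves $\|h\|^2$ and $u_u$, must also be attributed to the right-hand side. We choose $\epsilon_{c1},\epsilon_{c2}$ and $\epsilon$ as functions of the uncertainty level, for example $\epsilon_{c}\propto\sqrt{\varepsilon}$, so that the constants vanish as the uncertainty vanishes and combine into the $\bar\sigma_i$. This step is delicate because the $\frac{1}{4\epsilon_c}$ terms then blow up. They multiply only disturbance second moments, however, so the resulting products stay class-$\mathcal K_\infty$ in the moments.

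Finally, sum from $k=0$ to $\bar N$. Since $J_{\bar N+1}\ge0$ and $J_0$ is finite,
\[
\alpha\,\mathbb{E}\sum_{k=0}^{\bar N} l \le J_0+(\bar N+1)\,\Gamma(\varepsilon,\bar\mu,\bar\Sigma),
\]
where $\Gamma$ bounds the per-step disturbance terms from Step 3. Dividing by $\bar N$ and taking the $\limsup$ yields \eqref{eq_asymptotic_performance_bound} with $\bar\sigma_i=\Gamma_i/\alpha$.
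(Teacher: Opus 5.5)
You have the same skeleton as the paper's proof. You substitute Proposition~\ref{propositopn:conditional_expectation_term_bound} into \eqref{eq_cost_difference_inequality}, absorb $k_0\|x_k\|^2$ into $l(x_k,u_k)$, take total expectations, apply Lemma~\ref{lemma:Gelbrich_bound}, and telescope. Your check that $\delta_i$ depends only on $Dw_k$ and $\bar w_{k+1}$ is exactly why $k_0$ avoids the factor $\frac{1+\epsilon}{\epsilon}$.

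The gap is in Step 4, which is the actual content of the theorem. Your suggested choice $\epsilon_{c1},\epsilon_{c2}\propto\sqrt{\varepsilon}$ fails, and your justification that the blown-up $\frac{1}{4\epsilon_c}$ factors ``multiply only disturbance second moments'' is wrong on two counts.
\begin{itemize}
\item Your own $c_\epsilon$ contains $\frac{\epsilon}{4\epsilon_{c1}}L_{B_1}^2Nu_u^2$, which is a pure constant. Keeping it bounded as $\varepsilon\to0$ at fixed $\bar\mu>0$ forces $\epsilon\to0$. That in turn sends the $\frac{1+\epsilon}{\epsilon}k_1\bar\mu^2$-type terms to infinity.
\item Even genuine moment terms become products like $\frac{1}{\sqrt{\varepsilon}}\bigl(\bar\mu^2+\mathrm{tr}(\bar\Sigma)\bigr)$. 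These are unbounded as $\varepsilon\to0$ with $\bar\mu,\bar\Sigma$ fixed, so no fixed $\mathcal{K}_\infty$ functions can make $\bar\sigma_1(\varepsilon)+\bar\sigma_2(\bar\mu)+\bar\sigma_3(\mathrm{tr}(\bar\Sigma))$ dominate them. Being increasing in the moments for each fixed $\varepsilon$ is not the separable form the theorem asserts.
\end{itemize}

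What is missing is how the paper resolves the trade-off. The constant part must vanish with the uncertainty, which pushes $\epsilon$ small. The $\frac{1+\epsilon}{\epsilon}$ weight on the moment terms penalizes small $\epsilon$. The paper's resolution has three pieces.
\begin{itemize}
\item Keep $\epsilon_{c1}=1$. Every $\epsilon_{c1}$-dependent constant enters only through $\epsilon J_k$, so it already carries a factor $\epsilon$.
\item Set $\epsilon_{c2}=\epsilon$. Then all constants collapse to $\epsilon c_2$ with $c_2=c_1+4\|h\|^2+\frac14L_{B_1}^2Nu_u^2$. Note that $\frac{1}{4\epsilon_{c2}}\|F\bar D\bar w_{k+1}\|^2$ enters \eqref{eq_cost_difference_inequality} with coefficient $1$, so it costs only one factor $1/\epsilon$.
\item Choose the single parameter $\epsilon=\min\{\frac{1}{4c_l},\max\{\varepsilon^{1/3},\bar\mu^{1/2},\mathrm{tr}(\bar\Sigma)^{1/2}\}\}$ with $c_l=2\lambda_{\max}(P)/\lambda_{\min}(Q)$.
\end{itemize}

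The cap keeps a uniform decrease of $\frac12 l(x_k,u_k)$. In both regimes it gives $\epsilon c_2\le c_2\bigl(\varepsilon^{1/3}+\bar\mu^{1/2}+\mathrm{tr}(\bar\Sigma)^{1/2}\bigr)$.

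The \emph{max} is the key step. It yields $1/\epsilon\le\varepsilon^{-1/3}$, $1/\epsilon\le\bar\mu^{-1/2}$ and $1/\epsilon\le\mathrm{tr}(\bar\Sigma)^{-1/2}$ simultaneously. Hence each moment term divided by $\epsilon$ is bounded by a function of its own argument alone: $\bar\mu^2/\epsilon\le\bar\mu^{3/2}$, $\bar\mu/\epsilon\le\bar\mu^{1/2}$, $\varepsilon/\epsilon\le\varepsilon^{2/3}$, $\sqrt{\varepsilon}/\epsilon\le\varepsilon^{1/6}$, and $\mathrm{tr}(\bar\Sigma)/\epsilon\le\mathrm{tr}(\bar\Sigma)^{1/2}$. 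In the capped regime the bound is $4c_l$ times the term instead. Coupling all three uncertainty measures through one $\epsilon$ in this way is what produces the separable $\mathcal{K}_\infty$ bound. Your Step 4 needs to be replaced by this argument.
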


\begin{proof}
Let $c_l = 2\frac{\lambda_{\max}(P)}{\lambda_{\min}(Q)} > 1$. Setting $\epsilon \in (0, 1)$, $\epsilon_{c1} =1 $, and $\epsilon_{c2} = \epsilon$, it follows that
$
\epsilon \le \frac{1+\epsilon}{\epsilon}, \quad 1 \le \frac{1+\epsilon}{\epsilon}
$. Substituting above inequalities and \eqref{eq_disturbed_terms} into the cost difference inequality \eqref{eq_cost_difference_inequality} yields
\begin{align*}
&J_{k+1}(\bar{u}_{k+1}^\ast, \mathbb{P}_{k+1}^\ast) - J_k(\bar{u}_k^\ast, \mathbb{P}_k^\ast) \le -( \textstyle\frac{3}{4} - c_l \epsilon ) l(x_k, u_k) + \epsilon c_2 \\
&+ \textstyle\frac{1+\epsilon}{\epsilon} \big(  k_2 \|\hat\mu_{k+1}\|^2 + k_{31} \|\hat\mu_k\|  + k_{32} \|\hat\mu_k\|^2+ k_4 \text{tr}(\hat\Sigma_{k+1}) \\&+ k_5 \text{tr}(\hat\Sigma_k) \big)+\textstyle\frac{\epsilon}{4} \|F \bar A {D} {w}_{k-1}\|^2+ \frac{1+\epsilon}{\epsilon}k_1\|w_k\|^2,
\end{align*}
where
$c_2=c_1 + 4\|h\|^2 + \frac{1}{4} L_{B_1}^2 Nu_u^2$.
Note that
\begin{align*}
\mathbb{E} [\|w_k\|^2] &\le \bar{\mu}^2 + \text{tr}(\bar{\Sigma}), \\
\mathbb{E} [\|F \bar A {D} {w}_{k-1}\|^2] &\le C_{w2} (\bar{\mu}^2 + \text{tr}(\bar{\Sigma})),
\end{align*}
where $ C_{w2}=\lambda_{\text{max}}(F_0^TF_0)L_D^2\sum_{i=1}^NL_A^{2i}$.
Now, we use the moment bounds from Lemma \ref{lemma:Gelbrich_bound} where the worst-case mean and covariance are bounded by expressions involving $\bar \mu, \bar \Sigma$ and $\varepsilon$. Let $c_{sN} = \sqrt{\frac{\lambda_{\max}(C_s)}{\lambda_{\min}(C_s)}N}$, by taking total expectation and applying these moment bounds, we obtain
\begin{align*}
&\mathbb{E}  [ J_{k+1}(u_{k+1}^\ast, \mathbb{P}_{k+1}^\ast) - J_k(u_k^\ast, \mathbb{P}_k^\ast) + \textstyle\frac{1}{2} l(x_k, u_k) ] \\
&\le \mathbb{E} [ -( \textstyle\frac{1}{4} - c_l \epsilon ) l(x_k, u_k) ] + \epsilon c_2 \\
& + ( \textstyle\frac{\epsilon}{4} C_{w2} + \frac{1+\epsilon}{\epsilon} k_1) \bar{\mu}^2 + \frac{1+\epsilon}{\epsilon} k_{31} \frac{\sqrt{\lambda_{\max}(C_s)N} \bar{\mu} + 2\sqrt{2\varepsilon}}{\sqrt{\lambda_{\min}(C_s)}} \\
& + [   \textstyle\frac{1+\epsilon}{\epsilon} (k_2 + k_{32}) ] \frac{(\sqrt{\lambda_{\max}(C_s)N} \bar{\mu} + 2\sqrt{2\varepsilon})^2}{\lambda_{\min}(C_s)} \\
& + ( \textstyle\frac{\epsilon}{4} C_{w2} + \frac{1+\epsilon}{\epsilon} k_1 ) \text{tr}(\bar{\Sigma})  + [\frac{1+\epsilon}{\epsilon} (k_4 + k_5) ]\\
&\times \textstyle\frac{(\sqrt{\lambda_{\max}(C_s)N \text{tr}(\bar{\Sigma})} + 2\sqrt{2\varepsilon})^2}{\lambda_{\min}(C_s)} \\
&\le \mathbb{E} [ -( \textstyle\frac{1}{4} - c_l \epsilon ) l(x_k, u_k) ] + \epsilon c_2 + \sigma_1(\varepsilon) + \sigma_2(\bar{\mu}) + \sigma_3(\text{tr}(\bar{\Sigma}))
\end{align*}
where
\begin{align*}
\sigma_1(\varepsilon) &= [\textstyle\frac{1+\epsilon}{\epsilon} (k_2 + k_{32} + k_4 + k_5) ] \frac{16\varepsilon}{\lambda_{\min}(C_s)}  \\
&\quad+ \textstyle\frac{1+\epsilon}{\epsilon} k_{31} \frac{2\sqrt{2\varepsilon}}{\sqrt{\lambda_{\min}(C_s)}} \\
\sigma_2(\bar{\mu}) &= \textstyle\frac{1+\epsilon}{\epsilon} k_{31} c_{sN} \bar{\mu}+\frac{\epsilon}{4} C_{w2}\bar{\mu}^2  \\
&\quad+\textstyle\frac{1+\epsilon}{\epsilon} (k_1 + 2(k_2 + k_{32}) c_{sN}^2)  \bar{\mu}^2 \\
\sigma_3(\text{tr}(\bar{\Sigma})) &= \textstyle\frac{\epsilon}{4}C_{w2} \text{tr}(\bar{\Sigma})+\frac{1+\epsilon}{\epsilon} (k_1 + 2(k_4 + k_{5}) c_{sN}^2)   \text{tr}(\bar{\Sigma}).
\end{align*}

Now choose $\epsilon = \min \{ \frac{1}{4c_l},\max\{\varepsilon^{1/3}, \bar{\mu}^{1/2}, \text{tr}(\bar{\Sigma})^{1/2}\}  \} \in (0, 1)$, we have $\frac{1+\epsilon}{\epsilon} \le \frac{2}{\epsilon}$. The functions are further bounded as,
\begin{align*}
\sigma_1(\varepsilon) &\le \textstyle\frac{32(k_2 + k_{32} + k_4 + k_5)\varepsilon}{\lambda_{\min}(C_s)\epsilon}+ \frac{2}{\epsilon} k_{31} \frac{2\sqrt{2\varepsilon}}{\sqrt{\lambda_{\min}(C_s)}} \\
&\le  \textstyle\frac{32(k_2 + k_{32} + k_4 + k_5)}{\lambda_{\min}(C_s)} \max\{\varepsilon^{2/3},4c_l\varepsilon\}  \\
&\quad+ \textstyle\frac{4\sqrt{2} k_{31}}{\sqrt{\lambda_{\min}(C_s)}} \max\{\varepsilon^{1/6},4c_l\varepsilon^{1/2}\}\\
& = 0.5\bar{\sigma}_1(\varepsilon)-c_2\varepsilon^{1/2}, \\
\sigma_2(\bar{\mu}) &\le \textstyle\frac{2}{\epsilon} k_{31} c_{sN} \bar{\mu}+  [\frac{1}{4} C_{w2}+\frac{2}{\epsilon} (k_1 + 2(k_2 + k_{32}) c_{sN}^2) ]\bar{\mu}^2  \\
&\le \textstyle\frac{C_{w2}}{4}  \bar{\mu}^2 + 2 (k_1 + 2(k_2 + k_{32}) c_{sN}^2) \max\{\bar{\mu}^{\frac{3}{2}} ,4c_l\bar{\mu}^{2} \}\\
&\quad+  2 k_{31} c_{sN} \max\{\bar{\mu}^{1/2} ,4c_l\bar{\mu} \} = 0.5\bar{\sigma}_2(\bar{\mu})-c_2\bar \mu^{1/2}, \\
\sigma_3(\text{tr}(\bar{\Sigma})) &\le 
[ \textstyle\frac{1}{4} C_{w2}+ \frac{2}{\epsilon} k_1 + \frac{4}{\epsilon} (k_4 + k_5) c_{sN}^2 ] \text{tr}(\bar{\Sigma}) \\
&\le  [2 k_1+4 (k_4 + k_5) c_{sN}^2 ]\max\{\text{tr}(\bar{\Sigma})^{1/2},4c_l\text{tr}(\bar{\Sigma})\}    \\
&\quad  +\textstyle\frac{1}{4} C_{w2} \text{tr}(\bar{\Sigma}) = 0.5\bar{\sigma}_3(\text{tr}(\bar{\Sigma}))-c_2\text{tr}(\bar{\Sigma})^{\frac{1}{3}}.
\end{align*}

Finally, after summing from $k=0$ to $\bar{N}-1$ and taking the limit, we obtain \eqref{eq_asymptotic_performance_bound} and completes the proof.
\end{proof}

\begin{remark}
The proposed TSDR-MPC framework possesses a natural degeneracy property, seamlessly connecting several important special cases.
When the ambiguity set radius $\varepsilon = 0$ and the disturbance mean bound $\bar{\mu} = 0$, the Wasserstein ambiguity set collapses to a singleton centered at the empirical distribution. The problem then reduces to sample average approximation (SAA) based stochastic MPC. By choosing $\epsilon = \min \{ \frac{1}{4c_l}, \text{tr}(\bar{\Sigma})^{1/2} \} $, the performance bound in Theorem 2 simplifies to
    \begin{equation*}
    \limsup_{\bar{N} \to \infty} \mathbb{E} \Big[ \frac{1}{\bar{N}} \sum_{k=0}^{\bar{N}-1} l(x_k, u_k) \Big] \le \bar{\sigma}_3(\text{tr}(\bar{\Sigma}))
    \end{equation*} 
When additionally $\bar{\Sigma} = 0$, the system reduces to a disturbance-free deterministic system. Choosing $\epsilon = \frac{1}{4c_l}$, then leads to a zero upper bound for the average cost in Theorem \ref{theorem_asymptotic_performance_bound}, which implies asymptotic convergence of the state to the origin. This is fully consistent with the asymptotic stability conclusion of classical deterministic MPC.

This degeneracy consistency validates the theoretical soundness of the proposed framework: as a general method for handling distributional uncertainty, it naturally converges to classical and simpler cases when the uncertainty diminishes.
\end{remark}

\section{Simulation}\label{sec 5}
To validate the effectiveness, adaptability, and stability of the proposed TSDR-MPC framework, we conduct numerical simulations on the following system commonly adopted in robust and stochastic MPC studies \cite{mayne2005robust,10383526}. We consider the discrete-time linear time-invariant system
\begin{equation}
x_{k+1} = \begin{bmatrix} 1 & 1 \\ 0 & 1 \end{bmatrix} x_k + \begin{bmatrix} 0.5 \\ 1 \end{bmatrix} u_k + \begin{bmatrix} 1 & 0 \\ 0 & 1 \end{bmatrix} w_k,
\end{equation}
where $x_k\in\mathbb R^2$ denotes the state, $u_k\in \mathbb R$ is the control input, and $w_k\in\mathbb R^2$ is a stochastic disturbance with unknown distribution. The initial state is fixed as $x_0=[-5,-2]^T$.

A pre-stabilizing LQR gain $K_0=[-0.2068~-0.6756]$ with $Q_0=0.1I_2,R_0=1$, is introduced to enlarge the feasible region of the first-stage decision variables.
The true control applied to the system is $u_k=K_0x_k+v_k$, where $v_k$ is the optimization variable obtained by solving the TSDR-MPC.

The stage cost and terminal cost matrices are chosen as
\begin{align*}
Q=I_2,R=0.1,P=\begin{bmatrix}
    2.0599 \quad0.5916\\
    0.5916 \quad 1.4228
\end{bmatrix},
\end{align*}
where $P$ is obtained from the solution of the discrete-time algebraic Riccati equation \eqref{eq_Riccati_equation}. The corresponding LQR gain is K=[-0.6167\quad -1.2703].
The second-stage penalty coefficients are chosen as $h=10^3\times \mathbf{1}_{n_c}$ to ensure exact penalty properties. The prediction horizon is set to $N=3$, and the terminal constraint tuning parameter is chosen as $l_c=2$.
The state constraints are defined by 
\begin{equation*}
F_0 = \begin{bmatrix} 1 & 0 \\ -1 & 0 \\ 0 & 1 \\ 0 & -1 \end{bmatrix}, \quad G_0 = \begin{bmatrix} -2 \\ -10 \\ -2 \\ -2 \end{bmatrix},
\end{equation*}
The input constraint is
$\mathcal U=\{u_k|-1\leq u_k\leq1\}.$
For the Wasserstein ambiguity set, we set the radius $\varepsilon=0.01$ and the weighting matrix $C=I_{Nn_c}$. The number of samples for the empirical distribution is set to $n=10$.  
Note that to demonstrate adaptiveness of the proposed method, all the above simulation settings remain unchanged across the different uncertainty scenarios.

\begin{figure}[tbp]
\centerline{\includegraphics[width=1.1\columnwidth]{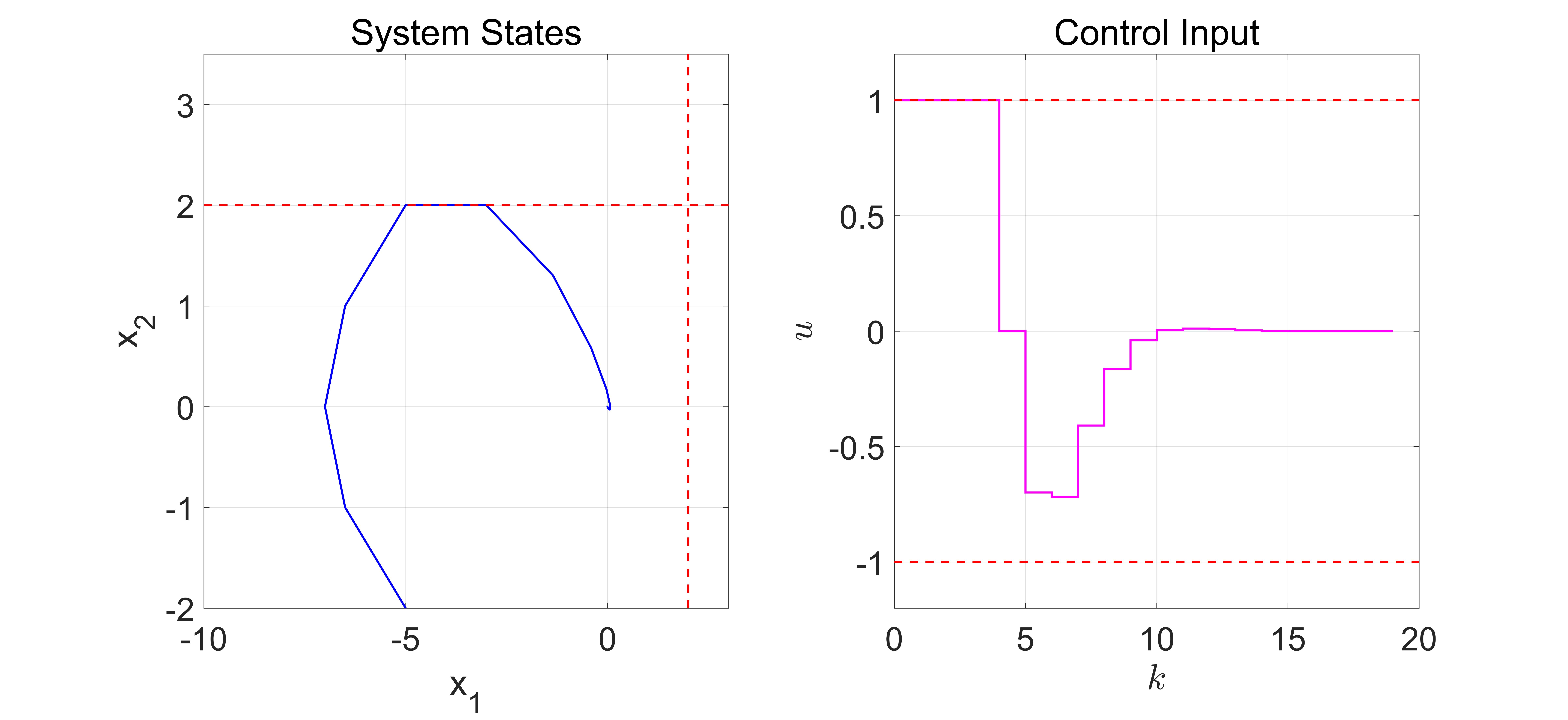}}
\caption{Trajectories of the states (left) and control inputs (right) without disturbance: 
The red dashed lines are constraints.
The simulation example was run for 20 times.}
\label{fig_nominal}
\end{figure}

Fig. \ref{fig_nominal} shows the state trajectories and control inputs under nominal (disturbance‑free) conditions, repeated for 20 simulation runs. All trajectories start from $x_0=[-5\quad -2]^T$ and converge smoothly to the origin while respecting the state constraints. The consistency across all 20 runs demonstrates the deterministic stability of the TSDR‑MPC controller when no disturbances are present.

\begin{figure}[tbp]
\centerline{\includegraphics[width=1.05\columnwidth]{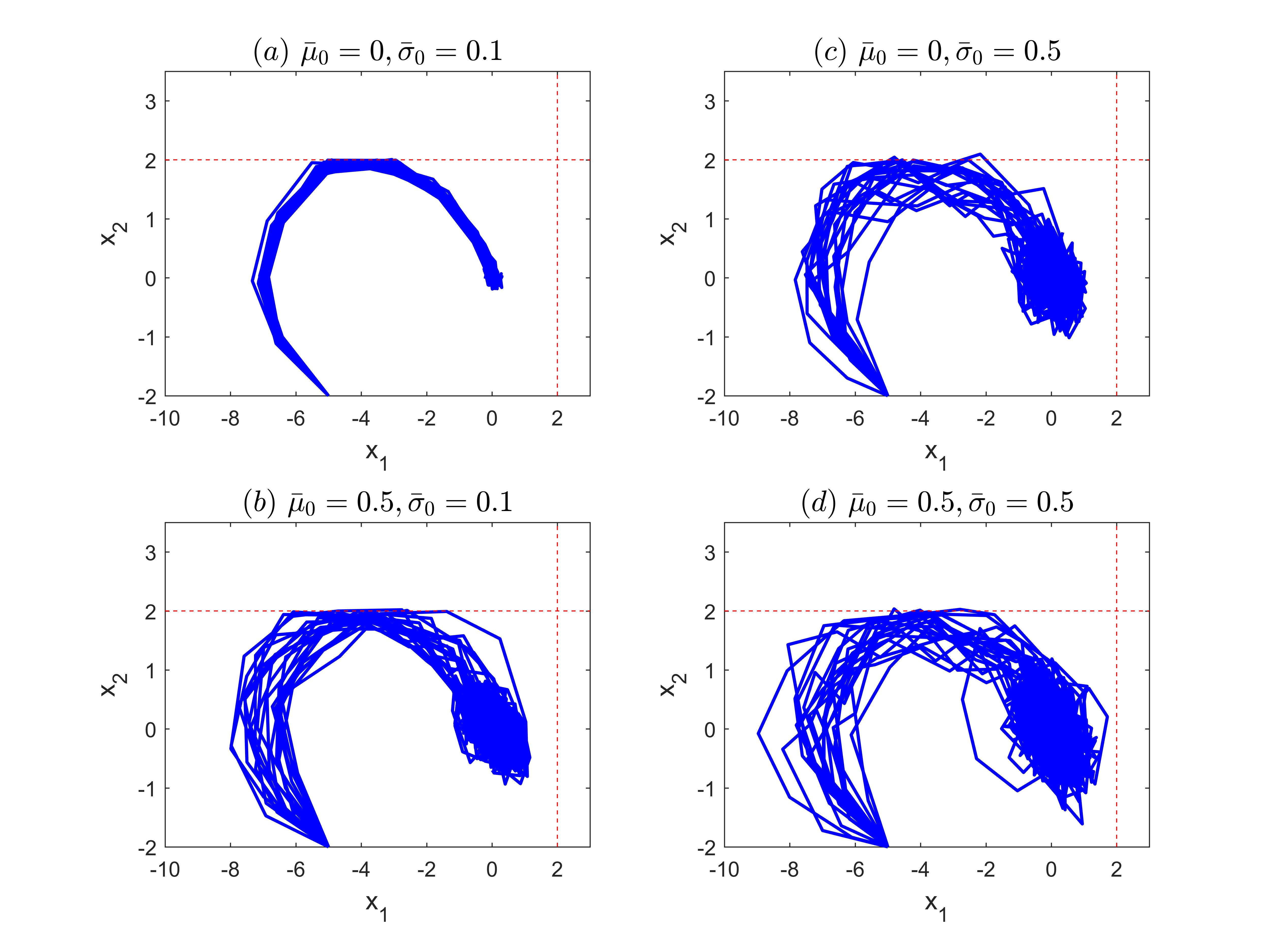}}
\caption{
Closed-loop states under different $\bar \mu_0$ and $\bar \sigma_0$: 
The red dashed lines are constraints.
Each example was run for 20 times.}
\label{fig_dr}
\end{figure}

Fig. \ref{fig_dr} presents the state trajectories under different disturbance moment bounds. We consider four representative scenarios. Here $\bar \mu=\sqrt{2}\mu_0$ and $\bar \Sigma=\text{diag}(\bar \sigma_0,\bar \sigma_0)$.
For each scenario, we first randomly generate a true mean vector $\mu_k$ uniformly from $[-\bar \mu_0,\bar \mu_0]^2$ and a true covariance matrix $\Sigma_k$ whose eigenvalues are uniformly distributed in $[0,\bar \sigma_0^2]$. Then, at each time step $k$, the disturbance $w_k$ is drawn from the Gaussian distribution $\mathcal N(\mu_k,\Sigma_k)$.
This procedure mimics the situation where the true distribution is unknown but its moments are known to lie within given bounds. All other simulation parameters are kept identical across the four scenarios. Each scenario is simulated for 20 independent runs.

In scenario (a) $(\bar\mu_0=0,\bar\sigma_0=0.1)$, trajectories closely follow the nominal path and stay well inside the constraints, showing that the controller does not over‑react to negligible uncertainty.
In scenario (b) $(\bar\mu_0=0.5,\bar\sigma_0=0.1)$, despite a persistent bias, the adaptive tightening automatically adjusts the second‑stage dual variables to counteract the non‑zero mean.
In scenario (c) $(\bar\mu_0=0,\bar\sigma_0=0.5)$, large covariance leads to occasional constraint violations. A larger penalty coefficient $h$ can effectively reduce the violation rate, yet at the cost of increased average stage cost.
In scenario (d) $(\bar\mu_0=0.5,\bar\sigma_0=0.5)$, the most challenging case, the system still converges to a neighbourhood of the origin, confirming the theoretical asymptotic bound of Theorem 3, which grows as the upper bounds of the disturbance moments increase.

\begin{figure}[tbp]
\centerline{\includegraphics[width=1.05\columnwidth]{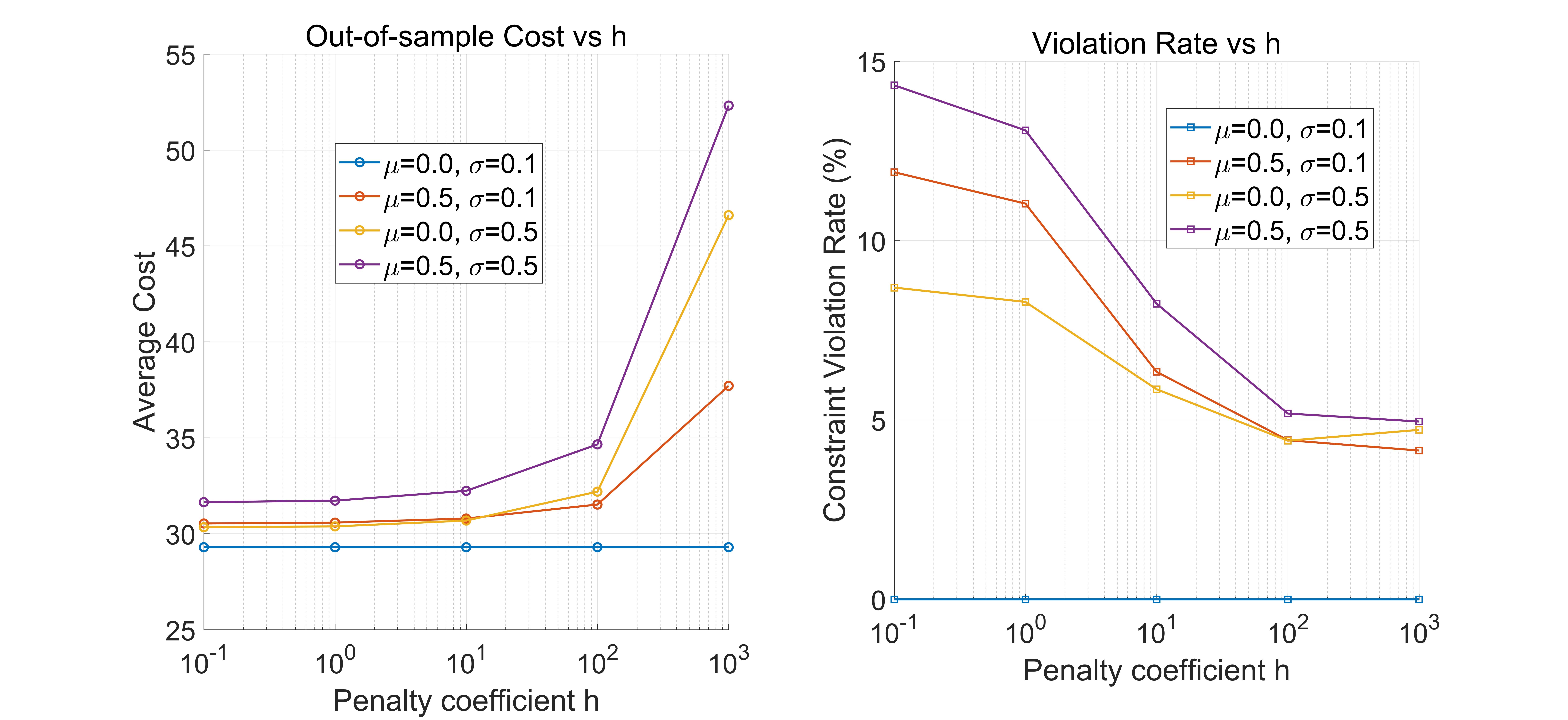}}
\caption{Out‑of‑sample performance versus penalty coefficient.}
\label{fig_out_of_sample}
\end{figure}

Fig \ref{fig_out_of_sample} fixes the initial state at $[-4,~1.98]^T$ and varies $h$ in out‑of‑sample tests. 200 different sampled disturbances are used; each yields a control sequence evaluated on 1000 outer samples, and the results are averaged. It shows that a properly chosen $h$ lowers the violation frequency at the expense of average performance.
These simulation results collectively validate the theoretical claims that the TSDR‑MPC framework achieves adaptive constraint tightening, maintains recursive feasibility through the two‑stage structure, and provides closed‑loop stability with performance bounds despite distributional uncertainty.

\section{Conclusion}\label{sec6}
This paper presents a TSDR-MPC framework that achieves adaptive constraint tightening against disturbances with unknown time-varying moments by embedding violation penalties into a second-stage optimization problem. Leveraging the Wasserstein ambiguity set and duality, the problem is rendered tractable and solved via a finitely convergent cutting-plane algorithm suitable for real-time implementation. Simulation results validate TSDR-MPC's ability to autonomously adjust conservatism based on sampled disturbances, confirming its efficacy in balancing robustness and performance against unknown time-varying disturbances.

\appendix
\subsection{Proof of Lemma \ref{lemma:Gelbrich_bound}}\label{app:Gelbrich_bound}
\begin{proof}
Consider the transformed disturbance sequence ${\bar{w}}_{k}'  =  C_{s}^{1/2} {\bar{w}}_{k}$, where $C_{s}^{1/2}$ is the positive-definite square root of $C_{s}$. Under this transformation, the Wasserstein distance transforms as
\begin{align*}
W_{C}( \mathbb{P},\mathbb{Q} ) =\textstyle\frac{1}{2}W_{2}^2( \mathbb{P}',\mathbb{Q}' ),  
\end{align*}
where $\mathbb{P}'$, $\mathbb{Q}'$ are the corresponding distributions with respect to $\bar w_k'$. From the ambiguity set definition $W_{C}( \mathbb{P}_k,\mathbb{Q}_k )\leq\varepsilon$, we obtain $W_{2}( \mathbb{P}_{k}',\mathbb{Q}_{k}' )\leq\sqrt{2\varepsilon}$. Similarly, for the worst-case distribution $\mathbb{P}_k^*$, we have $W_{2}( \mathbb{P}_{k}^{*'},\mathbb{Q}_{k}' )\leq\sqrt{2\varepsilon}$.  Applying the triangle inequality for the Wasserstein distance yields
\begin{align*}
W_{2}( \mathbb{P}_{k}',\mathbb{P}_{k}^{*'} ) \leq W_{2}( \mathbb{P}_{k}',\mathbb{Q}_{k}' ) + W_{2}( \mathbb{P}_{k}^{*'},\mathbb{Q}_{k}' ) \leq 2\sqrt{2\varepsilon}.
\end{align*}
Now, apply the Gelbrich bound to the standard 2-Wasserstein distance. This gives
\begin{align*}
\| \mu_{k}' - {\hat{\mu}}_{k}' \| &\leq W_{2}( \mathbb{P}_{k}',\mathbb{P}_{k}^{*'} ) \leq 2\sqrt{2\varepsilon},\\
B( \Sigma_{k}',{\hat{\Sigma}}_{k}' )&\leq W_{2}( \mathbb{P}_{k}',\mathbb{P}_{k}^{*'} ) \leq 2\sqrt{2\varepsilon}.
\end{align*}

To bound $\|\mu_k'\|$, note that the true mean vector $\mu_k$ consists of $N$ independent components, each satisfying $\|\mu_k^{(i)}\|\leq\bar \mu$. Hence, $\|\mu_k\|\leq\sqrt{\sum_{i=0}^{N-1}\|\mu_k^{(i)}\|^2}\leq\sqrt{N}\bar \mu$.
Since $\mu_k'=C_s\textstyle^{1/2}\mu_k$, we have
$
\| \mu_{k}' \| = \| C_{s}\textstyle^{1/2}\mu_{k} \| \leq \| C_{s}\textstyle^{1/2} \|\| \mu_{k} \| \leq \sqrt{\lambda_{\max}(  C_{s} )N} \bar{\mu} .
$
Combining this with the bound on $\| {\hat{\mu}}_{k}' -\mu_{k}' \|$ gives
\begin{align*}
\| {\hat{\mu}}_{k}' \| \leq \| \mu_{k}' \| + \| {\hat{\mu}}_{k}' -\mu_{k}' \| = \sqrt{\lambda_{\max}( C_{s} ) N} \bar{\mu} + 2\sqrt{2\varepsilon}.
\end{align*}

Transforming back to the original space, $\hat \mu_k=C_s\textstyle^{-1/2}{\hat{\mu}}_{k}'$, so
\begin{align*}
\| {\hat{\mu}}_{k} \| \leq \|  C_s\textstyle^{-1/2} \| \| {\hat{\mu}}_{k}' \| \leq \textstyle\frac{1}{\sqrt{\lambda_{\min}( C_{s} )}}( \sqrt{\lambda_{\max}( C_{s} ) N} \bar{\mu} + 2\sqrt{2\varepsilon} ).
\end{align*}

For the covariance bound, we first note that the Bures distance satisfies
$
B^{2}( \Sigma_{k}',{\hat{\Sigma}}_{k}') \geq \textstyle\big( \sqrt{\textstyle\text{tr}( \Sigma_{k}' )} - \sqrt{\textstyle\text{tr}( {\hat{\Sigma}}_{k}' )} \big)^{2},
$
hence
$
| \sqrt{\textstyle\text{tr}( \Sigma_{k}' )} - \sqrt{\text{tr}( {\hat{\Sigma}}_{k}' )} | \leq 2\sqrt{2\varepsilon}.
$

Next, we bound $\text{tr}(\Sigma_k')$. Since $\Sigma_k'=C^{1/2}\Sigma_kC^{1/2}$, we have
$\text{tr}( \Sigma_{k}' )= \text{tr}(C_s \Sigma_{k} )\leq \lambda_{\text{max}}(C_s) \text{tr}( {\Sigma} _k)$.
The true covariance matrix $\Sigma_k$ is block-diagonal with each block satisfying
$\text{tr}(\Sigma_k^{(i)})\leq\text{tr}(\bar\Sigma)$, so $\text{tr}(\Sigma_k)\leq N\text{tr}(\bar\Sigma)$. Thus, 
$\text{tr}( \Sigma_{k}' )\leq \lambda_{\text{max}}(C_s)N \text{tr}( \bar\Sigma)$. Consequently,
$
\textstyle\sqrt{\text{tr}( {\hat{\Sigma}}_{k}' )} \leq \textstyle\sqrt{\text{tr}( \Sigma_{k}' )} + \big| \textstyle\sqrt{\text{tr}( \Sigma_{k}' )} - \textstyle\sqrt{\text{tr}( {\hat{\Sigma}}_{k}' )} \big| \leq \textstyle\sqrt{N \text{tr}( C_{s} \bar{\Sigma} )} + 2\sqrt{2\varepsilon}.
$
Finally, transforming back to $\hat \Sigma_k=C^{-1/2}\hat\Sigma_k'C^{-1/2}$ , we obtain
$
\textstyle\text{tr}( {\hat{\Sigma}}_{k} ) \leq \frac{1}{\lambda_{\min}( C_{s} )}( \sqrt{\lambda_{\max}( C_{s} ) N \text{tr}( \bar{\Sigma} )} + 2\sqrt{2\varepsilon} )^{2}.
$
\end{proof}

\subsection{Proof of Proposition \ref{proposition:upper_bound_for_quadratic_cost}}\label{app:upper_bound_for_quadratic_cost}
\begin{proof}
Let $z_{i|k}^{*}$ and $z_{i|k}^{f}$ be the nominal states corresponding to the control sequence ${\bar{u}}_{k}^{*}$ with zero disturbance and the fixed‑gain law $\bar u_f$ respectively. At the initial time, $z_{0|k}^{*} - z_{0|k}^{f} = x_{k} - x_{k} = 0.$
For $i = 1,\cdots,N$, using linearity of the dynamics we obtain
\begin{align*}
z_{i|k}^{*} - z_{i|k}^{f} &= f( z_{i - 1|k}^{*},u_{i|k}^{*},0 ) - f( z_{i - 1|k}^{f},u_{f},0 )\\
&= 
 \textstyle\sum_{j = 1}^{i}{A^{i - j}B( u_{j - 1|k}^{*} - u_{f} )}.
\end{align*}
Taking norms and employing the definitions of $L_{A}$, $L_{B}$ yields
\begin{align*}
\| z_{i|k}^{*} - z_{i|k}^{f} \| \leq \textstyle\sum_{j = 1}^{i}{L_{A}^{i - j}L_{B}u_{u}}.  
\end{align*}
For $i = 1,\cdots,N - 1$, we have that
\begin{align*}
\| z_{i|k}^{*} \|_{Q}^{2} &\leq 2\| z_{i|k}^{f} \|_{Q}^{2} + 2\lambda_{\max}(Q)( \textstyle\sum_{j = 1}^{i}{L_{A}^{i - j}L_{B}u_{u}} )^{2},\\
\| z_{N|k}^{*} \|_{P}^{2} &\leq 2\| z_{N|k}^{f} \|_{P}^{2} + 2\lambda_{\max}(P)( \textstyle\sum_{j = 1}^{N}{L_{A}^{N - j}L_{B}u_{u}} )^{2},
\\
\| u_{i|k}^{*} \|_{R}^{2} &\leq 2\| u_{f} \|_{R}^{2} + 2\lambda_{\max}(R)u_{u}^{2}.
\end{align*}
Substituting these inequalities into the definition of $V_{q}\left( {\bar{u}}_{k}^{*},0 \right)$ yields
\begin{align*}
V_{q}( {\bar{u}}_{k}^{*},0 ) &= \textstyle\sum_{i = 0}^{N - 1}{l( z_{i|k}^{*},u_{i|k}^{*} )} + V_{f}( z_{N|k}^{*} )\\
&\leq 2V_{q}\left( {\bar{u}}_{f},0 \right) + c_{1}\leq 2\textstyle\frac{\lambda_{\max}(P)}{\lambda_{\min}(Q)}l( x_{k},u_{k} ) + c_{1}.
\end{align*}
Finally, writing the actual cost $V_{q}\left( {\bar{u}}_{k}^{*},{\bar{w}}_{k} \right)$ explicitly,
\begin{align*}
V_{q}( {\bar{u}}_{k}^{*},{\bar{w}}_{k} ) &= V_{q}( {\bar{u}}_{k}^{*},0 ) + \| \bar{A}x_{k} + \bar{B}{\bar{u}}_{k}^{*} + \bar{D}{\bar{w}}_{k} \|_{\bar{Q}}^{2} \\&- \left\| \bar{A}x_{k} + \bar{B}{\bar{u}}_{k}^{*} \right\|_{\bar{Q}}^{2}= V_{q}( {\bar{u}}_{k}^{*},0 ) + g_{1}( {\bar{w}}_{k}).
\end{align*}
Substituting the bound obtained in the previous step completes the proof of Proposition \ref{proposition:upper_bound_for_quadratic_cost}. 
\end{proof}

\subsection{Proof of Proposition \ref{proposition:recurrence_inequality__for_quadratic_cost}}\label{app:recurrence_inequality__for_quadratic_cost}
\begin{proof} 
We begin by examining the dynamics of the state deviation $\delta_{i} = x_{i|k + 1}^{c} - z_{i + 1|k}^{*}$. From the initial condition, we have $\delta_0=Dw_{k}$. For $i = 1,\cdots,N - 1$, recursive application of the linear dynamics yields the explicit expression,
\begin{align*}
\delta_{i}  &= f( x_{i - 1|k + 1}^{c},u_{i - 1|k + 1}^{c},w_{i - 1|k + 1} ) - f( z_{i|k}^{*},u_{i|k}^{*},0 )\\
& = A^{i}Dw_{k} + \textstyle\sum_{j = 0}^{i - 1}{A^{i - 1 - j}Dw_{j|k + 1}}.
\end{align*}
Next, we estimate the stage costs. For $\forall\epsilon > 0$ and $i = 1,\cdots,N - 2$, Young’s inequality gives
\begin{align*}
\textstyle\frac{1}{1 + \epsilon}\| x_{i|k + 1}^{c} \|_{Q}^{2} \leq \| z_{i + 1|k}^{*} \|_{Q}^{2} + \frac{1}{\epsilon}\| \delta_{i} \|_{Q}^{2}.
\end{align*}
Consequently, the stage cost satisfies,
\begin{align*}
&\textstyle\frac{1}{1 + \epsilon}l( x_{i|k + 1}^{c},u_{i|k + 1}^{c} ) = \frac{1}{1 + \epsilon}( \| x_{i|k + 1}^{c} \|_{Q}^{2} + \| u_{i|k + 1}^{c} \|_{R}^{2} )\\
&\leq \| z_{i + 1|k}^{*} \|_{Q}^{2} + \textstyle\frac{1}{\epsilon}\| \delta_{i} \|_{Q}^{2} + \frac{1}{1 + \epsilon}\| u_{i|k + 1}^{c} \|_{R}^{2}\\
&\leq l( z_{i + 1|k}^{*},u_{i + 1|k}^{*} ) + \textstyle\frac{1}{\epsilon}\| \delta_{i} \|_{Q}^{2}.
\end{align*}
Summing over $i = 1,\cdots,N - 2$ leads to
\begin{align*}
{\textstyle\frac{1}{1 + \epsilon}}\sum_{i = 0}^{N - 2}{l( x_{i|k + 1}^{c},u_{i|k + 1}^{c})} \leq \sum_{i = 1}^{N - 1}{l( z_{i|k}^{*},u_{i|k}^{*} )} + {\textstyle\frac{1}{\epsilon}}\sum_{i = 0}^{N - 2}\| \delta_{i} \|_{Q}^{2}.
\end{align*}
We now turn to the terminal step $i=N-1$. Note that the terminal control is $u_{N-1|k+1}^c=u_f$ and terminal state can be decomposed as
\begin{align*}
x_{N|k + 1}^{c} &= f( x_{N - 1|k + 1}^{c},u_{N - 1|k + 1}^{c},w_{N - 1|k + 1} )\\
&= f( x_{N - 1|k + 1}^{c},u_{f},0 ) + Dw_{N - 1|k + 1}.
\end{align*}
Utilizing the descent property \eqref{eq_uf_descent_property} of the terminal cost function,
\begin{align*}
V_{f}( f( x_{N - 1|k + 1}^{c},u_{f},0 ) ) &\leq V_{f}( x_{N - 1|k + 1}^{c} ) \\& - l( x_{N - 1|k + 1}^{c},u_{N - 1|k + 1}^{c} ),
\end{align*}
we can combine the terminal cost and the last stage cost as
\begin{align*}
V_{f}( x_{N|k + 1}^{c} ) + l( x_{N - 1|k + 1}^{c},u_{N - 1|k + 1}^{c} ) \leq V_{f}( x_{N - 1|k + 1}^{c} ) + \delta_{f}.
\end{align*}
where $\delta_{f}$ is an additional term arising from the disturbance.
Substituting all these estimates into the definition of $V_{q}( {\bar{u}}_{k + 1}^{c},{\bar{w}}_{k + 1} )$ and rearranging terms yields

\begin{align*}
&\textstyle\frac{1}{1 + \epsilon}V_{q}( {\bar{u}}_{k + 1}^{c},{\bar{w}}_{k + 1} ) 
\leq \sum_{i = 1}^{N - 1}{l( z_{i|k}^{*},u_{i|k}^{*} )} + \frac{1}{\epsilon}\sum_{i = 0}^{N - 2}\| \delta_{i} \|_{Q}^{2}\\
&+ \textstyle\frac{1}{1 + \epsilon}[V_{f}( x_{N - 1|k + 1}^{c} ) + \delta_{f}]\\
&\textstyle= V_{q}( {\bar{u}}_{k}^{*},0 ) - l( x_{k},u_{k} ) - V_{f}( z_{N|k}^{*} ) + \frac{1}{\epsilon}\sum_{i = 0}^{N - 2}\| \delta_{i} \|_{Q}^{2}\\
& \textstyle+ \frac{1}{1 + \epsilon}[V_{f}( x_{N - 1|k + 1}^{c} ) + \delta_{f}].
\end{align*}
Finally, observing that $x_{N - 1|k + 1}^{c} - z_{N|k}^{*} = \delta_{N - 1}$ and expanding its terminal cost,
\begin{align*}
V_{f}( x_{N - 1|k + 1}^{c} ) = V_{f}( z_{N|k}^{*} + \delta_{N - 1} )= V_{f}( z_{N|k}^{*} ) + g_{2}\left( \delta_{N - 1} \right),
\end{align*}
we obtain the inequality stated in the proposition. 
\end{proof}

\subsection{Proof of Proposition \ref{propositopn:conditional_expectation_term_bound}}\label{app:conditional_expectation_term_bound}
\begin{proof}
We analyze each term on the left-hand side of inequality \eqref{eq_disturbed_terms} individually. The analysis relies on bounding sums of the form $\sum L_A^j$ arising from the expansion of system dynamics. Under Assumption \ref{assumption_dynamic_bound}, the following finite constants are defined to simplify subsequent expressions
\begin{align*}
C_{A1} &= \sqrt{\textstyle\sum_{j=0}^{N-2} L_A^j} ,
C_{A2} = \sqrt{\textstyle\sum_{i=0}^{N-2} L_A^{2i}} , \\
C_{A3} &= \sqrt{\textstyle\sum_{i=0}^{N-2} (\sum_{j=0}^{i-1} L_A^{j})^2},
C_{A4} = \sqrt{\textstyle\sum_{i=1}^N L_A^{2i}}, \\
C_{A5} &= \sqrt{\textstyle\sum_{i=1}^N \sum_{j=0}^{i-1} L_A^{2j}}.
\end{align*}
These constants $C_{A1}-C_{A5}$ are used throughout the proof to aggregate the effects of the disturbance propagation over the prediction horizon. Then, we have

1) the terminal state deviation $\delta_f$ defined in \eqref{eq_delta_f},
\begin{align*}
&\mathbb{E}_{\mathbb{P}_{k+1}^\ast}  [\delta_f| \mathcal{F}_k] = \mathbb{E}_{\mathbb{P}_{k+1}^\ast} [\|D w_{N-1|k+1}\|_P^2 | \mathcal{F}_k] \\
    &+ 2 (z_{N|k}^{\ast}+A^{N-1} D w_k)^T P D \mathbb{E}_{\mathbb{P}_{k+1}^\ast} [w_{N-1|k+1} | \mathcal{F}_k] \\
    &+ 2(\textstyle\sum_{j=0}^{N-2}A^{N-2-j} D \mathbb{E}_{\mathbb{P}_{k+1}^\ast} [w_{j|k+1} | \mathcal{F}_k])^T P D \\&\quad\mathbb{E}_{\mathbb{P}_{k+1}^\ast} [w_{N-1|k+1} | \mathcal{F}_k] \\
    &\le \lambda_{\max}(P) L_D^2 (\|\hat\mu_{k+1}\|^2 + \text{tr}(\hat\Sigma_{k+1})) \\
    &\quad + 2 \lambda_{\max}(P) L_D \|z_{N|k}^\ast\| \|\hat\mu_{k+1}\| + \|w_k\| L_A^{N-1} L_D \|\hat\mu_{k+1}\| \\
    &\quad + 2 \lambda_{\max}(P) L_D^2 \textstyle\sum_{j=0}^{N-2} L_A^j (\|\hat\mu_{k+1}\|^2 + \text{tr}(\hat\Sigma_{k+1})) \\
    &\le \lambda_{\max}(P) L_D^2 (1 + 2C_{A1}) (\|\hat\mu_{k+1}\|^2 + \text{tr}(\hat\Sigma_{k+1})) \\
    &\quad + 2 \lambda_{\max}(P) L_D \|z_{N|k}^\ast\| \|\hat\mu_{k+1}\| + L_A^{N-1} L_D \|w_k\| \|\hat\mu_{k+1}\|.
    \end{align*}

2) the state deviation sum of $\delta_i$ defined in \eqref{eq_delta_i}, 
\begin{align*}
    &\mathbb{E}_{\mathbb{P}_{k+1}^\ast}  [\textstyle \sum_{i=0}^{N-2} \|\delta_i\|_Q^2 | \mathcal{F}_k ] \\
    &=\textstyle \sum_{i=0}^{N-2} \mathbb{E}_{\mathbb{P}_{k+1}^\ast} [ \| A^i D w_k + \sum_{j=0}^{i-1} A^{i-1-j} D w_{j|k+1} \|_Q^2 | \mathcal{F}_k ] \\
    &\le \lambda_{\max}(Q) L_D^2 \big( 2 C_{A2}^2 \|w_k\|^2 + C_{A3}^2 (\|\hat\mu_{k+1}\|^2 + \text{tr}(\hat\Sigma_{k+1})) \big).
    \end{align*}

3) the term $g_2$ defined in \eqref{eq_g2},
    \begin{align*}
    &\mathbb{E}_{\mathbb{P}_{k+1}^\ast}  [ g_2(\delta_{N-1}) | \mathcal{F}_k ] \\
    &\textstyle= \mathbb{E}_{\mathbb{P}_{k+1}^\ast} [ \| A^{N-1} D w_k + \sum_{j=0}^{N-2} A^{N-2-j} D w_{j|k+1} \|_P^2 | \mathcal{F}_k ] \\
    &\textstyle\quad + 2 z_{N|k}^{\ast T} P \mathbb{E}_{\mathbb{P}_{k+1}^\ast} [ A^{N-1} D w_k + \sum_{j=0}^{N-2} A^j D w_{j|k+1} | \mathcal{F}_k ] \\
    &\le \lambda_{\max}(P) L_D^2 \big( L_A^{N-1} \|w_k\| + C_{A1}^2 \|\hat\mu_{k+1}\|^2 + C_{A2}^2 \text{tr}(\hat\Sigma_{k+1}) \big) \\
    &\quad + 2 \|z_{N|k}^\ast\| \lambda_{\max}(P) L_D ( L_A^{N-1} \|w_k\| + C_{A1}^2 \|\hat\mu_{k+1}\| ).
    \end{align*}

4) the term $g_1$ defined in \eqref{eq_g1},
\begin{align*}
    \mathbb{E}_{\mathbb{P}_k^\ast} & [g_1(\bar{w}_k) | \mathcal{F}_{k-1}] = \mathbb{E}_{\mathbb{P}_k^\ast} [\|\bar D\bar w_k\|^2_{\bar Q} | \mathcal{F}_{k-1}] \\
    &\quad + 2 (\bar A x_k + \bar B\bar u_k^\ast)^T\bar Q \bar D \mathbb{E}_{\mathbb{P}_k^\ast} [\bar w_k | \mathcal{F}_{k-1}] \\
    &\le \lambda_{\max}(Q) L_D^2 \textstyle\sum_{i=1}^{N-2} \sum_{j=0}^{i-1} L_A^{2(i-j)} \mathbb{E}_{\mathbb{P}_k^\ast} [\|w_{j|k}\|^2 | \mathcal{F}_{k-1}] \\
    &\quad + \lambda_{\max}(P) L_D^2 L_A^{2(N-1)} \mathbb{E}_{\mathbb{P}_k^\ast} [\|w_{N-1|k}\|^2 | \mathcal{F}_{k-1}] \\
    &\quad + 2 (\bar A x_k + \bar B\bar u_k^\ast)^T\bar Q\bar D \mathbb{E}_{\mathbb{P}_k^\ast} [\bar w_k | \mathcal{F}_{k-1}] \\
    &\le \lambda_{\max}(P) L_D^2 C_{A5}^2 (\|\mu_k\|^2 + \text{tr}(\hat\Sigma_k)) \\
    &\quad + 2 \lambda_{\max}(P) L_D C_{A4} \left( C_{A4} \|x_k\| + 0.5u_u L_B C_{A5} \right) \|\hat\mu_k\|.
\end{align*}

5) bounding of constraint-related terms,

Let $C_{w1} = \lambda_{\max}(F_0^T F_0) L_D^2 C_{A5}^2$, then
\begin{align*}
    \mathbb{E}_{\mathbb{P}_{k+1}^\ast} & [\|F \bar{D} \bar{w}_{k+1}\|^2 | \mathcal{F}_k] 
    \le \lambda_{\max}(F_0^T F_0) L_D^2 \\&\times\textstyle\sum_{i=0}^{N-1} \sum_{j=0}^i L_A^{2(i-j)} \mathbb{E}_{\mathbb{P}_{k+1}^\ast} [\|w_{j|k+1}\|^2 | \mathcal{F}_k] \\
    &\le C_{w1} (\|\hat\mu_{k+1}\|^2 + \text{tr}(\hat\Sigma_{k+1})).
\end{align*}
Similarly,
 \begin{equation*}
 \mathbb{E}_{\mathbb{P}_k^\ast} [\|F \bar{D} \bar{w}_k\|^2 | \mathcal{F}_{k-1}] \le C_{w1} (\|\hat\mu_k\|^2 + \text{tr}(\hat\Sigma_k)).
\end{equation*}

6) cross-term bounding using Young's inequality,
    \begin{align*}
    & 2\lambda_{\max}(P)L_D(C_{A1}^2 + 1)\|z_{N|k}^\ast\|\|\hat\mu_{k+1}\| \le (\textstyle\frac{1}{12l_c\lambda_{\min}(Q)}\|z_{N|k}^\ast\|^2 \\
    & \quad + \textstyle\frac{12l_c\lambda_{\max}^2(P)}{\lambda_{\min}(Q)}L_D^2(C_{A1}^2 + 1)^2\|\hat\mu_{k+1}\|^2), \\
    & 2\lambda_{\max}(P)L_DL_A^{N-1}\|z_{N|k}^\ast\|\|w_k\| \le (\textstyle\frac{1}{12l_c\lambda_{\min}(Q)}\|z_{N|k}^\ast\|^2 \\
    & \quad + \textstyle\frac{12l_c\lambda_{\max}^2(P)}{\lambda_{\min}(Q)}L_D^2L_A^{2(N-1)}\|w_k\|^2), \\
    & \lambda_{\max}(P)L_D^2L_A^{N-1}(1 + C_{A1}^2)\|w_k\|\|\hat\mu_{k+1}\| \le (\lambda_{\max}(P)L_D^2L_A^{N-1} \\
    & \quad \times (1 + C_{A1}^2)(\|\hat\mu_{k+1}\|^2 + \|w_k\|^2)), \\
    & 2\lambda_{\max}(P)L_DC_{A4}^2\|x_k\|\|\hat\mu_k\| \le (\textstyle\frac{1}{12}\lambda_{\min}(Q)\|x_k\|^2 \\
    & \quad + 12\textstyle\frac{\lambda_{\max}^2(P)}{\lambda_{\min}(Q)}L_D^2C_{A4}^4\|\hat\mu_k\|^2).
    \end{align*}

7) incorporating the terminal constraint,

Based on the terminal constraints \eqref{eq_terminal_constraints}, we incorporate the term $\|z_{N|k}^\ast\|^2$ into the $\|x_k\|^2$ term, 
\begin{equation*}
\textstyle\frac{1}{6l_c}\lambda_{\min}(Q)\|z_{N|k}^\ast\|^2 \le \frac{1}{6}\lambda_{\min}(Q)\|x_k\|^2.
\end{equation*}
From this, we obtain the coefficient for the state term,
$
k_0 = \textstyle\frac{1}{4}\lambda_{\min}(Q).
$
Finally, combining all the bounds derived in steps 1-7, the overall inequality is established as \eqref{eq_disturbed_terms},
where the constants are defined as follows,
\begin{align*}
    k_0 &= \textstyle\frac{1}{4}\lambda_{\min}(Q), \\
    k_1 &= 12l_c\textstyle\frac{\lambda_{\max}^2(P)}{\lambda_{\min}(Q)}L_D^2 L_A^{2(N-1)}  + \lambda_{\max}(P)L_D^2 L_A^{2(N-1)} \\
    &\quad +\lambda_{\max}(P)L_D^2 L_A^{N-1}(1+C_{A1}^2)+ 2\lambda_{\max}(Q)L_D^2 C_{A2}^2 , \\
    k_2 &= 12l_c\textstyle\frac{\lambda_{\max}^2(P)}{\lambda_{\min}(Q)}L_D^2 (C_{A1}^2 + 1)^2 + 2\lambda_{\max}(Q)L_D^2 C_{A3}^2 \\
    &\quad + \lambda_{\max}(P)L_D^2 (1 + 2C_{A1} + C_{A1}^4)  \\
    &\quad + \lambda_{\max}(P)L_D^2 L_A^{N-1}(1+C_{A1}^2) + \textstyle\frac{1}{4\epsilon_{c2}}C_{w1}, \\
    k_{31} &= \lambda_{\max}(P)\lambda_{\min}(Q)L_D C_{A4} u_u L_B C_{A5}, \\
    k_{32} &= \lambda_{\max}(P)L_D^2 C_{A5}^2 + \textstyle\frac{1}{4\epsilon_{c1}}C_{w1} + 12\textstyle\frac{\lambda_{\max}^2(P)}{\lambda_{\min}(Q)}L_D^2 C_{A4}^4, \\
    k_4 &= \lambda_{\max}(Q)L_D^2 C_{A3}^2 + \lambda_{\max}(P)L_D^2 C_{A2}^2 + \textstyle\frac{1}{4\epsilon_{c2}}C_{w1}  \\
    &\quad +\lambda_{\max}(P)L_D^2 (1+2C_{A1}), \\
    k_5 &= \lambda_{\max}(P)L_D^2 C_{A5}^2 + \textstyle\frac{1}{4\epsilon_{c1}}C_{w1}.
\end{align*}

\end{proof}

\section*{References}
\bibliographystyle{IEEEtran}
\bibliography{IEEEabrv,mybibfile.bib}

\end{document}